\numberwithin{equation}{section}
\theoremstyle{plain}
\newtheorem{lemma}{Lemma}[section]
\newtheorem{example}{Example}[section]
\newtheorem{theorem}{Theorem}[section]
\newtheorem{definition}{Definition}[section]
\newtheorem{proposition}{Proposition}[section]
\newtheorem{corollary}{Corollary}[section]
\newtheorem{assumption}{Assumption}[section]
\theoremstyle{remark}
\newtheorem{remark}{Remark}
\newcommand{\ab}[1]{\langle #1\rangle}
\newcommand{\cadlag}{c\`{a}dl\`{a}g }
\begin{document}

\begin{frontmatter}
\title{Optimal Consumption under Habit Formation In Markets with Transaction Costs and Random Endowments}
\runtitle{HABIT FORMATION AND TRANSACTION COSTS}

\begin{aug}
\author{\snm{Xiang Yu}  \ead[label=e1]{xiang.yu@polyu.edu.hk}},

\runauthor{X.YU}

\affiliation{The Hong Kong Polytechnic University}

\address{Department of Applied Mathematics, Yip Kit Chuen Building\\
The Hong Kong Polytechnic University\\
Hung Hom, Kowloon, Hong Kong\\
E-mail:\ \printead*{e1}}

\end{aug}

\begin{abstract}
This paper studies the optimal consumption via the habit formation preference in markets with transaction costs and unbounded random endowments. To model the proportional transaction costs, we adopt the Kabanov's multi-asset framework with a cash account. At the terminal time $T$, the investor can receive unbounded random endowments for which we propose a new definition of acceptable portfolios based on the strictly consistent price system (SCPS). We prove a type of super-hedging theorem using the acceptable portfolios which enables us to obtain the consumption budget constraint condition under market frictions. Following similar ideas in \cite{Yu1} with the path dependence reduction and the embedding approach, we obtain the existence and uniqueness of the optimal consumption using some auxiliary processes and the duality analysis. As an application of the duality theory, the market isomorphism with special discounting factors is also discussed in the sense that the original optimal consumption with habit formation is equivalent to the standard optimal consumption problem without the habits impact, however, in a modified isomorphic market model.
\end{abstract}

\begin{keyword}[class=MSC]
\kwd[Primary ]{91G10, 91B42}
\kwd[; secondary ]{60G44, 49K30}
\end{keyword}

\begin{keyword}
\kwd{Proportional Transaction Costs, Unbounded Random Endowments, Acceptable Portfolios, Consumption Budget Constraint, Consumption Habit Formation, Convex Duality, Market Isomorphism}
\end{keyword}

\end{frontmatter}

\ \\
\section{Introduction}

The study of consumption habit formation in financial economics dates back to \cite{Hicks} and \cite{Heal}. It has been observed that the von Neumann-Morgenstern utilities can not reconcile the well-known magnitude of the equity premium (See \cite{Mehra} and \cite{constantinides1988habit}). Instead, the habit formation preference takes care of both the current consumption choice and its history pattern. Due to its time non-separable structure, a small change in the consumption can cause a large fluctuation in consumption net of the subsistence level, which may explain the sizable excess returns on risky assets in equilibrium models. Likewise this new preference can shed some light on the quantitative explanation of consumer's psychology in the context of the consumption behavior with investment opportunities. For instance, many empirical studies and psychological reports reveal that the consumer's satisfaction level and risk tolerance sometimes rely more on recent changes than the absolute levels. This preference has thereby been a surge in models for which the smooth consumption is more beneficial than the marked increase, such as the household consumption and other types of expenditures with commitment. In addition, as pointed out in \cite{Fuhrer}, habit formation allows the model to match the response of real spending to monetary policy shocks and this new preference can also more accurately replicate the gradual decline of inflation during a disinflation. The diverse applications of this path-dependent preference motivate our research in general incomplete market models.

Mathematically, the habit formation preference is defined by $\mathbb{E}[\int_0^TU(t,c_t-F(c)_t)dt]$, where $U:[0,T]\times(0,\infty)\rightarrow\mathbb{R}$. The accumulative process $(F(c)_t)_{t\in[0,T]}$, called the \textit{habit formation} or \textit{the standard of living} process, describes the consumption history impact. The conventional definition (see \cite{detemple91} and \cite{detemple92}) of $F(c)_t$ is given by the recursive equation
\begin{equation}
\begin{split}
dF(c)_t&=(\delta_tc_t-\alpha_tF(c)_t)dt,\\
F(c)_0&=z,\nonumber
\end{split}
\end{equation}
where the discounting factors $(\alpha_t)_{t\in[0,T]}$ and $(\delta_t)_{t\in[0,T]}$ are assumed to be non-negative optional processes and the given real number $z\geq 0$ is called the \textit{initial habit}. In the present paper, the consumption habits are assumed to be \textit{addictive} in the sense that $c_t\geq F(c)_t$ for all $t\in[0,T]$. Therefore, the consumption rate shall never fall below the standard of living level. This addictive path-dependent preference has been proposed as the new paradigm in the literature and extensively studied over the past few decades, see among \cite{Campbell}, \cite{detemple91}, \cite{detemple92}, \cite{Schroder01072002}, \cite{Eng09}, \cite{adRoman} and \cite{Yu1}.

In frictionless incomplete semimartingale markets, \cite{Yu1} recently solved this optimization problem using the convex duality approach. In \cite{Yu1}, the complexity caused by the path dependence can be reduced by working on the auxiliary primal processes as well as the auxiliary dual processes. However, the stochastic factors $(\alpha_t)_{t\in[0,T]}$ and $(\delta_t)_{t\in[0,T]}$ appeared as some shadow random endowments in the formulation of the auxiliary optimization problem. In addition, in virtue of the case when $\alpha$ and $\delta$ are unbounded, the auxiliary dual process can not be guaranteed to be integrable. By making the asymptotic growth assumption of the utility function at both $x\rightarrow 0$ and $x\rightarrow \infty$, \cite{Yu1} managed to modify the proofs of the duality theory in \cite{kram04} in a delicate way to deal with the shadow random endowment. By the idea of embedding, the existence and uniqueness of the optimal consumption under habit formation are consequent on the duality theory for the auxiliary time-separable problem.

In the presence of transaction costs, the existence of the optimal consumption with habit formation in general market models, however, is still an open problem. From a financial point of view, it is natural to investigate the consumption streams under the influence of habit formation constraints as well as the trading frictions. Intuitively speaking, both the preservation of the standard of living and transaction costs will potentially entail negative impacts on the trading frequency and suggest to allocate more wealth to retain the smooth consumption. However, we expect that these effects are implicit and complicated. To begin with, this paper aims to investigate this open problem by building the theoretical foundation of the existence and uniqueness of the optimal solution. In some concrete models, the study of the interplay of the habit formation and transaction costs as well as the sensitivity analysis of the consumption streams will be left as future projects.

In this paper, we shall follow Kabanov's multiple assets framework with a cash account in which the proportional transaction costs are modeled via a nonnegative matrix. The investor can choose the intermediate consumption from the cash account and will receive unbounded random payoffs at the terminal time $T$ from some contingent claims. It is worth noting that the Bipolar relationship plays an important role in the duality approach. In frictionless markets (see \cite{Yu1}), the Bipolar result relies on the consumption budget constraint which is taken as granted due to the optional decomposition theorem, see \cite{Kramop}. This inequality characterization of the consumption processes suggests us to define the correct auxiliary set and the auxiliary dual set in \cite{Yu1}. Unfortunately, the optional decomposition theorem is no longer valid in our framework since the semimartingale property and the stochastic integral theory are missing in general models with transaction costs. In addition, the conventional admissible portfolios are defined carefully based on the convex solvency cones and the strictly consistent price system (SCPS), see \cite{MAFI:MAFI004}, \cite{Kab} and \cite{Campi06} and the num\'{e}raire-based version by \cite{Pa}. In the existing literature, it is required that each admissible portfolio process is bounded from below by a constant. Under different conditions, some super-hedging theorems can be obtained using the corresponding admissible strategies. This definition of admissible portfolios becomes inappropriate when unbounded random endowments are taken into account. In frictionless markets, \cite{kram04} and \cite{Zit05} used the acceptable portfolio with a process lower bound and proved a type of super-hedging theorem for some workable contingent claims. The key ingredient in this definition is that the set of equivalent local martingale measures (ELMM) such that the given maximal element in the set of wealth processes is a uniformly integrable martingale is dense in the set of all ELMM with respect to the norm topology of $\mathbb{L}^1(\Omega,\mathcal{F},\mathbb{P})$. In our framework, one natural way to modify this definition is to consider the maximal element in the set of admissible portfolios for transaction costs as the lower bound. Nevertheless, this result fails in general as the maximal element is not a uniformly integrable martingale under any SCPS, see some counterexamples in \cite{jacka2007} in the discrete time setting. Since the definition of working portfolios is even more complicated in continuous time models, it is reasonable to believe that the maximal element from the admissible portfolios is not a wise choice. On the other hand, if we choose an arbitrary process as the lower bound, the primal set in the duality theory is not necessarily closed. Evidently, unbounded random final payoffs prohibit us to apply the well established result in the literature, i.e., the super-hedging theorem using admissible portfolios. To summarize the new challenges in the current work, the issues of an appropriate definition of working portfolios and the consumption budget constraint using new portfolios need to be addressed.

Our first contribution to the existing literature is to propose an innovative definition of acceptable portfolio processes. It is important to note that each SCPS can be equivalently written as a pair of $(\mathbb{Q}, \tilde{S})$ where $\tilde{S}$ is a local martingale under $\mathbb{Q}$. Despite the fact that the stock price process $S$ may not be a semimartingale, each $\tilde{S}$ is. Hence, we can define the maximal element from the stochastic integrals using $\tilde{S}$ from SCPS, and apply it as the process lower bound in a proper way for the self-financing portfolios with transaction costs. This definition is expected to be more complicated, however, we provide some handy criterions to check whether the portfolio process is acceptable or not. It is not surprising that the most challenging work in this paper is the proof of the super-hedging theorem. Comparing with \cite{Kramop}, it has been pointed out that we are lack of semimartingale properties and the optional decomposition theorem. On the other hand, our result differs from \cite{Campi06} since a more complicated definition of working portfolios is required in this paper. Fortunately, under our careful choice of acceptable portfolios, the super-hedging result can still be verified. Without consumption behavior, the super-hedging theorem is interesting for its own sake. Using this result, we can also work on the multi-variate utility maximization problem defined on multiple stocks with unbounded random endowments, see \cite{campi1}. In Assumption $3.2$ and Theorem $3.2$ of \cite{campi1}, we can relax the condition that the random endowment $\mathcal{E}\in\mathbb{L}^{\infty}$.

Another main contribution of the current work is to build the rigorous duality theory in the model with transaction costs, unbounded random endowments and consumption habit formation. Our result extends the well known duality theories in \cite{Campi06} to the scenario with habit formation constraints as well as in \cite{kram04} to the market frictions caused by transaction costs. To begin with, the consumption budget constraint provides an inequality characterization of the set of all financeable consumption strategies. By introducing the auxiliary primal process $(\tilde{c}_t)_{t\in[0,T]}$ where $\tilde{c}_t=c_t-F(c)_t$ and the auxiliary dual process $(\Gamma_t)_{t\in[0,T]}$ defined via the SCPS, we can follow and modify the ideas and proofs in \cite{Yu1} to apply the convex duality approach for an abstract optimization problem. It is worth pointing out that the mathematical approach for habit formation may allow us to consider some related economic problems such as the consumption with durable goods under transaction costs. Our general result is also the first step to examine the equilibrium theorem for habit formation preference with market frictions.

As an interesting observation from the duality theory and consumption budget constraint, this paper also aims to discuss the market isomorphism between the markets with consumption habit formation and the markets without habit memories similar to \cite{Schroder01072002}, however, trading in both markets will incur transaction costs. For some special discounting factors $(\alpha_t)_{t\in[0,T]}$ and $(\delta_t)_{t\in[0,T]}$, our original optimization problem is equivalent to a standard time-separable utility maximization problem on consumption under the change of num\'{e}raire in the isomorphic market with modified random endowments. Two special examples are presented that the external num\'{e}raire process will even vanish in the isomorphic optimal consumption problem. This market isomorphism provides a shortcut to examine the optimal consumption strategy under habit influences in this paper using the results on standard optimal consumption under transaction costs in the existing literature. In particular, if the shadow price process also exists for the given asset price process and transaction costs, the market isomorphism into the shadow price model will enable us to obtain closed-form feedback formula for the optimal consumption choice.

The rest of the paper is organized as follows: Section $\ref{section1}$ introduces the market model with transaction costs and unbounded random endowments, however, without intermediate consumption behavior. We propose the new definition of acceptable portfolio appropriate for market frictions using SCPS. The super-hedging theorem for a family of workable contingent claims is derived. Section $\ref{section2}$ is devoted to the case with intermediate consumption and addictive habit formation. To reduce the path dependence, we define the auxiliary primal space $\bar{\mathcal{A}}(x,q,z)$, the enlarged space $\widetilde{\mathcal{A}}(x,q,z)$ and the auxiliary dual space $\widetilde{\mathcal{M}}$. The original problem is embedded into an abstract time separable optimization problem with some shadow random endowments. In Section $\ref{section3}$, we formulate the auxiliary dual problem in which the random endowments can be hidden. The main theorem is stated in the end together with some corollaries. For special choices of discounting factors, the market isomorphism result is investigated in Section $\ref{section4}$. Section $\ref{section5}$ presents proofs of all main results in previous sections.\\

\ \\

\section{Market Model Without Intermediate Consumption}\label{section1}
\subsection{Market Model and Mathematical Set Up}
We consider a financial market with one cash account and $d$ risky assets. The cash account is assumed to satisfy $S^0_{t}\equiv1, \forall t\in[0,T]$, which serves as the num\'{e}raire. Risky assets are modeled by a $d$-dimensional strictly positive process $(S_t)_{t\in[0,T]}=(S_t^1, \ldots, S_t^d)_{t\in[0,T]}$ on a given filtered probability space $(\Omega, \mathcal{F}, \mathbb{F}=(\mathcal{F}_{t})_{t\in [0,T]}, \mathbb{P})$, where the filtration $\mathbb{F}$ satisfies the usual conditions. The maturity time is given by $T$.

$(S_t)_{t\in[0,T]}$ may not be a semimartingale in general. To simplify our notation, we take $\mathcal{F}=\mathcal{F}_{T}$. Trading the risky assets incurs transaction costs. We define a nonnegative $(1+d)\times (1+d)$-matrix $\Lambda=(\lambda^{ij})_{0\leq i,j\leq d}$ with each $\lambda^{ij}\geq 0$ to model the proportional factor of costs that one has to pay if exchanging the $i$-th into the $j$-th asset. Clearly, $\lambda^{i0}=0$ for any $0\leq i\leq d$ as $S^0$ is the cash account. It is natural (see \cite{MAFI:MAFI180} and \cite{Campi06}) to impose that for $0\leq i,j,k\leq d$ the following holds
\begin{equation}
(1+\lambda^{ij})\leq (1+\lambda^{ik})(1+\lambda^{kj}).\nonumber
\end{equation}
The transaction cost coefficients $\Lambda$ may be constant or may depend on $t$ and $\omega$ in an adapted way. In this paper, we make the same assumption as in \cite{Campi06} that the bid-ask process
\begin{equation}
\pi_t^{ij}(\omega)\triangleq (1+\lambda_t^{ij}(\omega))\frac{S_t^{j}(\omega)}{S_t^{i}(\omega)},\ \ 0\leq i,j\leq d,\nonumber
\end{equation}
is \cadlag for all $0\leq i,j\leq d$. The $(1+d)\times (1+d)$ matrix $\Pi=(\pi^{i j})_{0\leq i,j\leq d}$ is called a bid-ask matrix.

We are working in the Kabanov's framework, which is centered on the idea of cone-valued processes. The solvency cone $\hat{K}$ is defined as a convex polyhedral cone in $\mathbb{R}^{1+d}$ spanned by the unit vectors $e^i$, $0\leq i\leq d$ and vectors $(1+\lambda^{ij})\frac{S^{j}}{S^i}e^i-e^j$, $0\leq i,j\leq d$. The convex cone $-\hat{K}$ should be interpreted as those portfolios available at price zero. A cone $\hat{K}$ is called \textit{proper} if $\hat{K}\cap(-\hat{K})=\{0\}$. In this paper, we shall assume that the cones $\hat{K}_t$ and $\hat{K}_{t-}$ are proper and contain $\mathbb{R}_+^{1+d}$ (efficient friction). In addition, we make the assumption that $\mathcal{F}_T=\mathcal{F}_{T-}$ and $\Pi_T=\Pi_{T-}$ a.s.. The cones $(\hat{K}_t)_{t\in[0,T]}$ induce a natural order among $\mathbb{R}^{1+d}$-valued random variables. In particular, for any stopping time $\tau$ and let $X$, $Y$ be two $\mathcal{F}_{\tau}$-measurable random variables. We denote $X\succeq_{\tau} Y$ if $X-Y\in\mathbb{L}^0(\hat{K}_{\tau}, \mathcal{F}_{\tau})$. Here we define $\mathbb{L}^0(\hat{K}_{\tau}, \mathcal{F}_{\tau})$, short as $\mathbb{L}^0(\hat{K}_{\tau})$, the cone of all $\hat{K}_{\tau}$-valued $\mathcal{F}_{\tau}$-measurable random variables. Following the previous notations, it is clear that $\mathbb{L}^0(\mathbb{R}_+^{1+d})\subset\mathbb{L}^0(\hat{K}_T)$.

Given a cone $\hat{K}$ in $\mathbb{R}^{1+d}$, its positive polar cone is defined by
\begin{equation}
\hat{K}^{\ast}\triangleq\{w\in\mathbb{R}^{1+d}:  \langle v,w\rangle \geq 0,\ \forall v\in \hat{K}\}.\nonumber
\end{equation}

\begin{definition}
An adapted $\mathbb{R}_{+}^{1+d}\setminus\{0\}$-valued, c\`{a}dl\`{a}g process $Z=(Z_{t}^0, Z_t^1,\ldots, Z_t^d)_{t\in[0,T]}$ with $Z_{0}^{0}=1$ is called a num\'{e}raire-based \textit{consistent price system} for the transaction costs $\Lambda$ if $Z^0$ is a martingale, $Z^i$ is a local martingale for $i=1,\ldots, d$ and $Z_{t}\in \hat{K}_{t}^{\ast}$ a.s. for every $t\in[0,T]$. Moreover, $Z$ will be called a num\'{e}raire-based \textit{strictly consistent price system} if for every $[0,T]\cup\{\infty\}$-valued stopping time $\tau$, $Z_{\tau}\in\text{int}(\hat{K}_{\tau}^{\ast})$ a.s. on $\{\tau<\infty\}$ and for every predictable $[0,T]\cup\{\infty\}$-valued stopping time $\sigma$, $Z_{\sigma-}\in\text{int}(\hat{K}_{\sigma-}^{\ast})$ a.s. on $\{\sigma<\infty\}$. The set of all num\'{e}raire-based consistent price systems (resp. strictly consistent price systems) will be denoted by $\mathcal{Z}$ (resp. $\mathcal{Z}^{s}$). For simplicity, we write (S)CPS to mean num\'{e}raire-based (strictly) consistent price system.
\end{definition}

In this paper, we will make the standing assumption:
\begin{assumption}\label{assumZ}
Existence of a SCPS: $\mathcal{Z}^{s}\neq \emptyset$.
\end{assumption}

\begin{remark}\label{rmkCPS}
Equivalently, each SCPS can be represented by a pair $(\mathbb{Q},\tilde{S})$ where $\mathbb{Q}$ is equivalent to $\mathbb{P}$ and $(\tilde{S}_t)_{t\in[0,T]}=(\tilde{S}_t^1, \ldots, \tilde{S}_t^d)_{t\in[0,T]}$ is a $d$-dimensional local martingale under $\mathbb{Q}$, see \cite{MAFI:MAFI180}, \cite{MAFI:MAFI004} and \cite{Kab}. $(\mathbb{Q},\tilde{S})$ is related to $Z\in\mathcal{Z}^s$ by setting $\frac{d\mathbb{Q}}{d\mathbb{P}}=Z^0_T$ and $\tilde{S}^i_t=Z_t^i/Z_t^0$ for $i=1,\ldots, d$ and $t\in[0,T]$.
\end{remark}

The following definitions are based on the above equivalent representation.
\begin{definition}
Denote $\mathcal{S}^s:=\{(Z^1/Z^0,\ldots, Z^d/Z^0): (Z^0,Z^1,\ldots, Z^d)\in\mathcal{Z}^s\}$. For each fixed $\tilde{S}\in\mathcal{S}^s$, we define
\begin{equation}
\mathcal{M}^s(\tilde{S})\triangleq \Big\{\mathbb{Q}: \frac{d\mathbb{Q}}{d\mathbb{P}}=Z^0_T\ \text{where}\ \Big(\frac{Z^1}{Z^0}, \ldots, \frac{Z^d}{Z^0}\Big)=\tilde{S},\ Z\in\mathcal{Z}^s\Big\}.\nonumber
\end{equation}
Similarly, for each fixed $\tilde{S}\in\mathcal{S}^s$, we define
\begin{equation}
\mathcal{Z}^s(\tilde{S})\triangleq\Big\{Z=(Z^0, Z^1,\ldots, Z^d):\ Z\in\mathcal{Z}^s\ \text{where}\ \Big(\frac{Z^1}{Z^0}, \ldots, \frac{Z^d}{Z^0}\Big)=\tilde{S}\Big\}.\nonumber
\end{equation}
Clearly, indexed by $\tilde{S}\in\mathcal{S}^s$, $\mathcal{Z}^s(\tilde{S})$ can be regarded as a partition of the set $\mathcal{Z}^s$ and $\mathcal{M}^s(\tilde{S})=\{\mathbb{Q}:\frac{d\mathbb{Q}}{d\mathbb{P}}=Z^0_T,\ \text{where}\ Z\in\mathcal{Z}^s(\tilde{S})\}$.  We also denote $\mathcal{M}^s\triangleq \bigcup_{\tilde{S}\in\mathcal{S}^s}\mathcal{M}^s(\tilde{S})$ and it follows that $\mathcal{M}^s=\{\mathbb{Q}: \frac{d\mathbb{Q}}{d\mathbb{P}}=Z_T^0,\ \text{where}\ Z\in\mathcal{Z}^s\}$.
\end{definition}

From now on, the market is enlarged by allowing trading $N$ European contingent claims at time $t=0$ with the final cash payoff $\mathcal{E}_T=(\mathcal{E}_T^i)_{1\leq i\leq N}$. We denote $q=(q^i)_{1\leq i\leq N}$ as static holdings in contingent claims $\mathcal{E}_T$. By allowing $q$ to take negative values, without loss of generality, we can assume that $\mathcal{E}_T^i\geq 0$ for $1\leq i\leq N$. Each $\mathcal{E}^i_T$ may be unbounded, however, it is assumed throughout the paper that $\sum_{i=1}^{N}\mathcal{E}_T^i$ is integrable uniformly with respect to all SCPS $\mathcal{Z}^s$ in the following sense:
\begin{assumption}\label{endowassum}
\begin{equation}\label{unic}
\lim_{m\rightarrow\infty}\sup_{Z\in\mathcal{Z}^s}\mathbb{E}\bigg[\Big\langle \Big(\Big(\sum_{i=1}^{N}\mathcal{E}_T^i\Big)\mathbf{1}_{\{\sum_{i=1}^N\mathcal{E}_T^i>m\}},\bar{\mathbf{0}}\Big), Z_T\Big\rangle\bigg] =0,
\end{equation}
where $\bar{\mathbf{0}}$ is the $d$-dimensional zero vector.
\end{assumption}
\begin{remark}
The condition $(\ref{unic})$ implies the finite super-hedging price of the random endowments under SCPS, i.e., $\sup_{Z\in\mathcal{Z}^s}\mathbb{E}[\langle (\sum_{i=1}^N\mathcal{E}_T^i, \bar{\mathbf{0}}), Z_T\rangle] =\sup_{\mathbb{Q}\in\mathcal{M}^s}\mathbb{E}^{\mathbb{Q}}[\sum_{i=1}^{N}\mathcal{E}_T^i]<\infty$. If we require that $\mathcal{E}_T\in\mathbb{L}^{\infty}$, Assumption $\ref{endowassum}$ holds trivially. In the present work, Assumption $\ref{endowassum}$ guarantees the super-hedging result holds for all $Z\in\mathcal{Z}^s$.
\end{remark}

To deal with unbounded random endowments, the set of \textit{admissible} portfolios with constant lower bounds is generally too small and is required for an extension to the set of \textit{acceptable} portfolios with some stochastic thresholds in the cash account. To fit into the framework with transaction costs, we need to modify the definition of acceptable portfolio (See \cite{DS97} and \cite{kram04} in the frictionless market) by taking into account of all SCPS. Since each $\tilde{S}\in\mathcal{S}^s$ is a $\mathbb{Q}$-local martingale, it follows that $\tilde{S}$ is a semimartingale under the physical probability measure $\mathbb{P}$. Given the initial wealth $a>0$, for each $\mathbb{P}$-semimartingale $\tilde{S}\in\mathcal{S}^s$, let $\mathcal{X}(\tilde{S}, a)$ be the set of nonnegative wealth processes in the $\tilde{S}$-market. That is,
\begin{equation}
\begin{split}
\mathcal{X}(\tilde{S}, a)=\{X\geq 0: &X_t=a+(H\cdot \tilde{S})_t,\ \text{where}\ H \text{ is predictable}\\
&\text{and $\tilde{S}$-integrable}, \ t\in[0,T]\}.\nonumber
\end{split}
\end{equation}
A wealth process in $\mathcal{X}(\tilde{S},a)$ is called \textit{maximal}, denoted by $X^{\max,\tilde{S}}$, if its terminal value $X_T^{\max,\tilde{S}}$ can not be dominated by that of any other processes in $\mathcal{X}(\tilde{S}, a)$.

\begin{lemma}\label{ran}
Under Assumption $\ref{endowassum}$, there exists a constant $a>0$ such that for each $\tilde{S}\in\mathcal{S}^s$, there exists a maximal element $\hat{X}^{\max, \tilde{S}}\in\mathcal{X}(\tilde{S},a)$ and $\sum_{i=1}^{N}\mathcal{E}_T^i\leq \hat{X}^{\max,\tilde{S}}_T$.
\end{lemma}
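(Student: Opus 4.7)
The plan proceeds in three stages: (i) extract a universal initial capital $a$ from Assumption \ref{endowassum}, (ii) super-replicate $\sum_{i=1}^N\mathcal{E}_T^i$ in the frictionless $\tilde{S}$-market for each fixed $\tilde{S}\in\mathcal{S}^s$, and (iii) upgrade the super-replicating wealth process to a maximal one via Zorn's lemma.

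For stage (i), the remark following Assumption \ref{endowassum} identifies
$$a:=\sup_{Z\in\mathcal{Z}^s}\mathbb{E}\Big[\Big\langle\Big(\sum_{i=1}^N\mathcal{E}_T^i,\bar{\mathbf{0}}\Big),Z_T\Big\rangle\Big]=\sup_{\mathbb{Q}\in\mathcal{M}^s}\mathbb{E}^{\mathbb{Q}}\Big[\sum_{i=1}^N\mathcal{E}_T^i\Big]<\infty,$$
a constant manifestly independent of the particular $\tilde{S}\in\mathcal{S}^s$, which serves as the sought universal initial capital.

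For stage (ii), fix $\tilde{S}\in\mathcal{S}^s$; since $\mathcal{M}^s(\tilde{S})\neq\emptyset$, $\tilde{S}$ is a $\mathbb{P}$-semimartingale admitting equivalent local martingale measures. The plan is first to establish the identity $\mathcal{M}^e(\tilde{S})=\mathcal{M}^s(\tilde{S})$, where $\mathcal{M}^e(\tilde{S})$ denotes all probability measures equivalent to $\mathbb{P}$ under which $\tilde{S}$ is a local martingale. Given $\mathbb{Q}\in\mathcal{M}^e(\tilde{S})$ with density process $Y$, build the candidate $Z^{\mathbb{Q}}_t:=(Y_t,Y_t\tilde{S}^1_t,\ldots,Y_t\tilde{S}^d_t)$: its zeroth coordinate $Y$ is a positive $\mathbb{P}$-martingale, each $Y\tilde{S}^i$ is a $\mathbb{P}$-local martingale by Girsanov, and the interior condition $(1,\tilde{S}_{\tau})\in\text{int}(\hat{K}^*_{\tau})$ at every (predictable) stopping time $\tau$ is inherited from any reference SCPS with ratios $\tilde{S}$, since positive scaling preserves the interior of a convex cone. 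Hence $Z^{\mathbb{Q}}\in\mathcal{Z}^s(\tilde{S})$ and $\mathbb{Q}\in\mathcal{M}^s(\tilde{S})$, giving the claimed equality. It follows that the super-replication cost of $\sum\mathcal{E}_T^i$ in the frictionless $\tilde{S}$-market equals $\sup_{\mathbb{Q}\in\mathcal{M}^s(\tilde{S})}\mathbb{E}^{\mathbb{Q}}[\sum\mathcal{E}_T^i]\leq a$, and the optional decomposition theorem (F\"ollmer-Kabanov, Kramkov) produces a predictable $\tilde{S}$-integrable $H$ with $X_t:=a+(H\cdot\tilde{S})_t\in\mathcal{X}(\tilde{S},a)$ and $X_T\geq\sum_{i=1}^N\mathcal{E}_T^i$.

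For stage (iii), apply Zorn's lemma to the partially ordered set $\{Y\in\mathcal{X}(\tilde{S},a):Y_T\geq X_T\}$ equipped with $Y\preceq Y'\Leftrightarrow Y_T\leq Y'_T$ a.s. Upper bounds of chains are produced by passing to a countable sequence with increasing terminal values, forming Koml\'os-type forward convex combinations of the corresponding integrands, and exploiting the standard $\mathbb{L}^0$-closedness of $\mathcal{X}(\tilde{S},a)$ (Delbaen-Schachermayer); the resulting maximal element $\hat{X}^{\max,\tilde{S}}$ then satisfies $\hat{X}^{\max,\tilde{S}}_T\geq X_T\geq\sum_{i=1}^N\mathcal{E}_T^i$. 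I expect the principal obstacle to lie in stage (ii), specifically in justifying the identity $\mathcal{M}^e(\tilde{S})=\mathcal{M}^s(\tilde{S})$ with due care for the interior condition at all predictable stopping times with left limits, and in invoking the optional decomposition theorem for a potentially ill-behaved $\mathbb{P}$-semimartingale $\tilde{S}$; both steps depend essentially on the SCPS structure guaranteed by $\tilde{S}\in\mathcal{S}^s$.
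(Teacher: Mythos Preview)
Your proposal is correct and follows essentially the same route as the paper: fix the universal initial capital $a$ from Assumption~\ref{endowassum}, invoke the super-replication duality in the frictionless $\tilde{S}$-market to obtain $X\in\mathcal{X}(\tilde{S},a)$ with $X_T\geq\sum_i\mathcal{E}_T^i$, and then upgrade $X$ to a maximal element. The paper simply cites Kramkov's duality result and for the last step writes ``without loss of generality we can assume $\hat{X}$ is maximal,'' whereas you are more careful on two points: you explicitly verify $\mathcal{M}^e(\tilde{S})=\mathcal{M}^s(\tilde{S})$ (which the paper leaves implicit but is needed, since the super-replication theorem is stated for \emph{all} equivalent local martingale measures while the bound $a$ is a priori only over $\mathcal{M}^s(\tilde{S})$), and you give a Zorn/Koml\'os argument for the existence of a dominating maximal element rather than asserting it by fiat.
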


\begin{definition}\label{def111}
Given Assumption $\ref{assumZ}$, an $\mathbb{R}^{1+d}$-valued process $V=(V_{t}^{0},V_{t}^{1}, \ldots, V_t^d)_{t\in[0,T]}$ is called a \textbf{self-financing} portfolio process (see \cite{Campi06}) with transaction costs $\Lambda$ if it satisfies the following properties:
\begin{itemize}
\item[(1)] $V$ is predictable and $\mathbb{P}$-a.e. path of $V$ has finite variation
\item[(2)] For every pair of stopping times $0\leq \sigma\leq \tau\leq T$, we have
\begin{equation}\label{selfcond}
V_{\tau}-V_{\sigma}\in -\hat{\mathcal{K}}_{\sigma,\tau}\ \ \ \mathbb{P}-\text{a.s.}
\end{equation}
where
\begin{equation}
\hat{\mathcal{K}}_{\sigma, \tau}(\omega)\triangleq \overline{\text{conv}}\bigg(\bigcup_{\sigma(\omega)\leq t<\tau(\omega)}\hat{K}_{t} (\omega)\bigg)\nonumber
\end{equation}
with the bar closure taken in $\mathbb{R}^{1+d}$ for each $\omega\in\Omega$.
\end{itemize}
\end{definition}

\begin{remark}
As pointed out in R\'{a}sonyi's example in \cite{Ras}, since the bid-ask processes are assumed to be \cadlag, we have to take care of both left and right jumps of portfolio processes $(V_t)_{t\in[0,T]}$. Therefore, predictable processes are natural choices to model $(V_t)_{t\in[0,T]}$. If the stopping time $\tau$ is predictable, three values $V_{\tau-}$, $V_{\tau}$ and $V_{\tau+}$ can be different. $\triangle V_{\tau}=V_{\tau}-V_{\tau-}$ and $\triangle_+ V_{\tau}=V_{\tau+}- V_{\tau}$ can model a jump immediately before time $\tau$ and a jump at time $\tau$ respectively. The condition that $\mathbb{P}$-a.e. path of $V$ has finite variation is made for some mathematical convenience. But the financial intuition behind it is that a portfolio process having trajectories with infinite variation would lead to infinitely large transaction costs paid by the investor. Consequently, we only consider self-financing portfolios given in Definition $\ref{def111}$.
\end{remark}

\begin{definition}
A self-financing portfolio $V$ is called \textbf{admissible} (num\'{e}raire-based) if it additionally satisfies (see Definition $13$ of \cite{Pa})
\begin{itemize}
\item[(3)] There is a threshold $a>0$ such that $V_{T}+(a,\bar{\mathbf{0}})\in\mathbb{L}^0(\hat{K}_T)$ and $\langle V_{\tau}+(a, \bar{\mathbf{0}}), Z_{\tau}\rangle\geq 0$ $\mathbb{P}$-a.s. for all $[0,T]$-valued stopping time $\tau$ and for every SCPS $Z\in\mathcal{Z}^{s}$, where $\bar{\mathbf{0}}$ is the $d$-dimensional zero vector.
\end{itemize}
Instead, a self-financing portfolio $V$ is called \textbf{acceptable} (num\'{e}raire-based) if it satisfies the following property
\begin{itemize}
\item[(3')] There exists a constant $a>0$ such that for each $\tilde{S}\in\mathcal{S}^s$, there exists a maximal element $X^{\max, \tilde{S}}\in\mathcal{X}(\tilde{S}, a)$ with $V_T+(X^{\max,\tilde{S}}_T,\bar{\mathbf{0}})\in\mathbb{L}^0(\hat{K}_T)$ and $\langle V_{\tau}+(X^{\max,\tilde{S}}_{\tau},\bar{\mathbf{0}}), Z_{\tau}\rangle\geq 0$ $\mathbb{P}$-a.s. for all $[0,T]$-valued stopping time $\tau$ and for every $Z\in\mathcal{Z}^s(\tilde{S})$.
\end{itemize}
Denote the set of all acceptable dynamic portfolio processes by $\mathcal{V}^{\text{acpt}}$ and set
\begin{equation}
\mathcal{V}_x^{\text{acpt}}\triangleq \{V\in\mathcal{V}^{\text{acpt}}: V_0=x=(x^0,x^1,\ldots, x^d)\},\nonumber
\end{equation}
for some initial position $x\in\mathbb{R}^{1+d}$.
\end{definition}

\begin{remark}
It is clear that every admissible portfolio process is acceptable since each constant $a>0$ is a maximal element in $\mathcal{X}(\tilde{S},a)$. To see this, we notice that for each $\tilde{S}\in\mathcal{S}^s$, there exists $\mathbb{Q}\sim\mathbb{P}$ such that $\tilde{S}$ is a $\mathbb{Q}$-local martingale. Therefore we can conclude that each $\tilde{S}$ is a semimartingale, see Theorem $7.2$ of \cite{Sch94} for locally bounded $\tilde{S}$ and Theorem $1.3$ of \cite{Kardaras20112678} for nonnegative $\tilde{S}$. In addition, the semimartingale $\tilde{S}$ satisfies No Free Lunch with Vanishing Risk condition, see Theorem $1.1$ of \cite{MR1671792}. Hence, there is no maximal element in $\mathcal{X}(\tilde{S},a)$ which dominates the constant $a$.
\end{remark}

However, the definition above seems abstract in general since it involves all SCPS. In the next examples, we will provide some conditions easy to check such that the self-financing portfolio $V$ is indeed acceptable:
\begin{example}
If there exists a nonnegative $\mathbb{P}$-martingale $W$ with $a\triangleq \sup_{\mathbb{Q}\in\mathcal{M}^s}\mathbb{E}^{\mathbb{Q}}[W_T]<\infty$ such that $V_T+(W_T, \bar{\mathbf{0}})\in\mathbb{L}^0(\hat{K}_T)$ and $\langle V_{\tau}+(W_{\tau},\bar{\mathbf{0}}), Z_{\tau}\rangle\geq 0$ $\mathbb{P}$-a.s. for all $[0,T]$-valued stopping times $\tau$ and for all SCPS $Z\in\mathcal{Z}^{s}$, we can conclude that the self-financing portfolio $V$ is acceptable. To see this, by following the same argument of Lemma $\ref{ran}$, we get that for each $\tilde{S}\in\mathcal{S}^s$, there exists a maximal element $X^{\max,\tilde{S}}\in\mathcal{X}(\tilde{S},a)$ such that $W_T\leq X_T^{\max,\tilde{S}}$. Moreover, for each $\tilde{S}\in\mathcal{S}^s$, we also know that $(X_t^{\max,\tilde{S}}-W_t)_{t\in[0,T]}$ is a supermartingale under $\mathbb{P}$ since each $X^{\max,\tilde{S}}$ is a $\mathbb{P}$-supermartingale. It follows that $X_{\tau}^{\max,\tilde{S}}-W_{\tau}\geq 0$ $\mathbb{P}$-a.s. for all $[0,T]$-valued stopping times $\tau$. By passing the inequality to $V$ and the fact that $\mathbb{L}^0(\mathbb{R}_+^{1+d})\subset\mathbb{L}^0(\hat{K}_T)$, we can verify that all conditions of the acceptable portfolio are satisfied.
\end{example}

\begin{example}
Let $S$ be a strictly positive semimartingale and the transaction cost $\lambda^{ij}=\lambda\in(0,1)$ be a fixed constant for all $0\leq i\leq d$ and $1\leq j\leq d$. It is well known that for each $\tilde{S}\in\mathcal{S}^s$, we have
\begin{equation}
(1-\lambda)S_t^{i}<\tilde{S}_t^i< (1+\lambda)S_t^i,\ \ \mathbb{P}-\text{a.s.}\ \ 1\leq i\leq d,\ \ t\in[0,T].\nonumber
\end{equation}
Therefore, for any fixed constant $k>0$, we obtain that for any constant $a> (1-\lambda)k\sum_{i=1}^{d}S_0^i$ and all $[0,T]$-valued stopping time $\tau$,
\begin{equation}
0<a+\int_0^{\tau}(1-\lambda)k\bar{\mathbf{1}}dS_u <a+\int_0^{\tau}k\bar{\mathbf{1}}d\tilde{S}_u,\nonumber
\end{equation}
where $\bar{\mathbf{1}}$ is the d-dimensional identity vector, which leads to
\begin{equation}
0<a+(1-\lambda)k\sum_{i=1}^{d}(S_{\tau}^i-S_0^i)<a+\int_0^{\tau}k\bar{\mathbf{1}}d\tilde{S}_u.\nonumber
\end{equation}
For each fixed $\tilde{S}\in\mathcal{S}^s$, we can choose $X^{\max, \tilde{S}}\in\mathcal{X}(\tilde{S},a)$ such that $X_T^{\max,\tilde{S}}\geq a+\int_0^{T}k\bar{\mathbf{1}}d\tilde{S}_u$. It is easy to see that $(a+\int_0^{t}k\bar{\mathbf{1}}d\tilde{S}_u)_{t\in[0,T]}$ is a local martingale, hence supermartingale, under all $\mathbb{Q}\in\mathcal{M}^s(\tilde{S})$. Meanwhile, by Theorem $5.7$ of \cite{MR1671792}, for any $X^{\max, \tilde{S}}\in\mathcal{X}(\tilde{S},a)$ there exists some $\mathbb{Q}^{\ast}\in\mathcal{M}^s(\tilde{S})$ such that $X^{\max, \tilde{S}}$ is a UI martingale under $\mathbb{Q}^{\ast}$. Thus, it follows that $X^{\max, \tilde{S}}_{\tau}\geq a+\int_0^{\tau}k\bar{\mathbf{1}}d\tilde{S}_u$ $\mathbb{Q}^{\ast}$-a.s. for all $[0,T]$-valued stopping times. Since $\mathbb{Q}^{\ast}$ is equivalent to $\mathbb{P}$, we deduce that $X^{\max, \tilde{S}}_{\tau}\geq a+\int_0^{\tau}k\bar{\mathbf{1}}d\tilde{S}_u$ holds $\mathbb{P}$-a.s.

Let us denote $B_t\triangleq a+(1-\lambda)k\sum_{i=1}^{d}(S_{t}^i-S_0^i)$ for $t\in[0,T]$. It follows that for each $\tilde{S}\in\mathcal{S}^s$, there exists an $X^{\max, \tilde{S}}\in\mathcal{X}(\tilde{S},a)$ such that
\begin{equation}
B_{\tau}<X_{\tau}^{\max,\tilde{S}},\ \ \mathbb{P}-\text{a.s.}.\nonumber
\end{equation}
Therefore, in this market, if the self-financing portfolio $V$ satisfies the conditions:  $V_T+(B_T, \bar{\mathbf{0}})\in\mathbb{L}^0(\hat{K}_T)$ and $\langle V_{\tau}+(B_{\tau},\bar{\mathbf{0}}), Z_{\tau}\rangle\geq 0$ $\mathbb{P}$-a.s. for all $[0,T]$-valued stopping time $\tau$ and for all SCPS $Z\in\mathcal{Z}^{s}$, the above argument implies that $V$ is an acceptable portfolio process.
\end{example}

We now proceed to show a type of super-hedging theorem for some workable contingent claims using acceptable portfolios. The next assumption is required to exclude some trivial cases.

\begin{assumption}\label{notrep}
For any non-zero vector $q\in\mathbb{R}^N$, the random variable $q\cdot\mathcal{E}_T$ is not replicable in the market under SCPS.
\end{assumption}

Denote $\mathcal{H}(x,q)$ the set of acceptable portfolio processes with initial position $x\in\mathbb{R}^{1+d}$ whose terminal value dominate the payoff $(-q\cdot\mathcal{E}_T,\bar{\mathbf{0}})$, i.e.,
\begin{equation}\label{newacp}
\mathcal{H}(x,q)\triangleq \{V: V_T+(q\cdot\mathcal{E}_T,\bar{\mathbf{0}})\in\mathbb{L}^0(\hat{K}_T),\ V\in\mathcal{V}_x^{\text{acpt}}\},\ \ (x,q)\in\mathcal{K}
\end{equation}
where the effective domain $\mathcal{K}$ is defined by
\begin{equation}\label{workK}
\mathcal{K}\triangleq \text{int}\{(x,q)\in\mathbb{R}^{1+d+N}: \mathcal{H}(x,q)\neq \emptyset\}.
\end{equation}

\ \\
\subsection{Superhedging Result for Some Workable Contingent Claims}
Let us consider the abstract set
\begin{equation}
\mathcal{C}(x,q)\triangleq \{g\in\mathbb{L}_+^{0}(\mathbb{R}^{1+d}): V_T+(q\cdot\mathcal{E}_T,\bar{\mathbf{0}})-g\in\mathbb{L}^0(\hat{K}_T),\  V\in\mathcal{H}(x,q)\},\ \ (x,q)\in\mathcal{K}.\nonumber
\end{equation}

The following result first gives a characterization of the elements in $\mathcal{C}(x,q)$ using all SCPS.
\begin{lemma}\label{budlem}
Given Assumption $\ref{assumZ}$, $\ref{endowassum}$, if $(x,q)\in\mathcal{K}$, for any $g\in\mathcal{C}(x,q)$, we have
\begin{equation}\label{leminq}
\mathbb{E}[\langle g, Z_T\rangle]\leq \langle x, Z_0\rangle+\mathbb{E}[\langle (q\cdot\mathcal{E}_T,\bar{\mathbf{0}}), Z_T\rangle],\ \ \forall Z\in\mathcal{Z}^s.
\end{equation}
\end{lemma}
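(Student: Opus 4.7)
The plan is to fix any $Z\in\mathcal{Z}^s$, use $Z_T\in\hat{K}_T^*$ to pass the cone domination into an inner-product inequality, and then bound the right-hand side by $\langle x,Z_0\rangle$ via a super-martingale argument on $\langle V,Z\rangle$ for any super-replicating acceptable portfolio $V\in\mathcal{H}(x,q)$. Concretely, since $g\in\mathcal{C}(x,q)$, pick $V\in\mathcal{H}(x,q)$ with $V_T+(q\cdot\mathcal{E}_T,\bar{\mathbf{0}})-g\in\mathbb{L}^0(\hat{K}_T)$. Pairing with $Z_T\in\hat{K}_T^\ast$ yields, $\mathbb{P}$-a.s., $\langle g,Z_T\rangle\leq\langle V_T,Z_T\rangle+\langle(q\cdot\mathcal{E}_T,\bar{\mathbf{0}}),Z_T\rangle$, so the lemma reduces to
\[
\mathbb{E}[\langle V_T,Z_T\rangle]\leq\langle x,Z_0\rangle.
\]
Set $\tilde{S}:=(Z^i/Z^0)_{i=1}^d\in\mathcal{S}^s$ and let $\mathbb{Q}\in\mathcal{M}^s(\tilde{S})$ be the associated measure with density $d\mathbb{Q}/d\mathbb{P}=Z^0_T$. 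By acceptability of $V$, extract $a>0$ and a maximal $X^{\max,\tilde{S}}\in\mathcal{X}(\tilde{S},a)$ such that $Y_t:=\langle V_t+(X^{\max,\tilde{S}}_t,\bar{\mathbf{0}}),Z_t\rangle=\langle V_t,Z_t\rangle+X^{\max,\tilde{S}}_tZ^0_t\geq 0$.

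Next I would run the super-martingale argument on $Y$. First, $\langle V,Z\rangle$ is a $\mathbb{P}$-local supermartingale by the standard Kabanov-framework integration-by-parts calculation, which exploits that $V$ is predictable of finite variation with $dV\in-\hat{K}$ while $Z$ takes values in $\hat{K}^*$ (cf.\ \cite{Campi06}). Independently, since $X^{\max,\tilde{S}}$ is a $\mathbb{Q}$-local martingale by construction and $Z^0$ is the $\mathbb{P}$-density process of $\mathbb{Q}$, Bayes' rule implies $X^{\max,\tilde{S}}Z^0$ is a $\mathbb{P}$-local martingale. Summing, $Y$ is a nonnegative $\mathbb{P}$-local supermartingale, hence a true $\mathbb{P}$-supermartingale by conditional Fatou. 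This gives
\[
\mathbb{E}[\langle V_T,Z_T\rangle]+\mathbb{E}[X^{\max,\tilde{S}}_TZ^0_T]\leq Y_0=\langle x,Z_0\rangle+a.
\]

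The hardest step is the last mile: I need $\mathbb{E}[X^{\max,\tilde{S}}_TZ^0_T]=\mathbb{E}^{\mathbb{Q}}[X^{\max,\tilde{S}}_T]=a$, i.e.\ that $X^{\max,\tilde{S}}$ be a uniformly integrable $\mathbb{Q}$-martingale for the \emph{specific} $\mathbb{Q}$ attached to the arbitrarily chosen $Z$. Maximality of $X^{\max,\tilde{S}}$ in $\mathcal{X}(\tilde{S},a)$ together with Theorem 5.7 of \cite{MR1671792} delivers \emph{some} $\mathbb{Q}^\ast\in\mathcal{M}^s(\tilde{S})$ enjoying this UI martingale property, which already settles the inequality for the distinguished $Z^\ast\in\mathcal{Z}^s(\tilde{S})$ associated to $\mathbb{Q}^\ast$; promoting it to an arbitrary $Z\in\mathcal{Z}^s$ is the core obstacle. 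I would resolve this via a density/approximation argument in the spirit of \cite{kram04}, showing that those $Z\in\mathcal{Z}^s(\tilde{S})$ for which $X^{\max,\tilde{S}}Z^0$ is a true $\mathbb{P}$-martingale are dense in $\mathcal{Z}^s(\tilde{S})$ in the $\mathbb{L}^1(\mathbb{P})$-norm on $Z^0_T$, then passing to the limit; Assumption \ref{endowassum}, supplying uniform integrability of $\sum_i\mathcal{E}_T^i$ across all SCPS, provides the control needed to take such limits on the endowment side.
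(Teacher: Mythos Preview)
Your proposal is correct and follows essentially the same route as the paper: reduce to bounding $\mathbb{E}[\langle V_T,Z_T\rangle]$, use the Campi--Schachermayer local-supermartingale property of $\langle V,Z\rangle$ together with the acceptable lower bound $X^{\max,\tilde S}$, invoke the Delbaen--Schachermayer density of those $\mathbb{Q}\in\mathcal{M}^s(\tilde S)$ making $X^{\max,\tilde S}$ a UI martingale, and close the limit via Assumption~\ref{endowassum}. The one implementation detail worth flagging is that in the paper the density/limit step is carried out on the nonnegative quantity $\beta_T:=V_T^0+\sum_i\tilde S_T^iV_T^i+q\cdot\mathcal{E}_T\geq 0$ (which comes from $V\in\mathcal{H}(x,q)$) so that monotone convergence applies and the endowment can be split off at the very end using the uniform integrability in Assumption~\ref{endowassum}; if you separate $q\cdot\mathcal{E}_T$ from $V_T$ too early, the limiting argument on $\mathbb{E}^{\mathbb{Q}^n}[\langle V_T,\cdot\rangle]$ alone lacks the needed sign control.
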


\begin{remark}
In \cite{kram04} without transaction costs, the authors only require the super-hedging condition of the random endowments, i.e., $\sup_{\mathbb{Q}\in\mathcal{M}}\mathbb{E}[|\mathcal{E}_T^i|]<\infty$ for all $1\leq i\leq N$. Due to the special property of stochastic integrals, they can prove the inequality similar to $(\ref{budlem})$ with the subset $\mathcal{M}'$ of all equivalent local martingale measures $\mathcal{M}$, which depends on the random endowments $\mathcal{E}_T$, see Lemma $4$ and Lemma $5$ in \cite{kram04}. Moreover, the subset $\mathcal{M}'$ is enough for them to build the bipolar results for conjugate duality.

In models with transaction costs, the definition of acceptable portfolios is more delicate. We can similarly define a subset $\bar{\mathcal{M}}^s(\tilde{S})$ using the given random endowment such that each $\hat{X}^{\max,\tilde{S}}$ found in Lemma $\ref{ran}$ is a true martingale under $\mathbb{Q}\in\bar{\mathcal{M}}^s(\tilde{S})$. However, this set is no longer appropriate for our model since each acceptable portfolio needs some different lower bound $X^{\max,\tilde{S}}$ for all $[0,T]$-stopping time $\tau$, and $X^{\max,\tilde{S}}$ is not necessarily a martingale under $\bar{\mathcal{M}}^s(\tilde{S})$.

To deal with this issue, we need to avoid the subset trick in \cite{kram04} and prove the super-hedging result for the whole set $\mathcal{Z}^s$. To this end, we have to make the Assumption $\ref{endowassum}$ which is stronger than the conventional super-hedging requirement that $\sup_{Z\in\mathcal{Z}^s}\mathbb{E}\Big[\Big\langle\Big(\sum_{i=1}^N|\mathcal{E}_T^i|, \bar{\mathbf{0}}\Big), Z_T\Big\rangle\Big] <\infty$.
\end{remark}

The next result, on the other hand, gives a criteria to check if the given process is in the set $\mathcal{C}(x,q)$ or not.
\begin{lemma}\label{lemsp}
Let Assumption $\ref{assumZ}$ hold. Let $g$ be an $\mathbb{R}^{1+d}$-valued, $\mathcal{F}_T$-measurable random vector such that there exists a constant $a>0$ and for each $\tilde{S}\in\mathcal{S}^s$, there exists a maximal element $X^{\max, \tilde{S}}\in\mathcal{X}(\tilde{S},a)$ such that $g+(X_T^{\max,\tilde{S}},\bar{\mathbf{0}})\in \mathbb{L}^0(\hat{K}_T)$. If
\begin{equation}\label{superhedging_ineq}
\mathbb{E}[\ab{g, Z_T}]\leq \langle x, Z_0\rangle,\ \ \forall \, Z\in\mathcal{Z}^s,
\end{equation}
where $x\in\mathbb{R}^{1+d}$, there exists $V\in\mathcal{V}_x^{\text{acpt}}$ such that $V_T-g\in\mathbb{L}^0(\hat{K}_T)$.
\end{lemma}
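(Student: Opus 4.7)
The plan is to follow a Hahn--Banach bipolar strategy in the spirit of the classical super-hedging theorems of \cite{Campi06} (admissible portfolios under transaction costs) and \cite{kram04} (acceptable portfolios in frictionless markets), synthesizing the two features. Package the conclusion into the convex set
$$
\mathcal{C}_{x}:=\{h\in\mathbb{L}^{0}(\mathbb{R}^{1+d},\mathcal{F}_{T}):\exists\,V\in\mathcal{V}_{x}^{\text{acpt}}\text{ with }V_{T}-h\in\mathbb{L}^{0}(\hat{K}_{T})\},
$$
so the claim becomes $g\in\mathcal{C}_{x}$. The argument proceeds by contradiction in three stages: prove $\mathcal{C}_{x}$ is closed in probability; invoke Hahn--Banach to separate $g$ if $g\notin\mathcal{C}_{x}$; then identify the separating functional with an SCPS in $\mathcal{Z}^{s}$ to contradict the hypothesis.

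For the closedness step, I would follow the pattern of Theorem $3.5$ in \cite{Campi06}. Given $h_{n}\to h$ in probability with acceptable superhedges $V^{n}\in\mathcal{V}_{x}^{\text{acpt}}$ and associated maximal lower bounds $X^{\max,\tilde S,n}\in\mathcal{X}(\tilde S,a)$, apply a Koml\'{o}s-type extraction to the finite-variation processes $V^{n}$ to obtain a predictable finite-variation limit $V^{\star}$ satisfying the self-financing condition $(\ref{selfcond})$; the common constant $a$ produced by Lemma $\ref{ran}$ and the uniform control from Assumption $\ref{endowassum}$ allow a simultaneous extraction, for each $\tilde S\in\mathcal{S}^{s}$, of a limiting maximal element $X^{\max,\tilde S,\star}$ with which $V^{\star}$ satisfies the acceptability inequalities at every stopping time. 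A Hahn--Banach argument in the $(\mathbb{L}^{\infty},\mathbb{L}^{1})$ duality -- after truncating by the cone lower bound $g+(X_{T}^{\max,\tilde S},\bar{\mathbf{0}})\in\mathbb{L}^{0}(\hat{K}_{T})$ from the hypothesis -- then produces a continuous functional $h\mapsto\mathbb{E}[\langle h,\xi\rangle]$ strictly separating $g$ from $\mathcal{C}_{x}$. Since $\mathcal{C}_{x}$ is stable under subtraction by $\mathbb{L}^{0}(\hat{K}_{T})$ and contains the terminal values of all admissible portfolios starting at $x$, the random vector $\xi$ takes values in $\hat{K}_{T}^{*}$, and Assumption $\ref{assumZ}$ combined with a density-of-SCPS argument identifies $\xi=Z_{T}$ for some $Z\in\mathcal{Z}^{s}$. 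Evaluating on an admissible portfolio whose mark-to-market under $Z$ is a genuine $Z$-martingale (which exists by the maximality structure recorded in Definition $2.4$ and Lemma $\ref{ran}$) shows $\sup_{h\in\mathcal{C}_{x}}\mathbb{E}[\langle h,Z_{T}\rangle]\ge\langle x,Z_{0}\rangle$, so the strict separation yields $\mathbb{E}[\langle g,Z_{T}\rangle]>\langle x,Z_{0}\rangle$, contradicting $(\ref{superhedging_ineq})$.

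The principal obstacle is the closedness of $\mathcal{C}_{x}$. In the admissible case the lower bound is a single constant, whereas acceptability carries a whole family $\{X^{\max,\tilde S}\}_{\tilde S\in\mathcal{S}^{s}}$ of stochastic lower bounds, and the limiting portfolio must come with a compatible limiting family. As emphasized in the remark preceding the lemma, the trick of restricting the duality to a smaller subset of dual elements (used in \cite{kram04}) is unavailable here, since a maximal element of the $\tilde S$-market need not be a uniformly integrable martingale under any particular SCPS; it is precisely this defect that forces the introduction of Assumption $\ref{endowassum}$, which delivers the uniform integrability over the entire set $\mathcal{Z}^{s}$ needed to pass to the limit simultaneously in the portfolios and their stochastic lower bounds.
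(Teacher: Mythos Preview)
Your overall architecture---convex set, separation, identification of the separating functional with a (S)CPS---matches the paper's. The genuine gap is in the closedness step, and it is precisely the one you flag as the ``principal obstacle'' but do not resolve.

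You propose to take $h_{n}\to h$ with superhedges $V^{n}\in\mathcal{V}_{x}^{\text{acpt}}$, each carrying its own family $\{X^{\max,\tilde S,n}\}_{\tilde S\in\mathcal{S}^{s}}$, and then to ``simultaneously extract, for each $\tilde S\in\mathcal{S}^{s}$, a limiting maximal element $X^{\max,\tilde S,\star}$.'' This does not work: convex combinations of maximal elements in $\mathcal{X}(\tilde S,a)$ need not be maximal, there is no compactness for maximal elements even for a single $\tilde S$, and $\mathcal{S}^{s}$ is uncountable so a diagonal argument is unavailable. Moreover, the ``common constant $a$'' you invoke from Lemma~\ref{ran} controls only the random endowment, not the acceptability bounds $a_{n}$ of arbitrary $V^{n}$, which may be unbounded along the sequence; without a uniform $a$ the total-variation bound (the analogue of Lemma~3.2 in \cite{Campi06}) and hence the Koml\'os extraction on $V^{n}$ are not available either.

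The paper sidesteps this entirely by \emph{not} proving that $\mathcal{A}_{x}$ is closed in probability. Instead it introduces a weaker notion, \emph{relative} Fatou closedness: one only asks for closure under a.s.\ convergent sequences that already share a \emph{single fixed} family $(\hat a,\{\hat X^{\max,\tilde S}\}_{\tilde S})$ of lower bounds. For such sequences the total-variation estimate goes through with the common $\hat a$, and the limit $V$ inherits acceptability with the \emph{same} fixed $\hat X^{\max,\tilde S}$---no extraction of limiting maximal elements is needed. This relative Fatou closedness suffices because the truncations $g\mathbf 1_{\{|g|\le n\}}$ of the target $g$ all satisfy the lower bound with the \emph{same} $(a,\{X^{\max,\tilde S}\})$ furnished by the hypothesis of the lemma; one then passes through $\mathcal{A}_{x}^{\infty}$, which is weak$^{\ast}$ closed by a Krein--\v Smulian argument, obtains the bipolar characterization over $\mathbb{L}^{1}(\hat K_{T}^{\ast})$, and runs the separation/identification exactly as you outline (separator is a CPS, perturb by a genuine SCPS to land in $\mathcal{Z}^{s}$). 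The fix, in short, is to replace your closedness-in-probability claim by this relative Fatou closedness and to observe that the specific approximating sequence you need already carries the common lower bound supplied in the statement of the lemma.
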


\begin{remark}
Lemma $\ref{budlem}$ and Lemma $\ref{lemsp}$ together provide a super-hedging theorem for some workable contingent claims using acceptable portfolios. Without the intermediate consumption, we can also study the multi-variate utility maximization problem on the terminal wealth defined on each asset with unbounded random endowments similar to \cite{campi1}. Moreover, using the acceptable portfolios, we can possibly perform the sensitivity analysis of marginal utility-based prices with respect to a small number of random endowments, similar to \cite{Sirbu2006}, however under proportional transaction costs. Some potential extensions in these directions are scheduled as future research projects.
\end{remark}
\ \\

\section{Market Model With Consumption and Habit Formation}\label{section2}

\subsection{Set Up}
In this section, we adopt the financial market model with proportional transaction costs and unbounded random endowment as in Section $\ref{section1}$. In addition, we start to assume that the agent can also choose an intermediate consumption from the cash account $S^0$ during the investment horizon. The consumption rate process is denoted by $(c_t)_{t\in[0,T]}$. To simplify the notation, starting from $t=0$, we assume that the investor holds initial wealth $V_0=(x, \bar{\mathbf{0}})$ with $x\in\mathbb{R}$, i.e., the initial position in the cash account is $x\in\mathbb{R}$ and the initial position in each risky asset account is $0$.

\begin{definition}
Given Assumption $\ref{assumZ}$, $\ref{endowassum}$ and $\ref{notrep}$ and let $(x,\bar{\mathbf{0}},q)\in\mathcal{K}$, the consumption process $(c_t)_{t\in[0,T]}$ is called $(x,q\cdot\mathcal{E})$-\textbf{financeable} if there exists a self-financing and acceptable portfolio $V\in \mathcal{H}(x,q)$, defined in $(\ref{newacp})$, such that $V_0=(x,\bar{\mathbf{0}})$ and $V_T+(-\int_0^Tc_tdt+q\cdot\mathcal{E}_T,\bar{\mathbf{0}})\in\mathbb{L}^0(\hat{K}_T)$. Denote $\mathcal{C}_{x,q\cdot\mathcal{E}_T}$ the set of all $(x,q\cdot\mathcal{E}_T)$-financeable consumption processes.
\end{definition}

\begin{proposition}[Consumption Budget Constraint]\label{bcccc}
Let Assumption $\ref{assumZ}$, $\ref{endowassum}$ and $\ref{notrep}$ hold. Let $(x,\bar{\mathbf{0}},q)\in\mathcal{K}$ which is defined in $(\ref{workK})$. We have that the process $c\in\mathcal{C}_{x,q\cdot\mathcal{E}_T}$ if and only if
\begin{equation}\label{budccc}
\mathbb{E}\Big[\int_0^Tc_tZ^0_tdt\Big]\leq x+\mathbb{E}[\langle (q\cdot\mathcal{E}_T, \bar{\mathbf{0}}), Z_T\rangle],\ \ \ \forall Z\in\mathcal{Z}^s.
\end{equation}
\end{proposition}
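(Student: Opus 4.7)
My strategy will be to treat the consumption stream as a cash-component contingent claim and reduce both directions of the equivalence to the super-hedging duality already established in Lemma $\ref{budlem}$ and Lemma $\ref{lemsp}$. The bridge between $\mathbb{E}\bigl[\int_0^T c_t\,dt\cdot Z^0_T\bigr]$, which is the natural output of those lemmas, and the stated quantity $\mathbb{E}\bigl[\int_0^T c_t Z^0_t\,dt\bigr]$ will be supplied by Tonelli together with the martingale identity $\mathbb{E}[Z^0_T\mid\mathcal{F}_t]=Z^0_t$; both integrands are nonnegative, so no a priori integrability is needed for the swap.

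For the necessity direction I will take an acceptable portfolio $V\in\mathcal{V}_{(x,\bar{\mathbf{0}})}^{\text{acpt}}$ witnessing financeability and set $g\triangleq(\int_0^T c_t\,dt,\bar{\mathbf{0}})\in\mathbb{L}^0_+(\mathbb{R}^{1+d})$. Because $\mathbb{L}^0(\mathbb{R}^{1+d}_+)\subset\mathbb{L}^0(\hat{K}_T)$ by efficient friction, adding $g$ to $V_T+(-\int_0^T c_t\,dt+q\cdot\mathcal{E}_T,\bar{\mathbf{0}})$ shows simultaneously that $V\in\mathcal{H}(x,q)$ and that $g\in\mathcal{C}(x,q)$. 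Lemma $\ref{budlem}$ then applies and, after the Tonelli/martingale rewriting above, delivers exactly $(\ref{budccc})$.

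For the sufficiency direction I will set $g\triangleq(\int_0^T c_t\,dt-q\cdot\mathcal{E}_T,\bar{\mathbf{0}})$ and verify the hypotheses of Lemma $\ref{lemsp}$ with initial position $(x,\bar{\mathbf{0}})$. The pricing inequality $\mathbb{E}[\langle g,Z_T\rangle]\leq\langle(x,\bar{\mathbf{0}}),Z_0\rangle=x$ is just $(\ref{budccc})$ rearranged via the same martingale identity. For the stochastic lower bound that Lemma $\ref{lemsp}$ requires, I will set $C\triangleq(\max_{1\leq i\leq N}|q^i|)\vee 1$, observe that
\begin{equation*}
q\cdot\mathcal{E}_T-\int_0^T c_t\,dt\;\leq\;C\sum_{i=1}^N\mathcal{E}_T^i
\end{equation*}
since $c\geq 0$ and $\mathcal{E}_T^i\geq 0$, and then invoke Lemma $\ref{ran}$ to obtain, for each $\tilde{S}\in\mathcal{S}^s$, a maximal $\hat{X}^{\max,\tilde{S}}\in\mathcal{X}(\tilde{S},a)$ dominating $\sum_i\mathcal{E}_T^i$. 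Scaling by $C$ gives a maximal element $C\hat{X}^{\max,\tilde{S}}\in\mathcal{X}(\tilde{S},Ca)$ with $g+(C\hat{X}^{\max,\tilde{S}}_T,\bar{\mathbf{0}})\in\mathbb{L}^0(\hat{K}_T)$. Lemma $\ref{lemsp}$ then produces $V\in\mathcal{V}^{\text{acpt}}_{(x,\bar{\mathbf{0}})}$ with $V_T-g\in\mathbb{L}^0(\hat{K}_T)$, which is precisely the financeability condition for $c$.

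The hardest part, I expect, will be the bookkeeping in the sufficiency direction: producing, with a single threshold $a$ uniform over $\tilde{S}\in\mathcal{S}^s$, a maximal element dominating the signed and unbounded residual $q\cdot\mathcal{E}_T-\int_0^T c_t\,dt$. The nonnegativity of both $c$ and the $\mathcal{E}^i_T$, together with the aggregate form of Assumption $\ref{endowassum}$ (as repackaged by Lemma $\ref{ran}$), are exactly what let the single constant $C$ absorb the signed coefficients of $q$ and pin down one universal $a$.
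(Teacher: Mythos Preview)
Your proposal is correct and follows essentially the same route as the paper: both directions are reduced to Lemma~\ref{budlem} and Lemma~\ref{lemsp} via the claim $g=(\int_0^T c_t\,dt - q\cdot\mathcal{E}_T,\bar{\mathbf{0}})$, with Lemma~\ref{ran} supplying the uniform maximal lower bound in the sufficiency step. The only cosmetic differences are that you obtain $\mathbb{E}\bigl[\int_0^T c_t\,dt\cdot Z^0_T\bigr]=\mathbb{E}\bigl[\int_0^T c_t Z^0_t\,dt\bigr]$ via Tonelli and the martingale identity $\mathbb{E}[Z^0_T\mid\mathcal{F}_t]=Z^0_t$ (the paper phrases this as integration by parts plus localization), and that you scale the maximal element by $C$ explicitly whereas the paper absorbs the factor $\zeta=\max_i|q^i|$ directly into the constant $\hat{a}$ furnished by Lemma~\ref{ran}.
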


In the present paper, we are interested in the time non-separable preference on consumption which takes into account the path-dependence feature of the consumption behavior. In particular, we introduce the \textit{consumption habit formation} process $F(c)_t$ given by the exponentially weighted average of agent's past consumption integral and the initial habit
\begin{equation}
F(c)_t=ze^{-\int_0^t\alpha_vdv}+\int_0^t\delta_se^{-\int_s^t\alpha_vdv}c_sds,\ \ t\in[0,T],\nonumber
\end{equation}
where the constant $z\geq 0$ is called the \textit{initial habit}. In general, the discounting factors $\alpha$ and $\delta$ are assumed to be nonnegative optional processes which are allowed to be unbounded. However, for the concern of integrability, it is assumed that $\int_0^t(\delta_u-\alpha_u)du<\infty$ a.s. for each $t\in[0,T]$.

This paper is interested in the conventional scenario that the consumption habit is addictive in the sense that $c_t\geq F(c)_t$, $\forall t\in[0,T]$, i.e., the investor's current consumption rate shall never fall below the standard of living process.

The investor's preference is represented by a utility function $\mathit{U}:[0,T]\times (0, \infty)\rightarrow\mathbb{R}$, such that, for every $x>0$, $\mathit{U}(\cdot,x)$ is continuous on $[0,T]$, and for every $t\in[0,T]$, the function $\mathit{U}(t,\cdot)$ is strictly concave, strictly increasing, continuously differentiable and satisfies the Inada conditions:
\begin{equation}\label{Inada}
\mathit{U}'(t,0)\triangleq\underset{x\rightarrow 0}{\lim}\mathit{U}'(t,x)=\infty,\ \ \ \ \ \  \mathit{U}'(t,\infty)\triangleq\underset{x\rightarrow \infty}{\lim}\mathit{U}'(t,x)=0,
\end{equation}
where $\mathit{U}'(t,x)\triangleq\frac{\partial}{\partial x}\mathit{U}(t,x)$. For each $t\in[0,T]$, we extend the definition of the utility function by $\mathit{U}(t,x)=-\infty$ for all $x<0$, which is equivalent to the addictive habit formation constraint $c_t\geq F(c)_t$. The convex conjugate of the utility function, is defined by
\begin{equation}
\mathit{V}(t,y)\triangleq\underset{x>0}{\sup}\{\mathit{U}(t,x)-xy\},\ \ \ \ y>0.\nonumber
\end{equation}

Following \cite{Yu1} , we make assumptions on the asymptotic behavior of $\mathit{U}$ at both $x=0$ and $x=\infty$.
\noindent
\begin{assumption}\label{ASSUV}\ \\
The utility function $\mathit{U}$ satisfies the Reasonable Asymptotic Elasticity (RAE) condition both at $x=\infty$ and $x=0$, i.e.,
\begin{equation}\label{ass:AEU}
AE_{\infty}[\mathit{U}] =\underset{x\rightarrow\infty}{\lim\sup}\Big(\sup_{t\in[0,T]}\frac{x\mathit{U}'(t,x)}{\mathit{U}(t,x)}\Big)<1,
\end{equation}
and
\begin{equation}\label{ass:AEV}
AE_{0}[\mathit{U}] =\underset{x\rightarrow 0}{\lim\sup}\Big(\sup_{t\in[0,T]}\frac{x\mathit{U}'(t,x)}{|\mathit{U}(t,x)|}\Big)<\infty.
\end{equation}
Moreover, in order to get some inequalities uniformly in time $t$, we shall assume
\begin{equation}\label{up}
\lim_{x\rightarrow\infty}\Big(\inf_{t\in[0,T]}\mathit{U}(t,x)\Big)>0,
\end{equation}
and
\begin{equation}\label{down}
\lim_{x\rightarrow 0}\Big(\sup_{t\in[0,T]}\mathit{U}(t,x)\Big)<0.
\end{equation}
\end{assumption}

The RAE conditions $(\ref{ass:AEU})$ and $(\ref{ass:AEV})$ are not restrictive. For instance, the well known discounted log utility function $U(t,x)=e^{-\beta t}\log x$ and the discounted power utility function $U(t,x)=e^{-\beta t}\frac{x^p}{p}$ $(p<1\ \text{and}\ p\neq 0)$ satisfy the conditions $(\ref{ass:AEU})$ and $(\ref{ass:AEV})$. Actually, if the utility function has the finite lower bound condition $\inf_{t\in[0,T]}U(t,0)>-\infty$, the condition $(\ref{ass:AEV})$ is verified. On the other hand, it is also easy to check that the utility function $U(t,x)=-e^{\frac{1}{x}}$ does not satisfy the condition $(\ref{ass:AEV})$ and the utility function $U(t,x)=\frac{x}{\log x}$ does not satisfy the condition $(\ref{ass:AEU})$. Moreover, the extra conditions $(\ref{up})$ and $(\ref{down})$ are also not restrictive. Indeed, the utility function satisfies RAE conditions $(\ref{ass:AEU})$ and $(\ref{ass:AEV})$ if and only if its affine transform $a+bU(t,x)$ satisfies RAE conditions  $(\ref{ass:AEU})$ and $(\ref{ass:AEV})$ for arbitrary constants $a,b>0$.

As in \cite{Yu1}, we denote $\mathcal{O}$ as $\sigma$-algebra of optional sets relative to the filtration $(\mathcal{F}_{t})_{t\in[0,T]}$, and let $d\bar{\mathbb{P}}=dt\times d\mathbb{P}$ be the measure on the product space $(\Omega\times[0,T],\mathcal{O})$ defined as
\begin{equation}
\bar{\mathbb{P}}[A]=\mathbb{E}^{\mathbb{P}}\Big[\int_{0}^{T}\mathbf{1}_{A}(t,\omega)dt\Big],\ \ \ \textrm{for}\ A\in\mathcal{O}.\nonumber
\end{equation}
We denote by $\mathbb{L}^{0}(\Omega\times[0,T],\mathcal{O},\bar{\mathbb{P}})$ ($\mathbb{L}^{0}(\Omega\times[0,T])$ for short) the set of all random variables on the product space with respect to the optional $\sigma$-algebra $\mathcal{O}$ endowed with the topology of convergence in measure $\bar{\mathbb{P}}$. And from now on, we shall identify the optional stochastic process $(Y_t)_{t\in[0,T]}$ with the random variable $Y\in \mathbb{L}^{0}(\Omega\times[0,T])$. We also define the positive orthant $\mathbb{L}_{+}^{0}(\Omega\times[0,T],\mathcal{O},\bar{\mathbb{P}})$ ($\mathbb{L}_{+}^{0}(\Omega\times[0,T])$ for short) as the set of $Y=Y(t,\omega)\in\mathbb{L}^{0}$ such that
\begin{equation}
Y\geq 0,\ \ \ \ \bar{\mathbb{P}}-a.s. .\nonumber
\end{equation}

At this point, for any $(x,\bar{\mathbf{0}}, q)\in\mathcal{K}$ and any $z\geq 0$, we can define the set of all $(x, q\cdot\mathcal{E}_T)$-financeable consumption processes with habit formation constraint as a set of random variables on the product space by
\begin{equation}
\begin{split}
\mathcal{A}(x,q,z)\triangleq \Big\{&c\in\mathbb{L}_+^0(\Omega\times[0,T]): c_t\geq F(c)_t,\ \forall t\in[0,T]\ \text{and}\ \ c\in\mathcal{C}_{x,q\cdot\mathcal{E}_T}\Big\}\\
=\Big\{&c\in\mathbb{L}_{+}^{0}(\Omega\times[0,T]): c_t\geq F(c)_t,\ \forall t\in[0,T]\\
& \text{and}\ \mathbb{E}\Big[\int_0^Tc_tZ_t^0dt\Big]\leq x+\mathbb{E}[q\cdot\mathcal{E}_TZ_T^0],\ \ \forall Z\in \mathcal{Z}^s\Big\}.\nonumber
\end{split}
\end{equation}

However, the set $\mathcal{A}(x,q,z)$ may be empty for some values $(x,\bar{\mathbf{0}}, q)\in\mathcal{K}$ and $z\geq 0$ in virtue of the constraint. We shall restrict ourselves to the \textit{effective domain} $\bar{\mathcal{L}}$ which is defined as the union of the \textit{interior} of the set such that $\mathcal{A}(x,q,z)$ is not empty and the boundary $\{(x,q,z)\in\mathbb{R}^{N+2}:(x, \bar{\mathbf{0}}, q)\in\mathcal{K}\ \text{and}\ z=0\}$:
\begin{equation}
\begin{split}
\bar{\mathcal{L}}\triangleq \text{int}&\Big\{(x,q,z)\in\mathbb{R}^{N+2}: (x,\bar{\mathbf{0}},q)\in\mathcal{K},\ z>0\ \text{such that}\ \mathcal{A}(x,q,z)\neq \emptyset\Big\}\cup\\
& \{(x,q,z)\in\mathbb{R}^{N+2}:(x, \bar{\mathbf{0}}, q)\in\mathcal{K}\ \text{and}\ z=0\}.\nonumber
\end{split}
\end{equation}
From the definition, $\bar{\mathcal{L}}$ includes the special case of zero initial habit, i.e., $z=0$.

By choosing $(x,q,z)\in\bar{\mathcal{L}}$, we can now define the preliminary version of our \textbf{Primal Utility Maximization Problem} by
\begin{equation}\label{primalp}
u(x,q,z)\triangleq\underset{c\in\mathcal{A}(x,q,z)}{\sup}\mathbb{E}\Big[\int_{0}^{T}\mathit{U}(t,c_{t}-F(c)_t)dt\Big],\ \ \ (x,q,z)\in\bar{\mathcal{L}}.
\end{equation}

It is important to impose the following additional conditions on the discounting factors $\alpha_{t}$ and $\delta_{t}$, which are essential for the well-posedness of the primal utility optimization problem:
\begin{assumption}\label{asssss} \ \\
The nonnegative optional processes $(\alpha_t)_{t\in[0,T]}$ and $(\delta_t)_{t\in[0,T]}$ are assumed to satisfy
\begin{itemize}
\item[(i)] For any non-zero vector $(q,z)\in\mathbb{R}^{N+1}$, the random variable $-z\int_0^Te^{\int_0^t(\delta_v-\alpha_v)dv}dt+q\cdot\mathcal{E}_T$ is not replicable under SCPS.
\item[(ii)] We have
\begin{equation}\label{ass3.1}
\underset{Z\in\mathcal{Z}^s}{\sup}\mathbb{E}\Big[\int_{0}^{T}e^{\int_{0}^{t}(\delta_{v}-\alpha_{v})dv}Z^0_{t}dt\Big]<\infty.
\end{equation}
\item[(iii)] There exists a constant $\bar{x}>0$ such that
\begin{equation}\label{ass4.1}
\mathbb{E}\Big[\int_{0}^{T}\mathit{U}^{-}(t,\bar{x}e^{-\int_{0}^{t}\alpha_{v}dv})dt\Big] <\infty.
\end{equation}
\end{itemize}
\end{assumption}

\begin{remark}
If stochastic discounting processes $(\alpha_{t})_{t\in[0,T]}$ and $(\delta_{t})_{t\in[0,T]}$ are assumed to be bounded, conditions $(\ref{ass3.1})$ and $(\ref{ass4.1})$ will be satisfied. Condition $(\ref{ass3.1})$ is the well-known super-hedging property of the random variable $\int_0^Te^{\int_0^{t}(\delta_v-\alpha_v)dv}dt$ in the original market.
\end{remark}

The following Lemma gives an explicit characterization of the domain $\bar{\mathcal{L}}$.
\begin{lemma}\label{equ1}
Under Assumption $\ref{assumZ}$, $\ref{endowassum}$ and the condition $(\ref{ass3.1})$, the effective domain $\bar{\mathcal{L}}$ can be rewritten as
\begin{equation}
\bar{\mathcal{L}}=\Big\{ (x,q,z)\in\mathbb{R}^{N+2}: z\geq 0\ \text{and}\  x+\mathbb{E}[q\cdot\mathcal{E}_TZ_T^0]>z\mathbb{E}\Big[\int_0^Te^{\int_0^t(\delta_v-\alpha_v)dv}Z_t^0dt\Big],\ \forall Z\in\mathcal{Z}^s\Big\}.\nonumber
\end{equation}
\end{lemma}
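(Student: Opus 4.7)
The plan is to prove both inclusions by pairing the pointwise lower bound that the habit constraint forces on every addictive consumption rate with the budget characterization in Proposition~\ref{bcccc}. The key preliminary step is an ODE comparison: if $c_t\geq F(c)_t$ for all $t\in[0,T]$, then
$$\frac{d}{dt}F(c)_t=\delta_t c_t-\alpha_t F(c)_t\geq(\delta_t-\alpha_t)F(c)_t,$$
so Gronwall yields $c_t\geq F(c)_t\geq z\,e^{\int_0^t(\delta_v-\alpha_v)dv}$. The extremal choice $c^*_t:=z\,e^{\int_0^t(\delta_v-\alpha_v)dv}$ itself satisfies $F(c^*)=c^*$ and saturates the habit constraint with equality; it will serve as the universal test candidate for non-emptiness of $\mathcal{A}$.

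For $\bar{\mathcal{L}}\subseteq$~RHS when $z>0$, I exploit the interior structure: for every sufficiently small $\eta>0$ one has $\mathcal{A}(x-\eta,q,z+\eta)\neq\emptyset$, and any $c$ in this set inherits the pointwise lower bound with habit $z+\eta$. Proposition~\ref{bcccc} then gives
$$(z+\eta)\mathbb{E}\Big[\int_0^T e^{\int_0^t(\delta_v-\alpha_v)dv}Z_t^0\,dt\Big]\leq\mathbb{E}\Big[\int_0^T c_tZ_t^0\,dt\Big]\leq (x-\eta)+\mathbb{E}[q\cdot\mathcal{E}_TZ_T^0]$$
for every $Z\in\mathcal{Z}^s$, and rearranging yields the required strict inequality at $(x,q,z)$ with slack of order $\eta$. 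The boundary case $z=0$ reduces to showing $x+\mathbb{E}[q\cdot\mathcal{E}_TZ_T^0]>0$ for all $Z$; perturbing $(x,\bar{\mathbf{0}},q)\in\mathcal{K}$ to $(x-\eta,\bar{\mathbf{0}},q)\in\mathcal{K}$ and applying Lemma~\ref{budlem} with $g=0$ delivers this.

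For the reverse inclusion at $(x,q,z)$ with $z>0$ under the strict inequality, the candidate $c^*$ satisfies the habit constraint by $F(c^*)=c^*$, and its cost equals $z\mathbb{E}[\int_0^T e^{\int_0^t(\delta_v-\alpha_v)dv}Z_t^0\,dt]<x+\mathbb{E}[q\cdot\mathcal{E}_TZ_T^0]$ for every $Z$, so Proposition~\ref{bcccc} immediately places $c^*\in\mathcal{A}(x,q,z)$. To upgrade non-emptiness to membership in the interior I would invoke the uniform bounds $\sup_{Z\in\mathcal{Z}^s}\mathbb{E}[\sum_i\mathcal{E}_T^iZ_T^0]<\infty$ (from Assumption~\ref{endowassum}) and $\sup_{Z\in\mathcal{Z}^s}\mathbb{E}[\int_0^T e^{\int_0^t(\delta_v-\alpha_v)dv}Z_t^0\,dt]<\infty$ (from~\eqref{ass3.1}) to prove that the affine functionals $f_Z(x',q',z'):=x'+\mathbb{E}[q'\cdot\mathcal{E}_TZ_T^0]-z'\mathbb{E}[\int_0^T e^{\int_0^t(\delta_v-\alpha_v)dv}Z_t^0\,dt]$ form an equi-Lipschitz family in $(x',q',z')$ uniformly in $Z$; combined with the strict inequality, this produces a ball around $(x,q,z)$ on which $f_Z\geq0$ for all $Z$, and the $c^*$-construction at each such point shows the ball sits inside $\{\mathcal{A}\neq\emptyset\}$. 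The boundary case $z=0$ follows analogously by Lemma~\ref{lemsp}, applied after scaling the maximal element supplied by Lemma~\ref{ran} to dominate $q\cdot\mathcal{E}_T$.

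The main obstacle I expect is this final step: pointwise strict inequality over an infinite family of linear functionals need not be an open condition in general. The uniform integrability supplied by Assumption~\ref{endowassum} together with the growth condition~\eqref{ass3.1} are exactly what tame $\mathcal{Z}^s$ into an equi-Lipschitz family, and so restore the open-set structure of the right-hand side that the lemma claims.
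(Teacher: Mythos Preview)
Your core strategy matches the paper's: both proofs hinge on the ODE comparison $c_t\geq F(c)_t\geq z e^{\int_0^t(\delta_v-\alpha_v)dv}$, both use the extremal consumption $c^*_t=z e^{\int_0^t(\delta_v-\alpha_v)dv}$ (which the paper calls $\bar c$) as the universal test element, and both invoke Proposition~\ref{bcccc} to pass between the habit-constrained set and the budget inequalities. Your forward inclusion via the perturbation $(x-\eta,q,z+\eta)$ is exactly the interior argument the paper has in mind.

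There is, however, a genuine gap in your reverse inclusion. You claim that equi-Lipschitzness of the family $(f_Z)_{Z\in\mathcal{Z}^s}$, furnished by Assumption~\ref{endowassum} and~\eqref{ass3.1}, ``restores the open-set structure'' of $\{f_Z>0\ \forall Z\}$. It does not: bounded coefficients give $|f_Z(v)-f_Z(v')|\leq L\|v-v'\|$ uniformly in $Z$, but to get a ball on which every $f_Z\geq 0$ you need $\inf_{Z}f_Z(x,q,z)>0$, and pointwise strict positivity over an infinite index set does not deliver this. A bounded convex family of linear functionals can have $f_Z(v_0)>0$ for every $Z$ while the infimum is zero, in which case $v_0$ sits on the boundary of $\{f_Z\geq 0\ \forall Z\}$, not the interior. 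So your argument, as written, does not close.

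The paper handles this issue differently (and, for $z>0$, in fact also leaves the passage from ``$\geq$'' to ``$>$'' unelaborated). Where it is explicit is the $z=0$ case: rather than arguing openness of the strict set, the paper proves the \emph{closure} identity $\text{cl}\,\mathcal{B}_1=\mathcal{B}_2=\mathcal{B}_3$, where $\mathcal{B}_1=\{(x,q):(x,\bar{\mathbf{0}},q)\in\mathcal{K}\}$, $\mathcal{B}_2=\{(x,q):\mathcal{H}((x,\bar{\mathbf{0}}),q)\neq\emptyset\}$, and $\mathcal{B}_3=\{(x,q):x+\mathbb{E}[q\cdot\mathcal{E}_TZ_T^0]\geq 0\ \forall Z\in\mathcal{Z}^s\}$. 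The nontrivial inclusion $\text{cl}\,\mathcal{B}_1\subseteq\mathcal{B}_2$ requires a compactness argument: one takes a sequence in $\mathcal{B}_1$, passes to convex combinations using the total-variation bound of Lemma~\ref{bdp} and the Campi--Schachermayer convergence result, and applies Lemma~\ref{lemsp} to the limit. Once the closure identity is in hand, taking interiors of both sides (using that $\mathcal{B}_1$ is already open and convex) gives the desired characterization without ever needing the strict set to be open a priori. This closure route is more robust than the equi-Lipschitz heuristic and is the argument you should adapt for the full statement.
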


\subsection{Path Dependence Reduction}

To deal with the path-dependence structure in the optimization problem, we will follow the trick in \cite{Yu1} and define the auxiliary process $\tilde{c}_t=c_t-F(c)_t$, $t\in[0,T]$. Denote the set of all auxiliary processes  by
\begin{equation}\label{comebackA}
\bar{\mathcal{A}}(x,q,z)\triangleq \{\tilde{c}\in\mathbb{L}_+^0(\Omega\times[0,T]): \tilde{c}_t=c_t-F(c)_t,\ \ \forall t\in[0,T],\ c\in\mathcal{A}(x,q,z)\}.
\end{equation}
The following lemma is a consequence of its definition.

\begin{lemma}
For each fixed $(x,q,z)\in\bar{\mathcal{L}}$, there is a one to one correspondence between sets $\mathcal{A}(x,q,z)$ and $\bar{\mathcal{A}}(x,q,z)$, and for all $(x,q,z)\in\bar{\mathcal{L}}$, we have $\bar{\mathcal{A}}(x,q,z)\neq \emptyset$.
\end{lemma}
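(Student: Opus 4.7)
The plan has two parts, corresponding to the two assertions.

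\textbf{Part 1: the bijection between $\mathcal{A}(x,q,z)$ and $\bar{\mathcal{A}}(x,q,z)$.} I would start from the observation that, by differentiating the exponential-weighted integral defining $F(c)_t$, the habit process satisfies the linear ODE
\begin{equation}
dF(c)_t=(\delta_t c_t-\alpha_t F(c)_t)\,dt,\qquad F(c)_0=z.\nonumber
\end{equation}
Substituting the relation $c_t=\tilde{c}_t+F(c)_t$ turns this into a linear inhomogeneous ODE in $F(c)$ driven by the datum $\tilde{c}$,
\begin{equation}
dF(c)_t=\bigl(\delta_t\tilde{c}_t+(\delta_t-\alpha_t)F(c)_t\bigr)\,dt,\nonumber
\end{equation}
whose unique solution (via the integrating factor $e^{-\int_0^t(\delta_v-\alpha_v)dv}$, which is well defined by the integrability hypothesis $\int_0^t(\delta_u-\alpha_u)du<\infty$) is
\begin{equation}
F(c)_t=e^{\int_0^t(\delta_v-\alpha_v)dv}\Bigl(z+\int_0^t\delta_s\tilde{c}_s\,e^{-\int_0^s(\delta_v-\alpha_v)dv}\,ds\Bigr).\nonumber
\end{equation}
Consequently the map $c\mapsto\tilde{c}$ is invertible, with explicit inverse
\begin{equation}
c_t=\tilde{c}_t+ze^{\int_0^t(\delta_v-\alpha_v)dv}+\int_0^t\delta_s\tilde{c}_s\,e^{\int_s^t(\delta_v-\alpha_v)dv}\,ds.\nonumber
\end{equation}
This is a measurable, pathwise linear bijection, so $c\in\mathbb{L}_+^0(\Omega\times[0,T])$ iff $\tilde{c}\in\mathbb{L}_+^0(\Omega\times[0,T])$. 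Moreover $\tilde{c}\geq 0$ on $\Omega\times[0,T]$ is exactly the addictive habit constraint $c_t\geq F(c)_t$. Combined with the fact that the budget condition in Proposition \ref{bcccc} only depends on $c$, the formulas give a one-to-one correspondence between the two sets.

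\textbf{Part 2: non-emptiness of $\bar{\mathcal{A}}(x,q,z)$ on $\bar{\mathcal{L}}$.} I would exhibit an explicit element, namely $\tilde{c}\equiv 0$, which corresponds under the inverse map above to the ``pure habit'' consumption
\begin{equation}
c_t^\ast=ze^{\int_0^t(\delta_v-\alpha_v)dv},\qquad t\in[0,T].\nonumber
\end{equation}
Then automatically $c_t^\ast\geq F(c^\ast)_t$ (with equality), so the addictive constraint holds. It remains to verify $c^\ast\in\mathcal{C}_{x,q\cdot\mathcal{E}_T}$; by the consumption budget constraint of Proposition \ref{bcccc}, this amounts to
\begin{equation}
z\,\mathbb{E}\Bigl[\int_0^T e^{\int_0^t(\delta_v-\alpha_v)dv}Z_t^0\,dt\Bigr]\leq x+\mathbb{E}[q\cdot\mathcal{E}_TZ_T^0],\qquad\forall\,Z\in\mathcal{Z}^s,\nonumber
\end{equation}
and this is precisely the strict inequality characterizing $\bar{\mathcal{L}}$ in Lemma \ref{equ1} (including the $z=0$ boundary, where it reduces to $x+\mathbb{E}[q\cdot\mathcal{E}_TZ_T^0]>0$ for all SCPS, so $c^\ast\equiv 0$ works). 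Hence $c^\ast\in\mathcal{A}(x,q,z)$ and $0=\tilde{c}\in\bar{\mathcal{A}}(x,q,z)$.

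\textbf{Where the work is.} There is no real analytic obstacle: the bijection reduces to solving a scalar linear ODE with a measurable but well-behaved coefficient, and the existence of a feasible consumption is read off directly from the characterization in Lemma \ref{equ1}. The only place that needs care is making sure the integrating-factor manipulations are justified pathwise (i.e.\ that the finiteness hypothesis $\int_0^t(\delta_v-\alpha_v)dv<\infty$ is used everywhere the exponential appears, and that Fubini is applicable when checking $\mathbb{E}[\int_0^T c_t^\ast Z_t^0\,dt]$ against Lemma \ref{equ1}), and that the constructed $c$ is in $\mathbb{L}_+^0(\Omega\times[0,T])$, which is automatic since the inverse formula is a positive linear combination of $\tilde{c}$ and the initial habit $z\geq 0$.
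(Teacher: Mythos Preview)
Your proof is correct. The paper does not give an explicit proof of this lemma, stating only that it ``is a consequence of its definition,'' so your argument supplies the details the paper omits; the ODE inversion you carry out in Part~1 is exactly the formula the paper later uses (without derivation) in the proof of Lemma~\ref{equivA}.

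One minor remark on Part~2: your route through Lemma~\ref{equ1} is valid, but there is an even more direct argument that is presumably what the paper has in mind by ``consequence of its definition.'' Namely, for $z>0$ the set $\bar{\mathcal{L}}$ is defined as the \emph{interior} of $\{(x,q,z):\mathcal{A}(x,q,z)\neq\emptyset\}$, and the interior of a set is contained in the set, so $\mathcal{A}(x,q,z)\neq\emptyset$ is immediate; for $z=0$ one has $(x,\bar{\mathbf{0}},q)\in\mathcal{K}$, so $\mathcal{H}((x,\bar{\mathbf{0}}),q)\neq\emptyset$ and $c\equiv 0$ (with $F(0)\equiv 0$) lies in $\mathcal{A}(x,q,0)$. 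Then $\bar{\mathcal{A}}(x,q,z)\neq\emptyset$ follows because it is the image of $\mathcal{A}(x,q,z)$ under the map $c\mapsto c-F(c)$. Your explicit construction via $\tilde{c}\equiv 0$ and Lemma~\ref{equ1} reaches the same conclusion and is arguably more informative, since it identifies the minimal feasible consumption $c^\ast_t=ze^{\int_0^t(\delta_v-\alpha_v)dv}$.

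A small wording issue in Part~1: the sentence ``$c\in\mathbb{L}_+^0$ iff $\tilde{c}\in\mathbb{L}_+^0$'' is not literally true (a nonnegative $c$ need not satisfy $c\geq F(c)$). What you mean, and what you correctly state in the following sentence, is that $\tilde{c}\geq 0$ encodes the habit constraint, while nonnegativity of the reconstructed $c$ follows because all terms in the inverse formula are nonnegative. The logic is fine; only the phrasing of that one line could be tightened.
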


For each SCPS $Z\in\mathcal{Z}^s$, we introduce the following important auxiliary optional process
\begin{equation}
\Gamma_t\triangleq Z^0_t+\delta_t\mathbb{E}\Big[\int_t^T e^{\int_t^s(\delta_v-\alpha_v)dv} Z_s^0ds\Big|\mathcal{F}_t\Big],\ \ \forall t\in[0,T],\nonumber
\end{equation}
and define the set of all these auxiliary processes by
\begin{equation}\label{initialdualM}
\widetilde{\mathcal{M}}\triangleq \Big\{\Gamma\in\mathbb{L}_+^0(\Omega\times[0,T]): \Gamma_t\triangleq Z^0_t+\delta_t\mathbb{E}\Big[\int_t^T e^{\int_t^s(\delta_v-\alpha_v)dv} Z_s^0ds\Big|\mathcal{F}_t\Big],\ \forall t\in[0,T],\ Z\in\mathcal{Z}^s \Big\}.
\end{equation}
Since stochastic discounting processes $\delta$ and $\alpha$ are assumed to be unbounded in general, under condition $(\ref{ass3.1})$, the auxiliary dual process $\Gamma$ is well defined, however, it is not necessarily integrable.

The next equivalent characterization of set $\bar{\mathcal{A}}(x,q,z)$ is crucial to reduce the path dependence feature and embed our problem into an auxiliary abstract optimization problem on the product space.

\begin{lemma}\label{equivA}
For $(x,q,z)\in\bar{\mathcal{L}}$, we can rewrite $\bar{\mathcal{A}}(x,q,z)$ as
\begin{equation}
\bar{\mathcal{A}}(x,q,z)=\Big\{\tilde{c}\in\mathbb{L}_+^0(\Omega\times[0,T]):\mathbb{E}\Big[\int_0^T\tilde{c}_t\Gamma_tdt\Big]\leq x-z\mathbb{E}\Big[\int_0^T\tilde{w}_t\Gamma_tdt\Big]+\mathbb{E}[q\cdot\mathcal{E}_T\Gamma_T],\ \forall \Gamma\in\widetilde{\mathcal{M}}\Big\},\nonumber
\end{equation}
where $\tilde{w}_t\triangleq e^{\int_0^t(-\alpha_v)dv}$ for $t\in[0,T]$.
\end{lemma}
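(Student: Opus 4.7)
The plan is to establish the equivalence by explicitly translating the budget inequality that characterizes $\mathcal{A}(x,q,z)$ (via Proposition~\ref{bcccc}) into the corresponding inequality for $\tilde{c}$ and $\Gamma$. First I would observe that the addictive habit condition $c_t \geq F(c)_t$ is nothing but $\tilde{c}_t \geq 0$, and that the map $c \mapsto \tilde{c}:=c-F(c)$ is a bijection between the two sets: for any nonnegative optional $\tilde{c}$, the linear ODE $dF_t=(\delta_t\tilde{c}_t+(\delta_t-\alpha_t)F_t)dt$ with $F_0=z$ has the unique closed-form solution
\[
F(c)_t = z\,e^{\int_0^t(\delta_v-\alpha_v)dv} + \int_0^t \delta_s\tilde{c}_s\,e^{\int_s^t(\delta_v-\alpha_v)dv}\,ds,
\]
from which $c := \tilde{c} + F(c)$ is recovered, and by construction the habit process of this $c$ is $F$.

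Next I would plug this formula into $\mathbb{E}[\int_0^T c_t Z^0_t dt]$ and apply Tonelli (all integrands are nonnegative, so integrability is a non-issue) to interchange the order of integration in the resulting double integral:
\[
\mathbb{E}\Big[\int_0^T\!\!\int_0^t \delta_s\tilde{c}_s\, e^{\int_s^t(\delta_v-\alpha_v)dv}Z^0_t\,ds\,dt\Big] = \mathbb{E}\Big[\int_0^T \delta_s\tilde{c}_s \int_s^T e^{\int_s^t(\delta_v-\alpha_v)dv}Z^0_t\,dt\,ds\Big].
\]
Conditioning on $\mathcal{F}_s$ (using that $\delta_s\tilde{c}_s$ is $\mathcal{F}_s$-measurable by optionality) turns the inner $dt$-integral into $\Gamma_s - Z^0_s$ by the very definition of $\widetilde{\mathcal{M}}$, yielding after simplification
\[
\mathbb{E}\Big[\int_0^T c_t Z^0_t dt\Big] = \mathbb{E}\Big[\int_0^T \tilde{c}_t\,\Gamma_t\,dt\Big] + z\,\mathbb{E}\Big[\int_0^T e^{\int_0^t(\delta_v-\alpha_v)dv}Z^0_t\,dt\Big].
\]
A parallel Tonelli-based calculation, using the elementary identity $\int_0^s\delta_t e^{\int_t^s\delta_v dv}dt = e^{\int_0^s\delta_v dv}-1$, then shows
\[
\mathbb{E}\Big[\int_0^T \tilde{w}_t\,\Gamma_t\,dt\Big] = \mathbb{E}\Big[\int_0^T e^{\int_0^t(\delta_v-\alpha_v)dv}Z^0_t\,dt\Big],
\]
so the initial-habit term above is exactly $z\,\mathbb{E}[\int_0^T\tilde{w}_t\Gamma_t dt]$. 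Since $\Gamma_T = Z^0_T$ directly from the definition of $\Gamma$, the endowment contribution $\mathbb{E}[q\cdot\mathcal{E}_T Z^0_T]$ becomes $\mathbb{E}[q\cdot\mathcal{E}_T\,\Gamma_T]$, and the budget inequality of Proposition~\ref{bcccc} rearranges verbatim into the claim.

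To close the equivalence I would use that the map $Z \mapsto \Gamma$ surjects $\mathcal{Z}^s$ onto $\widetilde{\mathcal{M}}$ by the definition (\ref{initialdualM}), so the family of inequalities ``for all $Z \in \mathcal{Z}^s$'' is the same family as ``for all $\Gamma \in \widetilde{\mathcal{M}}$''. The main technical obstacle I anticipate is handling the Tonelli swaps and tower-property conditioning when $\alpha$ and $\delta$ are merely nonnegative and unbounded under Assumption~\ref{asssss}: positivity of all integrands removes the integrability hypothesis for Tonelli, and the conditioning step is standard for optional $\tilde{c}$ via approximation by elementary optional processes. The only remaining subtlety is that $\Gamma$ itself need not be integrable, but this does not impair the argument, since the inequality is an extended-real statement that becomes a genuine finite inequality precisely when $\tilde{c}$ lies in the candidate primal set, and in that case all quantities appearing on the right-hand side are finite by the condition (\ref{ass3.1}) and Assumption~\ref{endowassum}.
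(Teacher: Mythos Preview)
Your proposal is correct and follows essentially the same route as the paper's proof: express $c$ in terms of $\tilde{c}$ via the closed-form solution of the habit ODE, apply Fubini--Tonelli to rewrite $\mathbb{E}[\int_0^T c_t Z^0_t\,dt]$ as $z\,\mathbb{E}[\int_0^T w_t Z^0_t\,dt]+\mathbb{E}[\int_0^T\tilde{c}_t\Gamma_t\,dt]$, then repeat the computation with $\tilde{w}$ in place of $\tilde{c}$ to identify the $z$-term, and finally note $\Gamma_T=Z^0_T$. Your version is slightly more explicit in two places---the elementary identity $\int_0^s\delta_t e^{\int_t^s\delta_v dv}dt=e^{\int_0^s\delta_v dv}-1$ for the $\tilde{w}$ step, and the remark that surjectivity of $Z\mapsto\Gamma$ onto $\widetilde{\mathcal{M}}$ is what converts ``$\forall Z\in\mathcal{Z}^s$'' into ``$\forall\Gamma\in\widetilde{\mathcal{M}}$''---but these are elaborations of the same argument rather than a different approach.
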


In order to build the conjugate duality, we need to enlarge the effective domain $\bar{\mathcal{L}}$ to a natural domain, otherwise the constraint on the domain will affect the definition of the correct dual problem. First, the primal set $\bar{\mathcal{A}}(x,q,z)$ needs to be enlarged to the following abstract version
\begin{equation}
\widetilde{\mathcal{A}}(x,q,z)\triangleq \Big\{\tilde{c}\in\mathbb{L}_+^0(\Omega\times[0,T]): \mathbb{E}\Big[\int_0^T\tilde{c}_t\Gamma_tdt\Big]\leq x-z\mathbb{E}\Big[\int_0^T\tilde{w}_t\Gamma_tdt\Big]+\mathbb{E}[q\cdot\mathcal{E}_T\Gamma_T],\ \forall \Gamma\in\widetilde{\mathcal{M}}\Big\},\nonumber
\end{equation}
where now $(x,q,z)\in\mathbb{R}^{N+2}$. Second, we need to consider the enlarged domain $\mathcal{L}$
\begin{equation}
\mathcal{L}\triangleq \text{int}\{(x,q,z)\in\mathbb{R}^{N+2}:\ \widetilde{\mathcal{A}}(x,q,z)\neq\emptyset\}.\nonumber
\end{equation}

The next result shows that the set $\mathcal{L}$ is indeed the enlargement of the effective domain $\bar{\mathcal{L}}$.

\begin{lemma}\label{equ2}
We can characterize the set $\mathcal{L}$ equivalently by
\begin{equation}
\mathcal{L}=\text{int}\Big\{(x,q,z)\in\mathbb{R}^{N+2}: x+\mathbb{E}[q\cdot\mathcal{E}_TZ_T^0]-z\mathbb{E}\Big[\int_0^Te^{\int_0^t(\delta_v-\alpha_v)dv}Z_t^0dt\Big]\geq 0,\ \forall Z\in\mathcal{Z}^s \Big\}.\nonumber
\end{equation}
\end{lemma}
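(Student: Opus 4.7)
The plan is to unfold the definition of $\widetilde{\mathcal{A}}(x,q,z)$, observe that the zero process is always a natural test function, and then convert the constraint indexed by $\Gamma\in\widetilde{\mathcal{M}}$ into the desired form indexed by $Z\in\mathcal{Z}^s$ by means of a Fubini calculation together with the fact that $\Gamma_T=Z^0_T$.

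First, I would argue that
\[
\{(x,q,z)\in\mathbb{R}^{N+2}:\widetilde{\mathcal{A}}(x,q,z)\neq\emptyset\}=\Big\{(x,q,z):\,x-z\,\mathbb{E}\Big[\int_0^T\tilde{w}_t\Gamma_t\,dt\Big]+\mathbb{E}[q\cdot\mathcal{E}_T\Gamma_T]\geq 0,\ \forall\,\Gamma\in\widetilde{\mathcal{M}}\Big\}.
\]
The inclusion $\subseteq$ is immediate since $\tilde{c}\geq 0$ forces $\mathbb{E}[\int_0^T\tilde{c}_t\Gamma_t\,dt]\geq 0$, so the right-hand side of the defining inequality of $\widetilde{\mathcal{A}}(x,q,z)$ must itself be nonnegative for every $\Gamma$. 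For the reverse, simply take $\tilde{c}\equiv 0\in\mathbb{L}^0_+(\Omega\times[0,T])$, which verifies the defining inequality whenever the right-hand side is nonnegative for all $\Gamma$; hence $\widetilde{\mathcal{A}}(x,q,z)$ is nonempty.

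Next, the main computation is to show that for each $Z\in\mathcal{Z}^s$ with associated $\Gamma$,
\[
\mathbb{E}\Big[\int_0^T\tilde{w}_t\Gamma_t\,dt\Big]=\mathbb{E}\Big[\int_0^T e^{\int_0^t(\delta_v-\alpha_v)dv}Z^0_t\,dt\Big].
\]
Plugging in the definition of $\Gamma_t$ and applying Fubini (justified, since under assumption $(\ref{ass3.1})$ the relevant integrals are $\mathbb{P}$-integrable), the double-integral contribution equals
\[
\mathbb{E}\Big[\int_0^T Z^0_s\int_0^s\delta_t e^{-\int_0^t\alpha_v dv}e^{\int_t^s(\delta_v-\alpha_v)dv}\,dt\,ds\Big]=\mathbb{E}\Big[\int_0^T Z^0_s\big(e^{\int_0^s(\delta_v-\alpha_v)dv}-\tilde{w}_s\big)\,ds\Big],
\]
where the inner $dt$ integral is evaluated by recognizing the integrand as a logarithmic derivative; adding the $\int_0^T\tilde{w}_t Z^0_t\,dt$ contribution produces the stated identity. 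Since $\Gamma_T=Z^0_T$ (the conditional-expectation term in the definition of $\Gamma$ vanishes at $t=T$), we also have $\mathbb{E}[q\cdot\mathcal{E}_T\Gamma_T]=\mathbb{E}[q\cdot\mathcal{E}_T Z^0_T]$.

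Substituting these two identities into the characterization of $\{\widetilde{\mathcal{A}}(x,q,z)\neq\emptyset\}$ and taking interiors yields exactly
\[
\mathcal{L}=\text{int}\Big\{(x,q,z)\in\mathbb{R}^{N+2}:\,x+\mathbb{E}[q\cdot\mathcal{E}_T Z^0_T]-z\,\mathbb{E}\Big[\int_0^T e^{\int_0^t(\delta_v-\alpha_v)dv}Z^0_t\,dt\Big]\geq 0,\ \forall\,Z\in\mathcal{Z}^s\Big\}.
\]
The only delicate step is the Fubini interchange and the verification that the iterated integrals are finite uniformly enough to commute; this is where I expect the main technical obstacle, but the bound $(\ref{ass3.1})$ together with nonnegativity of $\delta,\alpha$ makes the argument routine once the exponent manipulation $-\int_0^t\alpha_v\,dv+\int_t^s(\delta_v-\alpha_v)\,dv=\int_0^s(\delta_v-\alpha_v)\,dv-\int_0^t\delta_v\,dv$ is carried out carefully.
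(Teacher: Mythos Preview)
Your proposal is correct and follows essentially the same route as the paper: both arguments reduce to showing that $\widetilde{\mathcal{A}}(x,q,z)\neq\emptyset$ is equivalent to nonnegativity of the right-hand side for all $\Gamma\in\widetilde{\mathcal{M}}$ (using $\tilde{c}\equiv 0$ for one direction and $\tilde{c}\geq 0$ for the other), and then translate the $\Gamma$-indexed condition into the $Z$-indexed one via the identity $\mathbb{E}[\int_0^T\tilde{w}_t\Gamma_t\,dt]=\mathbb{E}[\int_0^T e^{\int_0^t(\delta_v-\alpha_v)dv}Z^0_t\,dt]$ together with $\Gamma_T=Z^0_T$. The only cosmetic difference is that the paper invokes this identity by pointing back to the computation in the proof of Lemma~\ref{equivA}, whereas you re-derive it directly via Fubini and the logarithmic-derivative trick.
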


Based on the abstract primal set $\widetilde{\mathcal{A}}(x,q,z)$, we define the \textbf{Auxiliary Primal Utility Maximization Problem} by
\begin{equation}\label{newpro}
\tilde{u}(x,q,z)\triangleq \sup_{\tilde{c}\in\widetilde{\mathcal{A}}(x,q,z)}\mathbb{E}\Big[\int_0^T U(t,\tilde{c}_t)dt\Big],\ \ (x,q,z)\in\mathcal{L}.
\end{equation}

From the definition of $\bar{\mathcal{A}}(x,q,z)$ for $(x,q,z)\in\bar{\mathcal{L}}$ and $\widetilde{\mathcal{A}}(x,q,z)$ for $(x,q,z)\in\mathcal{L}$, Lemma $\ref{equ1}$ and Lemma $\ref{equ2}$ imply that $\bar{\mathcal{L}}\subset\mathcal{L}$. If we restrict $(x,q,z)\in\bar{\mathcal{L}}\subset\mathcal{L}$, the following equivalence holds
\begin{equation}
\bar{\mathcal{A}}(x,q,z)=\widetilde{\mathcal{A}}(x,q,z).\nonumber
\end{equation}
The equivalence between value functions follows, i.e.,
\begin{equation}
u(x,q,z)=\tilde{u}(x,q,z).\nonumber
\end{equation}
Moreover, $(c_t^{\ast})_{t\in[0,T]}$ is the optimal solution for $u(x,q,z)$ if and only if $(\tilde{c}_t^{\ast})_{t\in[0,T]}=(c_t^{\ast}-F(c^{\ast})_t)_{t\in[0,T]}$ is the optimal solution for $\tilde{u}(x,q,z)$. Therefore, we embedded our path-dependent utility maximization problem $(\ref{primalp})$ into the auxiliary abstract utility maximization problem $(\ref{newpro})$ without habit formation, however with the additional shadow random endowment $\tilde{w}$.\\
\ \\

\section{The Dual Problem and Main Results}\label{section3}
Similar to \cite{kram04} and \cite{Yu1}, we first introduce the set $\mathcal{R}$
\begin{equation}
\mathcal{R}\triangleq \text{ri}\Big\{(y,r)\in\mathbb{R}^{N+2}: xy+(-z,q)\cdot r\geq 0\ \text{for all}\ (x,q,z)\in\mathcal{L}\Big\},\nonumber
\end{equation}
where $(-z,q)\cdot r\triangleq -zr^0+\sum_{i=1}^{N}q^ir^i$ for $r=(r^0,\ldots,r^N)\in\mathbb{R}^{N+1}$.

For any $(y,r)\in\mathcal{R}$, we define the dual set $\widetilde{\mathcal{Y}}(y,r)$ as a proper extension of the auxiliary set $\widetilde{\mathcal{M}}$ by
\begin{equation}
\widetilde{\mathcal{Y}}(y,r)\triangleq \Big\{\Gamma\in\mathbb{L}_+^0(\Omega\times[0,T]):\mathbb{E}\Big[\int_0^T\tilde{c}_t\Gamma_tdt\Big]\leq xy+(-z,q)\cdot r\ \text{for all}\ \tilde{c}\in\widetilde{\mathcal{A}}(x,q,z)\ \text{and}\ (x,q,z)\in\mathcal{L}\Big\}.\nonumber
\end{equation}

The \textbf{Auxiliary Dual Utility Maximization Problem} to $(\ref{newpro})$ can now be formulated as
\begin{equation}\label{axdual}
\tilde{v}(y,r)\triangleq \inf_{\Gamma\in\widetilde{\mathcal{Y}}(y,r)}\mathbb{E}\Big[\int_0^TV(t,\Gamma_t)dt\Big],\ \ (y,r)\in\mathcal{R}.
\end{equation}

Our main result is stated as the following theorem on the existence of optimal solutions to the abstract optimization problems and the conjugate duality between two value functions.

\begin{theorem}\label{mainth}
Let Assumption $\ref{assumZ}$, $\ref{endowassum}$, $\ref{notrep}$, $\ref{ASSUV}$ and $\ref{asssss}$ hold. Moreover, let
$\tilde{u}(x,q,z)<\infty$ for some $(x,q,z)\in\mathcal{L}$. Then we have
\begin{itemize}
\item[(i)] The function $\tilde{u}(x,q,z)$ is $(-\infty,\infty)$-valued on $\mathcal{L}$ and $\tilde{v}(y,r)$ is $(-\infty,\infty)$-valued on $\mathcal{R}$. The conjugate duality of value functions $\tilde{u}$ and $\tilde{v}$ holds:
\begin{equation}
\begin{split}
\tilde{u}(x,q,z)&=\inf_{(y,r)\in\mathcal{R}}\Big(\tilde{v}(y,r)+xy+(-z,q)\cdot r\Big),\ \ \ (x,q,z)\in\mathcal{L},\\
\tilde{v}(y,r)&=\sup_{(x,q,z)\in\mathcal{L}}\Big(\tilde{u}(x,q,z)-xy-(-z,q)\cdot r\Big),\ \ \ (y,r)\in\mathcal{R}.\nonumber
\end{split}
\end{equation}
\item[(ii)] The optimal solution $\Gamma^{\ast}(y,r)$ to the problem $(\ref{axdual})$ exists and is unique for all $(y,r)\in\mathcal{R}$.
\item[(iii)] The optimal solution $\tilde{c}^{\ast}(x,q,z)$ to the problem $(\ref{newpro})$ exists and is unique for all $(x,q,z)\in\mathcal{L}$. Moreover, there exists a representation in the equivalent class such that $\tilde{c}^{\ast}_t(x,q,z)>0$, $\mathbb{P}$-a.s. for $t\in[0,T]$.
\item [(iv)] The superdifferential of $\tilde{u}$ maps $\mathcal{L}$ into $\mathcal{R}$, i.e., $\partial \tilde{u}(x,q,z)\subset \mathcal{R}$ for $(x,q,z)\in\mathcal{L}$. In addition, if $(y,r)\in\partial\tilde{u}(x,q,z)$, $\tilde{c}^{\ast}(x,q,z)$ and $\Gamma^{\ast}(y,r)$ are related by
\begin{equation}
\begin{split}
&\Gamma_t^{\ast}(y,r)=U'(t,\tilde{c}_t^{\ast}(x,q,z)),\ \ \text{or}\ \ \ \tilde{c}_t^{\ast}(x,q,z)=I(t,\Gamma_t^{\ast}(y,r)),\ \ t\in[0,T],\\
&\mathbb{E}\Big[\int_0^T\tilde{c}_t^{\ast}(x,q,z)\Gamma_t^{\ast}(y,r)dt\Big]=xy+(-z,q)\cdot r.\nonumber
\end{split}
\end{equation}
\item[(v)] If we restrict the choice of initial wealth $x$, initial holding $q$ and initial habit formation $z$ such that $(x,q,z)\in\bar{\mathcal{L}}\subset\mathcal{L}$, the optimal solution $c^{\ast}(x,q,z)$ to the primal utility maximization problem $(\ref{primalp})$ exists and is unique. In addition, we have that for any $t\in[0,T]$,
\begin{equation}
\begin{split}
\tilde{c}_t^{\ast}(x,q,z)&=c_t^{\ast}(x,q,z)-F(c^{\ast})_t(x,q,z),\\
\text{or}\ \ \ c_t^{\ast}(x,q,z)&=\tilde{c}^{\ast}_{t}(x,q,z)+\int_{0}^{t}\delta_{s}e^{\int_{s}^{t}(\delta_{v}-\alpha_{v})dv}\tilde{c}^{\ast}_{s}(x,q,z)ds+ze^{\int_0^t(\delta_v-\alpha_v)dv}.\nonumber
\end{split}
\end{equation}
\end{itemize}
\end{theorem}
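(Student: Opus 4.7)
The plan is to establish Theorem \ref{mainth} by first proving a bipolar relation between the auxiliary primal set $\widetilde{\mathcal{A}}(x,q,z)$ and the auxiliary dual set $\widetilde{\mathcal{Y}}(y,r)$, and then invoking an abstract convex duality theorem in the spirit of \cite{kram04} as adapted to the habit-formation setting in \cite{Yu1}. By Lemma \ref{equivA}, each $\Gamma\in\widetilde{\mathcal{M}}$, built from a SCPS $Z\in\mathcal{Z}^s$, acts as a pricing density for the auxiliary consumption variable $\tilde{c}$ via
\begin{equation}
\mathbb{E}\Big[\int_0^T \tilde c_t\Gamma_t\,dt\Big] \leq x + \mathbb{E}[q\cdot\mathcal{E}_T\Gamma_T] - z\,\mathbb{E}\Big[\int_0^T \tilde w_t\Gamma_t\,dt\Big],
\end{equation}
so the problem fits the template of a time-separable utility maximization over an abstract budget set, but with a shadow random endowment given by the discount-factor process $\tilde w$. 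Lemma \ref{budlem} and Lemma \ref{lemsp} from Section \ref{section1} play the role that the optional decomposition theorem plays in the frictionless case.

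First I would verify that $\widetilde{\mathcal{A}}(x,q,z)$ and $\widetilde{\mathcal{Y}}(y,r)$ are convex, solid subsets of $\mathbb{L}_+^0(\Omega\times[0,T])$ that are closed in the topology of convergence in $\bar{\mathbb{P}}$-measure, and that they are mutually polar under the bilinear pairing $(\tilde c,\Gamma)\mapsto \mathbb{E}[\int_0^T \tilde c_t\Gamma_t\,dt]$. Lemma \ref{equ1} and Lemma \ref{equ2} combined with Assumption \ref{asssss}(i) ensure that $\mathcal{L}$ and $\mathcal{R}$ are admissible non-empty convex domains. I would next show that the hypothesis $\tilde u(x,q,z)<\infty$ at one point propagates to finiteness on all of $\mathcal{L}$ using the RAE assumption at infinity, while the finiteness of $\tilde v$ on $\mathcal{R}$ will follow by combining the RAE assumption at zero with the integrability estimate $(\ref{ass4.1})$. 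Parts (i) and (ii) are then consequences of classical convex-analysis arguments: strict convexity of $V(t,\cdot)$ and closedness of $\widetilde{\mathcal{Y}}(y,r)$ yield the unique dual minimizer $\Gamma^{\ast}(y,r)$, and the conjugate duality between $\tilde u$ and $\tilde v$ follows from the bipolar identity and a minimax argument.

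With the dual optimizer in hand, parts (iii) and (iv) follow from the Fenchel identity $U(t,I(t,y)) = V(t,y) + y\,I(t,y)$ by setting $\tilde c_t^{\ast} := I(t,\Gamma_t^{\ast})$ for $(y,r)\in\partial \tilde u(x,q,z)$; the Inada condition $U'(t,0)=\infty$ forces $\Gamma_t^{\ast}<\infty$ $\bar{\mathbb{P}}$-a.s., hence $\tilde c_t^{\ast}>0$ in the appropriate representative, and the budget inequality saturates at the optimum, giving the complementary-slackness identity. For (v), the restriction $(x,q,z)\in\bar{\mathcal{L}}\subset\mathcal{L}$ forces $\bar{\mathcal{A}}(x,q,z)=\widetilde{\mathcal{A}}(x,q,z)$, and since the map $c\mapsto \tilde c=c-F(c)$ is a bijection between $\mathcal{A}(x,q,z)$ and $\bar{\mathcal{A}}(x,q,z)$ invertible by the explicit linear-ODE solution for $F(c)$, the unique $\tilde c^{\ast}$ corresponds to a unique financeable consumption $c^{\ast}$ with the stated relation.

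The hard part will be handling the case where $\alpha$ and $\delta$ are unbounded, in which the auxiliary dual process $\Gamma$ need not be integrable against $\bar{\mathbb{P}}$, so $\mathbb{E}[\int_0^T V(t,\Gamma_t)\,dt]$ is not obviously well-defined. Mirroring \cite{Yu1}, I would exploit the two-sided RAE conditions $(\ref{ass:AEU})$--$(\ref{ass:AEV})$ and the uniform-in-$t$ bounds $(\ref{up})$--$(\ref{down})$ to obtain polynomial-type two-sided controls on $V(t,y)$ in terms of $y\,I(t,y)$ and $|U(t,x)|$; combined with the super-hedging estimate $(\ref{ass3.1})$ on $\tilde w$ and the condition $(\ref{ass4.1})$, these ensure that $V(t,\Gamma_t)^-$ is uniformly integrable along minimizing sequences, so that the closedness, lower semicontinuity, and minimax steps of the Kramkov--Schachermayer scheme carry through. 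The super-hedging theorem of Section \ref{section1} is what makes this scheme applicable under transaction costs, and modulo this technical adaptation the remaining arguments parallel those of \cite{Yu1}.
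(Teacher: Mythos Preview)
Your proposal is correct and follows essentially the same route as the paper: the paper's proof of Theorem~\ref{mainth} consists of establishing the bipolar relation between $\widetilde{\mathcal{A}}(x,q,z)$ and $\widetilde{\mathcal{Y}}(y,r)$ (formalized there as Proposition~\ref{lpro}, supported by Lemmas~\ref{lem6}--\ref{lem8}) and then referring to Theorems~4.1 and~4.2 of \cite{Yu1} for the remainder, including the treatment of unbounded $\alpha,\delta$ via the two-sided RAE conditions exactly as you describe. Your sketch is in fact more explicit than the paper's own one-sentence proof, but the structure and the key ingredients are the same.
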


\begin{remark}
For $(x,q,z)\in\bar{\mathcal{L}}\subset\mathcal{L}$, if the optimal solution $\Gamma^{\ast}(y,r)$ of the auxiliary dual problem $(\ref{axdual})$ lies in the auxiliary dual set $\widetilde{\mathcal{M}}$ defined in $(\ref{initialdualM})$, i.e., there exists a SCPS $Z^{0,\ast}(y,r)$ such that
\begin{equation}\label{optimaldecomp}
\Gamma_t^{\ast}(y,r)=Z^{0,\ast}_t(y,r)+\delta_t\mathbb{E}\Big[\int_t^T e^{\int_t^s(\delta_v-\alpha_v)dv} Z_s^{0,\ast}(y,r)ds\Big|\mathcal{F}_t\Big],\ \ \forall t\in[0,T],
\end{equation}
the optimal consumption can be explicitly expressed by this SCPS $Z^{\ast}(y,r)\in\mathcal{Z}^s$ that
\begin{equation}\label{yuanlai}
\begin{split}
c_t^{\ast}(x,q,z)=& ze^{\int_0^t(\delta_v-\alpha_v)dv}+I\Big(t, Z^{0,\ast}_t(y,r)+\delta_t\mathbb{E}\Big[\int_t^T e^{\int_t^s(\delta_v-\alpha_v)dv} Z_s^{0,\ast}(y,r)ds\Big|\mathcal{F}_t\Big]\Big)\\
+&\int_{0}^{t}\delta_{s}e^{\int_{s}^{t}(\delta_{v}-\alpha_{v})dv}I\Big(s, Z^{0,\ast}_s(y,r)+\delta_s\mathbb{E}\Big[\int_s^T e^{\int_s^l(\delta_v-\alpha_v)dv} Z_l^{0,\ast}(y,r)dl\Big|\mathcal{F}_s\Big]\Big)ds,\ t\in[0,T].
\end{split}
\end{equation}
The impact of transaction costs on the optimal consumption stream is hidden implicitly in the definition of the dual set $\widetilde{\mathcal{M}}$ in $(\ref{initialdualM})$ and the choice of the SCPS $Z^{0,\ast}(y,r)$ in the decomposition form $(\ref{optimaldecomp})$. For general problems, we can not conclude that the optimal consumption is monotone in terms of the transaction costs $\Lambda$. Moreover, we can also observe that the optimal consumption depends intricately on discounting factors $\alpha$ and $\delta$. For instance, if $\delta$ increases, i.e., the consumption history has more weights in the $F(c^{\ast})$, the first term of the right hand side of $(\ref{yuanlai})$ increases but the second term of $(\ref{yuanlai})$ decreases and it is unclear if the third term of $(\ref{yuanlai})$ is monotone or not.
\end{remark}

In general, the dual optimizer $\Gamma^{\ast}(y,r)$ may not be in the set $\widetilde{\mathcal{M}}$ and hence $\Gamma^{\ast}(y,r)$ does not necessarily have the decomposition form in $(\ref{optimaldecomp})$ with a SCPS $Z^{0,\ast}(y,r)$. However, the dual optimizer may still have a nice decomposition form. In particular, we can derive the following special example with explicit properties on the optimal strategy.

\begin{corollary}\label{cormaincor}
Let us consider the market models with constant discounting factors $\delta$ and $\alpha$ and the Logarithmic Utility function $U(t,x)=\log x$. For the dual optimizer $\Gamma^{\ast}(y,r)$, define the process
\begin{equation}
Y^{\ast}_t(y,r)\triangleq \Gamma^{\ast}_t(y,r)-\delta_t\mathbb{E}\Big[\int_t^T\Gamma^{\ast}_s(y,r)e^{\int_t^s(-\alpha_v)dv} ds\Big|\mathcal{F}_t\Big],\ \ \forall t\in[0,T].\nonumber
\end{equation}
If the process $(Y_t)_{t\in[0,T]}$ is a strictly positive martingale, we have
\begin{equation}\label{diffdecompG}
\Gamma_t^{\ast}(y,r)=Y^{\ast}_t(y,r)+\delta_t\mathbb{E}\Big[\int_t^T e^{\int_t^s(\delta_v-\alpha_v)dv} Y_s^{\ast}(y,r)ds\Big|\mathcal{F}_t\Big],\ \ \forall t\in[0,T],
\end{equation}
and the corresponding optimal consumption strategy is given explicitly by
\begin{equation}\label{exlogconsum}
c_t^{\ast}(x,q,z)=\left\{
\begin{array}{rl}
ze^{(\delta-\alpha)t}+\frac{\delta-\alpha}{Y_t^{\ast}(y,r)(\delta e^{(\delta-\alpha)(T-t)}-\alpha)}+\int_0^t\frac{\delta-\alpha}{Y_s^{\ast}(y,r)}\frac{\delta e^{(\delta-\alpha)(t-s)}}{(\delta e^{(\delta-\alpha)(T-s)}-\alpha)}ds,&\ \ \delta\neq\alpha,\\
z+\frac{1}{Y_t^{\ast}(y,r)}\frac{1}{1+\delta(T-t)}+\int_0^t\frac{1}{Y_s^{\ast}(y,r)}\frac{\delta}{1+\delta(T-s)}ds,&\ \ \delta=\alpha,
\end{array}\right.
\end{equation}
for $t\in[0,T]$, where $(y,r)$ satisfies $xy+(-z,q)\cdot r=T$. The corresponding optimal habit formation or standard of living process is
\begin{equation}\label{exlogconhab}
F(c^{\ast}(x,q,z))_t=\left\{
\begin{array}{rl}
ze^{(\delta-\alpha)t}+\int_0^t\frac{\delta-\alpha}{Y_s^{\ast}(y,r)}\frac{\delta e^{(\delta-\alpha)(t-s)}}{(\delta e^{(\delta-\alpha)(T-s)}-\alpha)}ds,&\ \ \delta\neq \alpha,\\
z+\int_0^t\frac{1}{Y_s^{\ast}(y,r)}\frac{\delta}{1+\delta(T-s)}ds,&\ \ \delta=\alpha.
\end{array}\right.\end{equation}
In addition, we have the following the properties:
\begin{itemize}
\item[(i)] If $\delta-\alpha>0$ or $\delta=\alpha>0$, the standard of living process $(F(c^{\ast})_t)_{t\in[0,T]}$ is an increasing process in terms of time $t$.
\item[(ii)] If $\alpha=0$ and $\delta$ or $T$ is sufficiently large, the optimal consumption strategy asymptotically behaves like $c_t^{\ast}(x,q,z)\approx ze^{\delta t}$, $t\in[0,T]$, hence it almost satisfies the consumption ratcheting constraint, i.e., the consumption process is increasing in terms of time $t$.
\item[(iii)] If $\delta-\alpha\geq 0$, the process $(c_t^{\ast}(x,q,z)Y_t^{0,\ast}(y,r))_{t\in[0,T]}$ is a submartingale. If $Y_0^{\ast}=1$, there exists a probability measure $\mathbb{Q}^{\ast}\sim\mathbb{P}$ such that the optimal consumption $(c_t^{\ast}(x,q,z))_{t\in[0,T]}$ is a submartingale under $\mathbb{Q}^{\ast}$ where $\frac{d\mathbb{Q}^{\ast}}{d\mathbb{P}}=Y_T^{\ast}(y,r)$. If $\delta=\alpha=0$, the process $(c_t^{\ast}(x,q,z)Y_t^{0,\ast}(y,r))_{t\in[0,T]}$ is a martingale.
\item[(iv)] If $Y_0^{\ast}(y,r)=1$, the optimal initial consumption amount is explicitly given by
\begin{equation}\label{initialconm}
c_0^{\ast}(x,q,z)=\left\{
\begin{array}{rl}
z+\frac{\delta-\alpha}{\delta e^{(\delta-\alpha)T}-\alpha},&\ \ \delta\neq\alpha,\\
z+\frac{1}{1+\delta T},&\ \ \delta=\alpha.
\end{array}\right.
\end{equation}
\end{itemize}
\end{corollary}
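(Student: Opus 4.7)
My plan starts from Theorem \ref{mainth}. For $U(t,x)=\log x$ one has $I(t,y)=1/y$, so Theorem \ref{mainth}(iv) gives $\tilde c_t^*(x,q,z)=1/\Gamma_t^*(y,r)$ and hence
\[
xy+(-z,q)\cdot r \;=\; \mathbb{E}\Big[\int_0^T \tilde c_t^*(x,q,z)\,\Gamma_t^*(y,r)\,dt\Big] \;=\; \mathbb{E}\Big[\int_0^T 1\,dt\Big] \;=\; T,
\]
which is the normalization stated in the corollary.

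To prove the decomposition (\ref{diffdecompG}), I would introduce the kernel operators $(L_1 g)_t := \delta\,\mathbb{E}[\int_t^T e^{(\delta-\alpha)(s-t)} g_s\,ds\mid\mathcal{F}_t]$ and $(L_2 g)_t := \delta\,\mathbb{E}[\int_t^T e^{-\alpha(s-t)} g_s\,ds\mid\mathcal{F}_t]$. By construction $Y^*=(I-L_2)\Gamma^*$, so (\ref{diffdecompG}) is equivalent to $(I+L_1)(I-L_2)=I$, i.e.\ to the operator identity $L_1L_2=L_1-L_2$. A Fubini swap in the double conditional expectation together with the elementary computation $\int_t^u e^{(\delta-\alpha)(s-t)-\alpha(u-s)}\,ds = (e^{(\delta-\alpha)(u-t)}-e^{-\alpha(u-t)})/\delta$ establishes this identity, and the hypothesis that $Y^*$ is strictly positive and a martingale both legitimizes the inversion and makes $Y_T^*$ available later as a Radon--Nikod\'ym density.

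For the explicit formulas I specialize to constant $\delta,\alpha$ and pull $Y_t^*$ out of the conditional expectation using the martingale property; evaluating $\int_t^T e^{(\delta-\alpha)(s-t)}\,ds$ directly yields $\Gamma_t^* = Y_t^*\,\psi(t)$, where
\[
\psi(t)=\tfrac{\delta e^{(\delta-\alpha)(T-t)}-\alpha}{\delta-\alpha}\ (\delta\neq\alpha),\qquad \psi(t)=1+\delta(T-t)\ (\delta=\alpha).
\]
Hence $\tilde c_t^* = 1/(Y_t^*\psi(t))$, and substituting into the reconstruction formula of Theorem \ref{mainth}(v), $c_t^* = \tilde c_t^* + z e^{(\delta-\alpha)t} + \int_0^t \delta e^{(\delta-\alpha)(t-s)}\tilde c_s^*\,ds$, reproduces (\ref{exlogconsum}); dropping $\tilde c_t^*$ leaves (\ref{exlogconhab}).

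The four properties then follow by inspection. For (i), differentiating (\ref{exlogconhab}) in $t$ shows each summand is nonnegative whenever $\delta\geq\alpha$. For (ii), with $\alpha=0$ the last two terms of (\ref{exlogconsum}) share a factor $e^{-\delta(T-t)}$ and vanish against $ze^{\delta t}$ as $\delta T\to\infty$. For (iii), $\tilde c_t^* Y_t^* = 1/\psi(t)$ is deterministic, so $c_t^*Y_t^* = 1/\psi(t)+F(c^*)_tY_t^*$; since $F(c^*)$ is nonnegative and increasing (by (i)) and $Y^*$ is a positive martingale, the identity $\mathbb{E}[F(c^*)_sY_s^*\mid\mathcal{F}_t]-F(c^*)_tY_t^*=\mathbb{E}[(F(c^*)_s-F(c^*)_t)Y_s^*\mid\mathcal{F}_t]\geq 0$ makes $F(c^*)Y^*$ a submartingale, and setting $d\mathbb{Q}^*/d\mathbb{P}=Y_T^*$ with Bayes' rule transfers this to $c^*$ under $\mathbb{Q}^*$; when $\delta=\alpha=0$, $F(c^*)\equiv z$ and $\psi\equiv 1$, giving $c^*Y^*=1+zY^*$, a genuine martingale. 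Property (iv) is the $t=0$ evaluation of (\ref{exlogconsum}) with $Y_0^*=1$. The only step carrying real content is the operator identity $L_1L_2=L_1-L_2$; everything else is bookkeeping on top of Theorem \ref{mainth}.
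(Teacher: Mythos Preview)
Your proposal is correct and follows essentially the same route as the paper's proof: both derive $\tilde c_t^*=1/\Gamma_t^*$ and $xy+(-z,q)\cdot r=T$ from Theorem~\ref{mainth}(iv), establish the decomposition~(\ref{diffdecompG}) via Fubini and the tower property (your operator identity $L_1L_2=L_1-L_2$ is exactly this computation packaged abstractly), use the martingale property of $Y^*$ to pull it out of the conditional expectation and obtain $\Gamma_t^*=Y_t^*\psi(t)$, and then read off (\ref{exlogconsum})--(\ref{exlogconhab}) from Theorem~\ref{mainth}(v). Your treatment of~(iii) via the decomposition $c_t^*Y_t^*=1/\psi(t)+F(c^*)_tY_t^*$ together with the martingale identity $\mathbb{E}[(F(c^*)_s-F(c^*)_t)Y_s^*\mid\mathcal{F}_t]\geq 0$ is slightly cleaner than the paper's direct expansion of $c_t^*Y_t^*$, but the content is the same.
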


\begin{remark}
Again, if the dual optimizer happens in the set $\widetilde{\mathcal{M}}$, we must have $Y^{\ast}(y,r)=Z^0(y,r)$ for some SCPS $Z\in\mathcal{Z}^s$ which also implies that $Y_0^{\ast}(y,r)=1$. Therefore, assertions $(iii)$ and $(iv)$ of Corollary $\ref{cormaincor}$ obviously hold for the case $Y^{\ast}(y,r)=Z^0(y,r)$. Nevertheless, the dual optimizer $\Gamma^{\ast}(y,r)$ may have the decomposition form $(\ref{diffdecompG})$ for some martingale $Y^{\ast}(y,r)$ which is not a SCPS for the asset price process $S$ and transaction costs $\Lambda$.
\end{remark}

\section{Market Isomorphism in Models with Transaction Costs}\label{section4}
The following assumption on the non-negative optional discounting factors is mandated in this section.
\begin{assumption}\label{determdis}
The process $(\delta_t-\alpha_t)_{t\in[0,T]}$ is a deterministic function of time $t$.
\end{assumption}

Consider the same asset price process $S$ with the transaction costs $\Lambda$. The next theorem states an important observation that our utility maximization with habit formation is equivalent to a standard time-separable utility maximization on consumption in the modified market model with the change of num\'{e}raire and new random endowments.

\begin{proposition}\label{isotransc}
Let Assumption $\ref{determdis}$ hold. Given the initial wealth $x>0$ and initial habit $z\geq 0$, the original optimization problem $(\ref{primalp})$ is equivalent to the following time separable utility maximization problem on consumption under the change of num\'{e}raire:
\begin{equation}\label{numeraireG}
\hat{u}(x,q,z)\triangleq \sup_{\hat{c}\in\hat{\mathcal{A}}(x,q,z)}\mathbb{E}\Big[\int_0^TU(t,\frac{\hat{c}_t}{G_t})dt\Big].
\end{equation}
Here, we consider the same underlying asset price process $S$ and transaction costs $\Lambda$. The set $\hat{\mathcal{A}}(x,q,z)$ is the set of all $(x, N_T)$-financeable consumption processes with the new random endowments $N_T\triangleq q\cdot\mathcal{E}_T-z\int_0^Te^{\int_0^t(-\alpha_v)dv}G_tdt$ and the external auxiliary process
\begin{equation}\label{ohG}
G_t\triangleq 1+\delta_t\int_t^Te^{\int_t^s(\delta_v-\alpha_v)dv}ds,\ \ t\in[0,T].
\end{equation}
In particular, the optimal consumption $c^{\ast}(x,q,z)$ with habit formation for problem $(\ref{primalp})$ satisfies
\begin{equation}
c_t^{\ast}(x,q,z)=\frac{\hat{c}^{\ast}_{t}(x,q,z)}{G_t}+\int_{0}^{t}\delta_{s}e^{\int_{s}^{t}(\delta_{v}-\alpha_{v})dv}\frac{\hat{c}^{\ast}_{s}(x,q,z)}{G_s}ds+ze^{\int_0^t(\delta_v-\alpha_v)dv},\ t\in[0,T],\nonumber
\end{equation}
where $\hat{c}^{\ast}(x,q,z)$ is the optimal consumption in the utility maximization problem $(\ref{numeraireG})$ with the num\'{e}raire $G_t$ and new random endowments $N_T$.
\end{proposition}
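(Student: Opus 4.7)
The plan is to perform a deterministic change of variables $\hat{c}_t = G_t \tilde{c}_t$ in the path-dependence-reduced problem, and show that the transformed budget constraint is exactly the consumption budget constraint of a standard (time-separable) utility maximization problem in the same market but with initial wealth $x$ and modified terminal endowment $N_T$. Throughout I work with the auxiliary process $\tilde{c}_t = c_t - F(c)_t$ and the dual characterization of $\bar{\mathcal{A}}(x,q,z)$ given by Lemma \ref{equivA}, rather than going back to the primal portfolio formulation.

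The key identity, which uses Assumption \ref{determdis} essentially, is
\begin{equation*}
\Gamma_t = G_t\, Z_t^0, \qquad t \in [0,T],
\end{equation*}
valid for every $Z \in \mathcal{Z}^s$ and its associated $\Gamma \in \widetilde{\mathcal{M}}$. Indeed, since $\delta_v - \alpha_v$ is deterministic, the weight $e^{\int_t^s(\delta_v - \alpha_v)dv}$ pulls out of the conditional expectation defining $\Gamma_t$; the martingale property of $Z^0$ then collapses $\mathbb{E}[Z_s^0 \mid \mathcal{F}_t]$ to $Z_t^0$, and the definition of $G_t$ in $(\ref{ohG})$ emerges. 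Note also $G_T = 1$, so $\Gamma_T = Z_T^0$.

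Substituting $\hat{c}_t = G_t \tilde{c}_t$ into the inequality of Lemma \ref{equivA} and applying this identity turns the constraint into
\begin{equation*}
\mathbb{E}\Big[\int_0^T \hat{c}_t Z_t^0 \, dt\Big] \leq x - z\,\mathbb{E}\Big[\int_0^T \tilde{w}_t G_t Z_t^0\, dt\Big] + \mathbb{E}[q \cdot \mathcal{E}_T Z_T^0], \quad \forall Z \in \mathcal{Z}^s.
\end{equation*}
A further use of the $\mathbb{P}$-martingale property of $Z^0$ (via the tower property and Fubini) gives $\mathbb{E}[\tilde{w}_t G_t Z_t^0] = \mathbb{E}[\tilde{w}_t G_t Z_T^0]$ for each $t$, so the right-hand side equals $x + \mathbb{E}[N_T Z_T^0]$ with $N_T$ as in the statement. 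By Proposition \ref{bcccc} this is precisely the characterization of $(x, N_T)$-financeable consumption streams, so the map $\tilde{c} \mapsto \hat{c} = G\tilde{c}$ is a bijection between $\bar{\mathcal{A}}(x,q,z)$ and $\hat{\mathcal{A}}(x,q,z)$; positivity and the addictive constraint are preserved because $G_t > 0$.

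Finally, since $U(t, \tilde{c}_t) = U(t, \hat{c}_t / G_t)$ pointwise, the two value functions and their optimizers match: $u(x,q,z) = \hat{u}(x,q,z)$ and $\hat{c}^* = G \tilde{c}^*$. Substituting $\tilde{c}^*_s = \hat{c}^*_s / G_s$ into the reconstruction identity $c^*_t = \tilde{c}^*_t + \int_0^t \delta_s e^{\int_s^t(\delta_v - \alpha_v)dv}\tilde{c}^*_s\, ds + z e^{\int_0^t(\delta_v - \alpha_v)dv}$ from Theorem \ref{mainth}(v) yields the stated representation of $c^*$. The only nontrivial step is the key identity $\Gamma = G Z^0$; once Assumption \ref{determdis} allows the exponential factor to be pulled out of the conditional expectation, the remainder of the argument is mechanical.
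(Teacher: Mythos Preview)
Your proof is correct and follows essentially the same route as the paper's: both hinge on the identity $\Gamma_t = G_t Z_t^0$ under Assumption~\ref{determdis}, substitute $\hat{c}_t = G_t\tilde{c}_t$ into the characterization of $\bar{\mathcal{A}}(x,q,z)$ from Lemma~\ref{equivA}, and then invoke Proposition~\ref{bcccc} to recognize the resulting inequality as the budget constraint in the isomorphic market with endowment $N_T$. You are in fact slightly more explicit than the paper in justifying the identity $\Gamma_t = G_t Z_t^0$ and in using the tower property to pass from $\mathbb{E}[\tilde{w}_t G_t Z_t^0]$ to $\mathbb{E}[\tilde{w}_t G_t Z_T^0]$, which the paper leaves implicit.
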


In special cases, the num\'{e}raire process $(G_t)_{t\in[0,T]}$ may not affect the utility maximization problem in Proposition $\ref{isotransc}$ and hence the isomorphic optimization problem becomes a classical optimal consumption problem. We list below two examples to illustrate the simplicity of the isomorphic problems.

\begin{corollary}\label{exampleisomo1}
Let us consider the Logarithmic Utility function $U(t,x)=\log x$. Under the Assumption $\ref{determdis}$, the original utility maximization problem with habit formation $(\ref{primalp})$ is equivalent to the isomorphic utility maximization problem on consumption
\begin{equation}
\hat{u}(x,q,z)\triangleq \sup_{\hat{c}\in\hat{\mathcal{A}}(x,q,z)}\mathbb{E}\Big[\int_0^TU(t,\hat{c}_t)dt\Big]-\mathbb{E}\Big[\int_0^TU(t,G_t)dt\Big],\nonumber
\end{equation}
where $\hat{\mathcal{A}}(x,q,z)$ is defined the same as in Proposition $(\ref{isotransc})$. Therefore, it is enough to consider the standard utility maximization problem
\begin{equation}\label{OK1new}
\pi(x,q,z)=\sup_{\hat{c}\in\hat{\mathcal{A}}(x,q,z)}\mathbb{E}\Big[\int_0^TU(t,\hat{c}_t)dt\Big].
\end{equation}
Moreover, we have
\begin{equation}
c_t^{\ast}(x,q,z)=\frac{\hat{c}^{\ast}_{t}(x,q,z)}{G_t}+\int_{0}^{t}\delta_{s}e^{\int_{s}^{t}(\delta_{v}-\alpha_{v})dv}\frac{\hat{c}^{\ast}_{s}(x,q,z)}{G_s}ds+ze^{\int_0^t(\delta_v-\alpha_v)dv},\ t\in[0,T],\nonumber
\end{equation}
where $\hat{c}^{\ast}(x,q,z)$ is the optimal consumption of the problem $(\ref{OK1new})$ in the isomorphic market with random endowments $N_T=q\cdot\mathcal{E}_T-z\int_0^Te^{\int_0^t(-\alpha_v)dv}G_tdt$
\end{corollary}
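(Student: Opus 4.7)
The plan is to combine Proposition \ref{isotransc} with the elementary logarithm identity $\log(a/b)=\log a-\log b$. Since Assumption \ref{determdis} is in force, Proposition \ref{isotransc} already recasts the original habit-based problem (\ref{primalp}) as the isomorphic time-separable problem (\ref{numeraireG}) on the same underlying asset market with transaction costs $\Lambda$, with the modified num\'eraire $G_t$ from (\ref{ohG}) and the new random endowment $N_T=q\cdot\mathcal{E}_T-z\int_0^T e^{-\int_0^t\alpha_v dv}G_t\,dt$; it also furnishes the explicit feedback map transferring the optimizer $\hat{c}^{\ast}$ of the isomorphic problem back to the optimal consumption $c^{\ast}$ of the original habit-based problem.

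Specializing to $U(t,x)=\log x$, I split the objective of (\ref{numeraireG}) as
\begin{equation*}
\mathbb{E}\Big[\int_0^T\log(\hat{c}_t/G_t)\,dt\Big]=\mathbb{E}\Big[\int_0^T\log\hat{c}_t\,dt\Big]-\mathbb{E}\Big[\int_0^T\log G_t\,dt\Big].
\end{equation*}
The subtracted term depends only on the data $(\delta,\alpha)$ and not on the control $\hat{c}\in\hat{\mathcal{A}}(x,q,z)$, so it is a constant of the optimization. Therefore the argmax defining $\hat{u}(x,q,z)$ coincides with the argmax defining the standard (non-habit) log-utility problem $\pi(x,q,z)$ in (\ref{OK1new}), and the displayed identity $\hat{u}=\pi-\mathbb{E}[\int_0^T U(t,G_t)\,dt]$ holds. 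The formula expressing $c^{\ast}(x,q,z)$ in terms of $\hat{c}^{\ast}(x,q,z)$ is then exactly the feedback map of Proposition \ref{isotransc}, applied verbatim to the common optimizer $\hat{c}^{\ast}$.

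The only technical point is to verify that $\mathbb{E}[\int_0^T\log G_t\,dt]$ is finite, so that the splitting above is not the formal difference of two infinities. Since $\delta_t\ge 0$ we have $G_t\ge 1$ and hence $0\le\log G_t\le G_t-1$; Assumption \ref{determdis} makes the factor $\int_t^T e^{\int_t^s(\delta_v-\alpha_v)dv}ds$ appearing in the definition of $G_t$ a deterministic function of $t$, which together with condition (\ref{ass3.1}) and the martingale property of $Z^0$ for any $Z\in\mathcal{Z}^s$ (via a tower-property/Fubini argument) controls $\mathbb{E}[\int_0^T G_t\,dt]$ and hence $\mathbb{E}[\int_0^T\log G_t\,dt]$. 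There is no substantive obstacle: the corollary is essentially Proposition \ref{isotransc} combined with the multiplicative-to-additive identity of the logarithm, with the integrability check being routine.
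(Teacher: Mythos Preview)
Your proposal is correct and follows exactly the paper's own argument: invoke Proposition~\ref{isotransc} and then split $\log(\hat{c}_t/G_t)=\log\hat{c}_t-\log G_t$, observing that the second term is independent of the control. In fact you go slightly beyond the paper by addressing the finiteness of $\mathbb{E}[\int_0^T\log G_t\,dt]$, which the paper's proof simply takes for granted; note, however, that your sketch for this bound is a bit loose since Assumption~\ref{determdis} only makes $\delta_t-\alpha_t$ deterministic, so $\delta_t$ (and hence $G_t$) can still be stochastic and condition~(\ref{ass3.1}) does not directly control $\mathbb{E}[\int_0^T \delta_t\,dt]$.
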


\begin{corollary}\label{exampleisomo2}
If the num\'{e}raire process $(G_t)_{t\in[0,T]}$ is a martingale with $G_0=1$, under Assumption $\ref{determdis}$, the original utility maximization problem with habit formation $(\ref{primalp})$ is equivalent to the standard isomorphic utility maximization problem on consumption
\begin{equation}\label{OK2new}
u(x,q,z)=\sup_{\tilde{c}\in\bar{\mathcal{A}}(x,q,z)}\mathbb{E}\Big[\int_0^TU(t,\tilde{c}_t)dt\Big],
\end{equation}
where $\bar{\mathcal{A}}(x,q,z)$ defined in $(\ref{comebackA})$ is the set of all $(x,R_T)$-financeable consumption processes in the isomorphic market with the asset price process $(S_t)_{t\in[0,T]}$ and transaction costs $\Lambda$, where $R_T\triangleq q\cdot\mathcal{E}_T-z\int_0^Te^{\int_0^t(-\alpha_v)dv}dt$. Moreover, we have
\begin{equation}
c_t^{\ast}(x,q,z)=\tilde{c}^{\ast}_{t}(x,q,z)+\int_{0}^{t}\delta_{s}e^{\int_{s}^{t}(\delta_{v}-\alpha_{v})dv}\tilde{c}^{\ast}_{s}(x,q,z)ds+ze^{\int_0^t(\delta_v-\alpha_v)dv},\ t\in[0,T],\nonumber
\end{equation}
where $\tilde{c}^{\ast}(x,q,z)$ is the optimal consumption of the problem $(\ref{OK2new})$ in the isomorphic market with random endowments $R_T$.
\end{corollary}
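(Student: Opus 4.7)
The plan is to show directly that, under Assumption~\ref{determdis} together with the extra hypothesis that $G$ is a $\mathbb{P}$-martingale with $G_0=1$, the auxiliary set $\bar{\mathcal{A}}(x,q,z)$ defined in~\eqref{comebackA} coincides with the set of all $(x,R_T)$-financeable consumption processes in the underlying market $(S,\Lambda)$. Once this set identity is established, the equality of value functions $u(x,q,z)=\sup_{\tilde c\in\bar{\mathcal{A}}(x,q,z)}\mathbb{E}[\int_0^T U(t,\tilde c_t)\,dt]$ is immediate from Theorem~\ref{mainth}(v), and the inversion formula for $c^\ast_t$ is the same Volterra inversion of $c=\tilde c+F(c)$ already displayed in Theorem~\ref{mainth}(v).

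First I would simplify the auxiliary dual process. Under Assumption~\ref{determdis} the exponential $e^{\int_t^s(\delta_v-\alpha_v)dv}$ is deterministic and exits the conditional expectation in the definition of $\Gamma$, so using the $\mathbb{P}$-martingale property of $Z^0$,
\[
\Gamma_t \;=\; Z_t^0\Bigl(1+\delta_t\int_t^T e^{\int_t^s(\delta_v-\alpha_v)dv}\,ds\Bigr) \;=\; Z_t^0\,G_t,\qquad t\in[0,T].
\]
The defining formula~\eqref{ohG} also gives the boundary value $G_T=1$, hence $\Gamma_T=Z_T^0$.

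The key step is the following observation. Since $G$ is a $\mathbb{P}$-martingale and $G_T=1$ a.s., for every $\mathcal{F}_t$-measurable integrable $X_t$ the tower property yields
\[
\mathbb{E}[X_t\,G_t]\;=\;\mathbb{E}\bigl[X_t\,\mathbb{E}[G_T\mid\mathcal{F}_t]\bigr]\;=\;\mathbb{E}[X_t\,G_T]\;=\;\mathbb{E}[X_t].
\]
Applying this identity to $X_t=\tilde c_t Z_t^0$ and to $X_t=\tilde w_t Z_t^0$, and exchanging $dt$ with $\mathbb{E}$ via Fubini, the inequality characterizing $\bar{\mathcal{A}}(x,q,z)$ in Lemma~\ref{equivA} rewrites as
\[
\mathbb{E}\Big[\int_0^T\tilde c_t Z_t^0\,dt\Big]\;\le\; x - z\,\mathbb{E}\Big[\int_0^T\tilde w_t Z_t^0\,dt\Big] + \mathbb{E}[q\cdot\mathcal{E}_T Z_T^0]\quad\text{for all } Z\in\mathcal{Z}^s.
\]
A short Fubini/martingale computation on the right hand side (using that $Z^0$ is a $\mathbb{P}$-martingale so $\mathbb{E}[\tilde w_t Z_t^0]=\mathbb{E}[\tilde w_t Z_T^0]$) shows that this right hand side equals $x+\mathbb{E}[R_T Z_T^0]$, and by Proposition~\ref{bcccc} the resulting inequality is precisely the Consumption Budget Constraint characterizing $(x,R_T)$-financeability in the market $(S,\Lambda)$.

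The delicate point, which I would flag for the reader, is the joint use of the two boundary values $G_0=1$ and $G_T=1$ (the latter being automatic from~\eqref{ohG}) together with the martingale property of $G$: it is precisely this package that collapses $Z_t^0G_t$ to $Z_t^0$ under the $dt\otimes d\mathbb{P}$ integral and eliminates the num\'eraire from the isomorphic problem. With the set equality $\bar{\mathcal{A}}(x,q,z)=\{\tilde c : (x,R_T)\text{-financeable in }(S,\Lambda)\}$ in hand, the value-function identity~\eqref{OK2new} and the explicit expression for $c^\ast_t$ both follow at once from Theorem~\ref{mainth}(v).
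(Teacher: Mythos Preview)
Your proposal is correct and follows essentially the same route as the paper's proof: simplify $\Gamma_t=Z_t^0G_t$ under Assumption~\ref{determdis}, use the $\mathbb{P}$-martingale property of $G$ together with $G_T=1$ to remove $G$ from the budget inequality in Lemma~\ref{equivA}, and then invoke Proposition~\ref{bcccc} to identify $\bar{\mathcal{A}}(x,q,z)$ with the set of $(x,R_T)$-financeable consumptions. The only cosmetic difference is that the paper first packages the computation as a change of measure $d\hat{\mathbb{P}}/d\mathbb{P}=G_T$ and then observes $\hat{\mathbb{P}}=\mathbb{P}$, whereas you apply the tower property with $G_T=1$ directly; the two are the same manipulation.
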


The market isomorphism reduces the complexity of the path-dependence significantly as it is enough to find the optimal consumption in the new market model without habit formation constraint. It is worth noting that if there exists a  shadow price process $\hat{S}$ for the asset price process $S$ and transaction costs $\Lambda$, the market isomorphism can be carried out in the frictionless shadow price market with the underlying price process $\hat{S}$ which can even permit some closed-form feedback consumption strategies with habit formation and transaction costs.

\section{Proofs of Main Results}\label{section5}
\subsection{Proofs of Main Results in Section $\ref{section1}$}
\begin{proof}[Proof of Lemma $\ref{ran}$]
For each $\tilde{S}\in\mathcal{S}^s$, Assumption $\ref{endowassum}$ implies that
\begin{equation}
a\triangleq \sup_{\mathbb{Q}\in\mathcal{M}^s(\tilde{S})}\mathbb{E}^{\mathbb{Q}}\Big[\sum_{i=1}^{N}\mathcal{E}_T^i\Big]<\infty.\nonumber
\end{equation}
According to the general duality result between terminal wealth and equivalent local martingale measures in the frictionless market with the stock price $\tilde{S}$, see \cite{kram99}, it follows that there exists a nonnegative wealth process $(X_t)_{t\in[0,T]}=(a+(H\cdot\tilde{S})_t)_{t\in[0,T]}\in\mathcal{X}(\tilde{S},a)$ such that $X_T\geq \sum_{i=1}^{N}\mathcal{E}_T^i$. If $X$ itself is a maximal element in $\mathcal{X}(\tilde{S}, a)$, the conclusion holds obviously. Otherwise, as $X$ is not a maximal element, there exists some portfolio $\hat{H}$ such that $X_T\leq \hat{X}_T=a+(\hat{H}\cdot \tilde{S})_T$. Without loss of generality, we can assume $\hat{X}$ is the maximal element in $\mathcal{X}(\tilde{S}, a)$ and the conclusion still holds.
\end{proof}

\begin{proof}[Proof of Lemma $\ref{budlem}$]
By the definition of set $\mathcal{C}(x,q)$, for any $g\in\mathcal{C}(x,q)$, there exists a $V\in\mathcal{X}(x,q)$ such that $V_T+(q\cdot\mathcal{E}_T,\bar{\mathbf{0}})-g\in\mathbb{L}^0(\hat{K}_T)$, hence it is enough to show that
\begin{equation}\label{auxinqt}
\mathbb{E}[\langle V_T+(q\cdot\mathcal{E}_T, \bar{\mathbf{0}}), Z_T\rangle]\leq \langle x, Z_0\rangle+\mathbb{E}[\langle (q\cdot\mathcal{E}_T,\bar{\mathbf{0}}), Z_T\rangle],\ \ \forall Z\in\mathcal{Z}^s.
\end{equation}

By Remark~\ref{rmkCPS}, it is equivalent to prove that for each fixed $\tilde{S}\in\mathcal{S}^s$,
\begin{equation}\label{proineq}
\mathbb{E}^{\mathbb{Q}}\Big[V_T^0+\sum_{i=1}^{d}\tilde{S}^i_TV^i_T+q\cdot\mathcal{E}_T\Big]\leq \langle x, Z_0\rangle+\mathbb{E}^{\mathbb{Q}}[q\cdot\mathcal{E}_T],\ \ \forall \mathbb{Q}\in\mathcal{M}^s(\tilde{S}).
\end{equation}

The definition of acceptable portfolio implies that there exists $a>0$ and for each fixed $\tilde{S}\in\mathcal{S}^s$, there exists a maximal element $X^{\max,\tilde{S}}\in\mathcal{X}(\tilde{S},a)$ such that
\begin{equation}\label{proineq3}
\langle V_{\tau}+(X^{\max,\tilde{S}}_{\tau},\bar{\mathbf{0}}), Z_{\tau}\rangle\geq 0, \ \ \mathbb{P}-\text{a.s.},\ \forall Z\in\mathcal{Z}^s(\tilde{S})
\end{equation}
for all $[0,T]$-valued stopping time $\tau$. For this choice of $X^{\max,\tilde{S}}$, we can rewrite
\begin{equation}\label{proineq2}
\begin{split}
&\mathbb{E}^{\mathbb{Q}}\Big[V_T^0+\sum_{i=1}^{d}\tilde{S}_T^{i}V_T^{i}+q\cdot\mathcal{E}_T \Big]=\mathbb{E}^{\mathbb{Q}}\Big[V_T^0+\sum_{i=1}^{d}\tilde{S}_T^{i}V_T^{i}+X^{\max, \tilde{S}}_T-X^{\max,\tilde{S}}_T+q\cdot\mathcal{E}_T \Big]\\
=&\mathbb{E}^{\mathbb{Q}}\Big[V_T^0+\sum_{i=1}^{d}\tilde{S}_T^{i}V_T^{i}+X^{\max, \tilde{S}}_T\Big] -\mathbb{E}^{\mathbb{Q}}[X^{\max,\tilde{S}}_T]+\mathbb{E}^{\mathbb{Q}}[q\cdot\mathcal{E}_T ].
\end{split}
\end{equation}

Theorem $5.2$ of \cite{DS97} states that for each fixed semimartingale $\tilde{S}\in\mathcal{S}^s$, the set
\begin{equation}
\mathcal{M}'(\tilde{S})\triangleq \{\mathbb{Q}\in\mathcal{M}^s(\tilde{S}): X^{\max,\tilde{S}} \text{ is a UI martingale under } \mathbb{Q}\}\nonumber
\end{equation}
is nonempty and dense in $\mathcal{M}^s(\tilde{S})$ with respect to the norm topology of $\mathbb{L}^1(\Omega,\mathcal{F},\mathbb{P})$. Therefore, we can first prove the inequality $(\ref{proineq})$ for all $\mathbb{Q}\in\mathcal{M}'(\tilde{S})$ instead of $\mathbb{Q}\in\mathcal{M}^s(\tilde{S})$. By $(\ref{proineq2})$ and the fact that $X^{\max,\tilde{S}}$ is a UI martingale under $\mathbb{Q}\in\mathcal{M}'(\tilde{S})$, it is sufficient to show that
\begin{equation}\label{proineq4}
\mathbb{E}^{\mathbb{Q}}\Big[V_T^0+\sum_{i=1}^{d}\tilde{S}_T^{i}V_T^{i}+X^{\max, \tilde{S}}_T\Big]\leq \langle x, Z_0\rangle +a,\ \ \forall \mathbb{Q}\in\mathcal{M}'(\tilde{S}).
\end{equation}

However, since $V$ is a self-financing portfolio, Lemma $2.8$ of \cite{Campi06} implies that $\langle V, Z\rangle$ is a local supermartingale for all $Z\in\mathcal{Z}^s$, and hence $V_t^0+\sum_{i=1}^{d}\tilde{S}_t^{i}V_t^{i}$ is a local supermartingale under all $\mathbb{Q}\in\mathcal{M}'(\tilde{S})$. Again, since $X^{\max, \tilde{S}}$ is a UI martingale under $\mathbb{Q}\in\mathcal{M}'(\tilde{S})$, we obtain that $V_t^0+\sum_{i=1}^{d}\tilde{S}_t^{i}V_t^{i}+X^{\max, \tilde{S}}_t$ is also a local supermartingale under $\mathbb{Q}\in\mathcal{M}'(\tilde{S})$. Moreover, by $(\ref{proineq3})$, we have that $Z^0_{\tau}\Big(V_{\tau}^0+\sum_{i=1}^{d}\tilde{S}_{\tau}^{i}V_{\tau}^{i}+X^{\max, \tilde{S}}_{\tau}\Big)\geq 0$, $\mathbb{P}$-a.s. for all $[0,T]$-valued stopping time $\tau$ where $Z^0_T=\frac{d\mathbb{Q}}{d\mathbb{P}}$. It follows that $V_t^0+\sum_{i=1}^{d}\tilde{S}_t^{i}V_t^{i}+X^{\max, \tilde{S}}_t$ is a true supermartingale under $\mathbb{Q}\in\mathcal{M}'(\tilde{S})$ and $(\ref{proineq4})$ holds true.

Define $\beta_T\triangleq V_T^0+\sum_{i=1}^{d}\tilde{S}_T^iV_T^i+q\cdot\mathcal{E}_T$. Since $V_T+(q\cdot\mathcal{E}_T,\bar{\mathbf{0}})\in\mathbb{L}^0(\hat{K}_T)$, we get $\langle V_T+(q\cdot\mathcal{E}_T,\bar{\mathbf{0}}), Z_T\rangle\geq 0$ for all $Z\in\mathcal{Z}^s$. It thereby follows that for any $\mathbb{Q}\in\mathcal{M}^s(\tilde{S})$, we have $\beta_TZ^0_T\geq 0$, $\mathbb{P}$-a.s. where $Z_T^0=\frac{d\mathbb{Q}}{d\mathbb{P}}$. Monotone Convergence Theorem implies that for any $\mathbb{Q}\in\mathcal{M}^s(\tilde{S})$, the following holds
\begin{equation}
\mathbb{E}^{\mathbb{Q}}[\beta_T]=\mathbb{E}^{\mathbb{P}}[\beta_TZ_T^0]=\lim_{m\rightarrow\infty}\mathbb{E}^{\mathbb{P}}[\beta_T\mathbf{1}_{\{\beta_T\leq m\}}Z_T^0]=\lim_{m\rightarrow\infty}\mathbb{E}^{\mathbb{Q}}[\beta_T\mathbf{1}_{\{\beta_T\leq m\}}].\nonumber
\nonumber
\end{equation}
The density property of $\mathcal{M}'(\tilde{S})$ in $\mathcal{M}^s(\tilde{S})$ in the norm topology of $\mathbb{L}^1$ guarantees the existence of a sequence of $\mathbb{Q}^n\in\mathcal{M}'(\tilde{S})$ such that $(\ref{proineq4})$ holds. It follows that
\begin{equation}
\begin{split}
\lim_{m\rightarrow\infty}\mathbb{E}^{\mathbb{Q}}[\beta_T\mathbf{1}_{\{\beta_T\leq m\}}]&=\lim_{m\rightarrow\infty}\lim_{n\rightarrow\infty}\mathbb{E}^{\mathbb{Q}^n}[\beta_T\mathbf{1}_{\{\beta_T\leq m\}}]\\
&\leq \lim_{n\rightarrow\infty}\mathbb{E}^{\mathbb{Q}^n}[\beta_T]\leq \langle x,Z_0\rangle +\lim_{n\rightarrow\infty}\mathbb{E}^{\mathbb{Q}^n}[q\cdot\mathcal{E}_T].\nonumber
\end{split}
\end{equation}

It is clear that for any $m>0$ and each $1\leq i\leq N$, we have
\begin{equation}
\mathcal{E}^i_T\mathbf{1}_{\{\mathcal{E}^i_T>m\}}\leq \sum_{i=1}^{N}\mathcal{E}_T^i\mathbf{1}_{\{\sum_{i=1}^{N}\mathcal{E}_T^i >m\}},\ \ \mathbb{P}-\text{a.s.}.\nonumber
\end{equation}
The assumption that $\mathcal{E}^i_T\geq 0$ a.s. under $\mathbb{P}$ implies $\mathcal{E}^i_T\geq 0$ a.s. under $\mathbb{Q}\in\mathcal{M}^s(\tilde{S})$, it follows that
\begin{equation}\label{woo}
\lim_{m\rightarrow\infty}\sup_{\mathbb{Q}\in\mathcal{M}^s(\tilde{S})}\mathbb{E}^{\mathbb{Q}}[\mathcal{E}^i_T\mathbf{1}_{\{\mathcal{E}^i_T>m\}}]=0,\ \ 1\leq i\leq N.
\end{equation}

By $(\ref{woo})$ as well as the Moore-Osgood Theorem (see Theorem $5$, p.$102$ of \cite{13}) and Monotone Convergence Theorem, we deduce that
\begin{equation}
\begin{split}
\lim_{n\rightarrow\infty}\mathbb{E}^{\mathbb{Q}^n}[\mathcal{E}^i_T]=&\lim_{n\rightarrow\infty}\lim_{m\rightarrow\infty}\mathbb{E}^{\mathbb{Q}^n}[\mathcal{E}^i_T\mathbf{1}_{\{\mathcal{E}^i_T\leq m\}}]=\lim_{m\rightarrow\infty}\lim_{n\rightarrow\infty}\mathbb{E}^{\mathbb{Q}^n}[\mathcal{E}^i_T\mathbf{1}_{\{\mathcal{E}^i_T\leq m\}}]\\
=&\lim_{m\rightarrow\infty}\mathbb{E}^{\mathbb{Q}}[\mathcal{E}^i_T\mathbf{1}_{\{\mathcal{E}^i_T\leq m\}}]=\mathbb{E}^{\mathbb{Q}}[\mathcal{E}^i_T],\ \ 1\leq i\leq N, \nonumber
\end{split}
\end{equation}
which gives
\begin{equation}
\lim_{n\rightarrow\infty}\mathbb{E}^{\mathbb{Q}^n}[q\cdot\mathcal{E}_T]=\mathbb{E}^{\mathbb{Q}}[q\cdot\mathcal{E}_T].\nonumber
\end{equation}

As a consequence, $(\ref{proineq})$ holds for any $\mathbb{Q}\in\mathcal{M}^s(\tilde{S})$. Since $\tilde{S}\in\mathcal{S}^s$ is arbitrary, $(\ref{auxinqt})$ is verified which completes the proof of $(\ref{leminq})$.
\end{proof}

Fix a constant $\hat{a}>0$ and denote $\mathcal{V}_{0,\hat{a}}^{\text{acpt}}$ the set of all acceptable portfolios $V$ with zero initial wealth $V_0=(0,\bar{\mathbf{0}})$ and for each $\tilde{S}$, there exists a $\hat{X}^{\max,\tilde{S}}\in\mathcal{X}(\hat{a},\tilde{S})$ with $V_T+(\hat{X}_T^{\max,\tilde{S}}, \bar{\mathbf{0}})\in\mathbb{L}^0(\hat{K}_T)$. The following lemma asserts that the total variation of the element in $\mathcal{V}_{0,\hat{a}}^{\text{acpt}}$ remains bounded in probability.

\begin{lemma}\label{bdp}
Let Assumption $\ref{assumZ}$ hold. For each $\hat{a}>0$, there exists a probability measure $\mathbb{Q}\sim\mathbb{P}$ and a constant $C>0$ such that for all $V\in\mathcal{V}_{0,\hat{a}}^{\text{acpt}}$,
\begin{equation}
\mathbb{E}^{\mathbb{Q}}[\|V\|_T]\leq C\hat{a}.\nonumber
\end{equation}
\end{lemma}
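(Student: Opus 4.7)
The strategy is to fix one SCPS $Z^*\in\mathcal{Z}^s$ and couple two facts: the $\mathbb{P}$-local supermartingale property of $\langle V, Z^*\rangle$ for self-financing $V$ from Lemma~2.8 of \cite{Campi06}, and the geometric consequence of strict consistency $Z^*_s\in\text{int}(\hat{K}_s^*)$, which furnishes a strictly positive adapted weight $\kappa_s$ satisfying $\langle\xi,Z^*_s\rangle\geq\kappa_s\|\xi\|$ for every $\xi\in\hat{K}_s$. The former will translate acceptability into an expectation bound on the pairing $\langle V_T, Z^*_T\rangle$, the latter upgrades that dual bound to a norm-variation bound, and a judicious change of measure finally produces the $\mathbb{Q}$ in the statement.

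First I would fix $Z^*\in\mathcal{Z}^s$ (nonempty by Assumption~\ref{assumZ}) with associated $\tilde{S}^*\in\mathcal{S}^s$ and $\mathbb{Q}^*\in\mathcal{M}^s(\tilde{S}^*)$, $\frac{d\mathbb{Q}^*}{d\mathbb{P}}=Z^{*,0}_T$. For $V\in\mathcal{V}_{0,\hat{a}}^{\text{acpt}}$ acceptability delivers $X^{\max,\tilde{S}^*}\in\mathcal{X}(\tilde{S}^*,\hat{a})$ with $\langle V_\tau+(X^{\max,\tilde{S}^*}_\tau,\bar{\mathbf{0}}),Z^*_\tau\rangle\geq 0$ for every $[0,T]$-stopping time $\tau$. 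Since $\langle V,Z^*\rangle$ is a $\mathbb{P}$-local supermartingale (Lemma~2.8 of \cite{Campi06}) and $Z^{*,0}X^{\max,\tilde{S}^*}$ is a nonnegative $\mathbb{P}$-local supermartingale (as $X^{\max,\tilde{S}^*}$ is a $\mathbb{Q}^*$-supermartingale starting at $\hat{a}$), their sum is a nonnegative $\mathbb{P}$-local supermartingale and hence a true supermartingale; using $V_0=0$ this yields
\begin{equation}
-\mathbb{E}\big[\langle V_{\tau},Z^*_{\tau}\rangle\big]\leq \mathbb{E}\big[Z^{*,0}_{\tau}X^{\max,\tilde{S}^*}_{\tau}\big]\leq \hat{a}\nonumber
\end{equation}
for every stopping time $\tau\in[0,T]$.

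Next I would apply integration by parts. Writing the self-financing process as $V_t=-\int_0^t d\eta_s$ with $d\eta_s\in\hat{K}_s$ (so that $d\|V\|_s=\|d\eta_s\|$), and using that $V$ is predictable of finite variation while $Z^*$ is a semimartingale, one obtains
\begin{equation}
\langle V_{\tau_n},Z^*_{\tau_n}\rangle=-\int_0^{\tau_n}\langle d\eta_s,Z^*_s\rangle+M_{\tau_n}\nonumber
\end{equation}
along a localizing sequence $(\tau_n)$ for the local martingale part $M_t=\int_0^t\langle V_{s-},dZ^*_s\rangle+[V,Z^*]_t$. Taking expectation eliminates $M_{\tau_n}$, and combining with the supermartingale inequality together with $\langle d\eta_s,Z^*_s\rangle\geq\kappa_s\|d\eta_s\|$ yields $\mathbb{E}\big[\int_0^{\tau_n}\kappa_s\,d\|V\|_s\big]\leq\hat{a}$; monotone convergence as $n\to\infty$ then gives $\mathbb{E}\big[\int_0^T\kappa_s\,d\|V\|_s\big]\leq\hat{a}$. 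I would finally define $\mathbb{Q}\sim\mathbb{P}$ with a density built from a bounded strictly positive functional of $\kappa$ (for instance, $\frac{d\mathbb{Q}}{d\mathbb{P}}\propto(\inf_{s\leq T}\kappa_s)\wedge 1$ normalized), chosen so that $1/\kappa_{\cdot}$ is $\mathbb{Q}$-integrable uniformly in $s$, which produces $\mathbb{E}^{\mathbb{Q}}[\|V\|_T]\leq C\hat{a}$ with a constant $C$ independent of $V$.

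The main obstacle will be step three on two fronts: first, making the integration-by-parts and jump accounting rigorous in the predictable c\`adl\`ag framework where both $\Delta V_\tau$ and $\Delta_+V_\tau$ can be present, as already emphasized in the remark following Definition~\ref{def111}; second, and more critically, producing a \emph{single} equivalent measure $\mathbb{Q}$ under which $1/\kappa$ is integrable so that the constant $C$ is genuinely uniform over all $V\in\mathcal{V}_{0,\hat{a}}^{\text{acpt}}$. The latter typically forces a careful choice of $Z^*$ inside $\mathcal{Z}^s$ (exploiting the freedom granted by Assumption~\ref{assumZ}) combined with a localization/exhaustion argument to upgrade the $L^0$-control implicit in $\kappa>0$ into genuine $L^1(\mathbb{Q})$-control.
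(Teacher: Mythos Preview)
Your plan is essentially the paper's proof, and it is correct. The paper also fixes one SCPS $Z$, invokes the integration-by-parts decomposition of $\langle V,Z\rangle$ from Lemma~2.8 of \cite{Campi06}, localizes to obtain $\mathbb{E}[\text{decrement up to }T]\leq -\mathbb{E}[\langle V_{\tau_n},Z_{\tau_n}\rangle]$, bounds the right side by $\hat a$, and then converts the decrement into $\alpha(Z)\|V\|_T$ via the strict-interior property of $Z$. The measure is $d\mathbb{Q}/d\mathbb{P}=\alpha(Z)/\mathbb{E}[\alpha(Z)]$ with $\alpha(Z)=\epsilon(Z)\inf_{t\leq T}|Z_t|_{1+d}$, where $\epsilon(Z)$ is the maximal $\eta$ with $Z_t\in\eta\text{-int}\,\hat K_t^\ast$; Lemma~3.1 of \cite{Campi06} gives $\alpha(Z)>0$ a.s. This is exactly your $\inf_{s\leq T}\kappa_s$, so your ``main obstacle'' about producing a single $\mathbb{Q}$ is already solved in the literature you cite.

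One point worth noting: your route to $-\mathbb{E}[\langle V_\tau,Z^*_\tau\rangle]\leq\hat a$ is more direct than the paper's. You use only the pointwise acceptability inequality $-\langle V_\tau,Z^*_\tau\rangle\leq Z^{*,0}_\tau X^{\max,\tilde S^*}_\tau$ together with the fact that $Z^{*,0}X^{\max,\tilde S^*}$ is a nonnegative $\mathbb{P}$-supermartingale of initial value $\hat a$. The paper instead passes through the density of $\mathcal{M}'(\tilde S)$ in $\mathcal{M}^s(\tilde S)$ and an approximation $Z^m\to Z$ to upgrade the local supermartingale $\langle V,Z\rangle$ to a true one under each $\mathbb{Q}^m$. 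Your shortcut is valid and avoids that detour. On the other hand, your integration-by-parts sketch should be replaced by the precise version in Lemma~2.8 of \cite{Campi06}, which separately tracks $\dot V^c$, $\Delta V$, and $\Delta_+ V$; that handles the two-sided jump issue you flag.
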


\begin{proof}[Proof of Lemma $\ref{bdp}$]
For a fixed SCPS $Z\in\mathcal{Z}^s$, there exists a $\tilde{S}\in\mathcal{S}^s$ such that $Z\in\mathcal{Z}^s(\tilde{S})$. From the definition of acceptable portfolios, we can find a constant $\hat{a}>0$ and $X^{\max,\tilde{S}}\in\mathcal{X}(\tilde{S}, \hat{a})$ such that $V_T+(X^{\max,\tilde{S}}_T,\bar{\mathbf{0}})\in\mathbb{L}^0(\hat{K}_T)$ and $\langle V_{\tau}+(X^{\max,\tilde{S}}_{\tau},\bar{\mathbf{0}}), Z_{\tau}\rangle\geq 0$ a.s. for all $[0,T]$-valued stopping time $\tau$ and for all $Z\in\mathcal{Z}^s(\tilde{S})$.

Similar to Lemma $2.8$ of \cite{Campi06}, applying the integration by parts formula, we get
\begin{equation}
\langle V_t, Z_t\rangle=\int_0^tV_udZ_u+\int_0^tZ_u\dot{V}^c_ud\|V^c\|_u+\sum_{u\leq t}Z_{u-}\triangle V_u+\sum_{u<t}Z_u\triangle_{+}V_u,\nonumber
\end{equation}
where the first integral on the right hand side is a local martingale since $V$ is predictable with finite variation and hence locally bounded, see Proposition $A.11$ of \cite{Paolo1}. Thus there exists a sequence of stopping times $\tau_n\nearrow T$ such that $\int_0^{\tau_n}V_udZ_u$ is a martingale. Moreover, as $V$ is locally bounded, without loss of generality, we can assume that $|V_{\tau_n}|\leq n$. By taking the expectation, we arrive at
\begin{equation}
\mathbb{E}\bigg[ -\int_0^{\tau_n}Z_u\dot{V}^c_ud\|V^c\|_u-\sum_{u\leq {\tau_n}}Z_{u-}\triangle V_u-\sum_{u<{\tau_n}}Z_u\triangle_{+}V_u\bigg]=-\mathbb{E}[\langle V_{\tau_n}, Z_{\tau_n}\rangle].\nonumber
\end{equation}
The self-financing condition and Lemma $2.8$ of \cite{Campi06} imply that the process on the left hand side is non-negative. Fatou's Lemma gives that
\begin{equation}
\mathbb{E}\bigg[ -\int_0^{T}Z_u\dot{V}^c_ud\|V^c\|_u-\sum_{u\leq {T}}Z_{u-}\triangle V_u-\sum_{u<{T}}Z_u\triangle_{+}V_u\bigg]\leq -\mathbb{E}[\langle V_{\tau_n}, Z_{\tau_n}\rangle],\ \ \forall n\in\mathbb{N}.\nonumber
\end{equation}

For the right hand side, it is clear from the tower property and martingale property of $Z$ that
\begin{equation}
-\mathbb{E}[\langle V_{\tau_n}, Z_{\tau_n}\rangle]=-\mathbb{E}[\langle V_{\tau_n}, Z_T\rangle].\nonumber
\end{equation}

For the fixed $Z\in\mathcal{Z}(\tilde{S})$, $\tilde{S}\in\mathcal{S}^s$ and the stochastic lower bound $X^{\max, \tilde{S}}$ in the definition of $V$, we consider the set of measures
\begin{equation}
\mathcal{M}'(\tilde{S})\triangleq \{\mathbb{Q}\in\mathcal{M}^s(\tilde{S}): X^{\max,\tilde{S}}\ \text{is a UI martingale under}\ \mathbb{Q}\}.\nonumber
\end{equation}
The density property of $\mathcal{M}'(\tilde{S})$ in $\mathcal{M}^s(\tilde{S})$ implies the existence of a sequence of $\mathbb{Q}^m\in\mathcal{M}'(\tilde{S})$ such that $\mathbb{Q}^m$ converges to $\mathbb{Q}$ in the norm topology of $\mathbb{L}^1$ where $\frac{d\mathbb{Q}}{d\mathbb{P}}=Z_T$. Let us define the sequence $Z^m$ by
\begin{equation}
\begin{split}
Z^{m,0}_t&=\mathbb{E}\Big[\frac{d\mathbb{Q}^m}{d\mathbb{P}}\Big|\mathcal{F}_t\Big],\\
Z^{m,i}_t&=Z^i_t,\ \ i=1,\ldots,d,\ \ t\in[0,T].\nonumber
\end{split}
\end{equation}

As $|V_{\tau_n}|\leq n$, it is clear that
\begin{equation}
-\mathbb{E}[\langle V_{\tau_n}, Z_T\rangle]=-\lim_{m\rightarrow\infty}\mathbb{E}[\langle V_{\tau_n}, Z^m_T\rangle]=-\lim_{m\rightarrow\infty}\mathbb{E}[\langle V_{\tau_n}, Z^m_{\tau_n}\rangle].\nonumber
\end{equation}

Following the proof of Lemma $\ref{budlem}$, we obtain that $\langle V, Z^m\rangle$ is a true supermartingale since $\langle V, Z^m\rangle$ is a local supermartingale by Lemma $2.8$ of \cite{Campi06} and it is also bounded below by the UI martingale $\langle (X^{\max,\tilde{S}}, \bar{\mathbf{0}}), Z^m\rangle$. It follows that
\begin{equation}
\begin{split}
-\lim_{m\rightarrow\infty}\mathbb{E}[\langle V_{\tau_n}, Z^m_{\tau_n}\rangle]&\leq -\lim_{m\rightarrow\infty}\mathbb{E}[\langle V_{T}, Z^m_{T}\rangle]\leq\lim_{m\rightarrow\infty} \mathbb{E}[\langle  (\hat{X}_T^{\max,\tilde{S}}, \bar{\mathbf{0}}), Z^m \rangle]\\
&=\lim_{m\rightarrow\infty}\mathbb{E}[\hat{X}_T^{\max,\tilde{S}}Z_T^{m,0}]=\hat{a}.\nonumber
\end{split}
\end{equation}

Putting all pieces together, we can deduce that
\begin{equation}\label{newleftside}
\mathbb{E}\bigg[ -\int_0^{T}Z_u\dot{V}^c_ud\|V^c\|_u-\sum_{u\leq {T}}Z_{u-}\triangle V_u-\sum_{u<{T}}Z_u\triangle_{+}V_u\bigg]\leq \hat{a}.
\end{equation}

For the fixed $Z\in\mathcal{Z}^s$, let us define the random variable $\alpha(Z)\triangleq \epsilon(Z)\inf_{t\in[0,T]}|Z_t|_{1+d}$ where
\begin{equation}\label{epsi}
\epsilon(Z)\triangleq \text{esssup}\big\{\eta\in\mathbb{L}^0(\mathbb{R}_+,\mathcal{F}_T): Z_t\in\eta\text{-int}\hat{K}_t^{\ast},\ \ \forall t\in[0,T]\ \text{a.s.}\big\}.
\end{equation}
Lemma $3.1$ of \cite{Campi06} states that $\mathbb{P}(\epsilon(Z)>0)=1$ and also $\inf_{t\in[0,T]}|Z_t|_{1+d}>0$ a.s.. The inequality $(\ref{newleftside})$ implies that
\begin{equation}
\mathbb{E}[\alpha(Z)\|V\|_T]\leq \hat{a}.\nonumber
\end{equation}

Similar to the proof of Lemma $3.2$ of \cite{Campi06}, for the fixed $Z\in\mathcal{Z}^s$, we can then define $C\triangleq \mathbb{E}[\alpha(Z)]$ and $\frac{d\mathbb{Q}}{d\mathbb{P}}\triangleq \frac{\alpha(Z)}{C}$, and the conclusion holds.
\end{proof}

Denote $\mathcal{A}_x=\{V_T: V\in\mathcal{V}_x^{\text{acpt}}\}$ the set of all contingent claims attainable with initial position $x\in\mathbb{R}^{1+d}$. We will modify the closedness under Fatou convergence to fit into our framework in the following way:
\begin{definition}\label{newfffttto}
The set $\mathcal{A}_x$ is said to be \textbf{relatively Fatou closed} if for any fixed $\hat{a}>0$ and for each $\tilde{S}\in\mathcal{S}^s$ with one fixed maximal element $\hat{X}^{\max,\tilde{S}}\in\mathcal{X}(\tilde{S},\hat{a})$ such that there exists a sequence $V^n\in\mathcal{V}_x^{\text{acpt}}$ satisfying $V_T^n+(\hat{X}_T^{\max,\tilde{S}}, \bar{\mathbf{0}})\in\mathbb{L}^0(\hat{K}_T)$ for each $\tilde{S}\in\mathcal{S}^s$ converges almost surely to a $\mathcal{F}_T$-measurable random variable $V_T$, we have that $V_T\in\mathcal{A}_x$ holds.
\end{definition}

\begin{remark}
The new terminology we choose here comes from the fact that the converging sequence $(V^n)_{n\in\mathbb{N}}$ has to satisfy the lower bound condition $V_T^n+(\hat{X}_T^{\max,\tilde{S}}, \bar{\mathbf{0}})\in\mathbb{L}^0(\hat{K}_T)$ relatively to the fixed maximal elements $\hat{X}^{\max,\tilde{S}}$ for each $\tilde{S}$. The definition of relatively Fatou closed is more restrictive than the Fatou closedness. If a set is relatively Fatou closed, then it is obviously Fatou closed since each constant $\hat{a}>0$ is one maximal element in the set $\mathcal{X}(\tilde{S}, \hat{a})$ for all $\tilde{S}\in\mathcal{S}^s$. We need this stronger condition on the set due to the complexity caused by the definition of acceptable portfolios. The relatively Fatou closedness in the end will help us to deduce the characterization of the set $\mathcal{A}_x$ using some explicit dual elements.
\end{remark}

\begin{lemma}\label{fcl}
Under Assumption $\ref{assumZ}$, the set $\mathcal{A}_x$ is relatively Fatou closed.
\end{lemma}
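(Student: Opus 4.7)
The plan is to adapt the Fatou-closedness proof for admissible portfolios of \cite{Campi06} (Lemma~3.4) to our acceptable setting, where the constant threshold is replaced by the stochastic process $\hat{X}^{\max,\tilde{S}}$.

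Compactness and extraction of a limit: because $\hat{X}^{\max,\tilde{S}}$ is fixed along the sequence and Lemma~\ref{bdp} extends from zero initial wealth to $V_0=x$ with only an additive change of constant (the shift by $x$ is absorbed using $\mathbb{L}^0(\mathbb{R}_+^{1+d})\subset\mathbb{L}^0(\hat{K}_T)$ and a suitably inflated threshold), we obtain a measure $\mathbb{Q}\sim\mathbb{P}$ and a constant $C>0$ such that $\sup_n\mathbb{E}^{\mathbb{Q}}[\|V^n\|_T]\le C$. The compactness principle for predictable finite-variation processes with $\mathbb{L}^1(\mathbb{Q})$-bounded total variation (Theorem~2.1 of \cite{Campi06}) then yields forward convex combinations $\widetilde{V}^n\in\operatorname{conv}(V^n,V^{n+1},\dots)$ converging pointwise $\bar{\mathbb{P}}$-a.e.\ to a predictable, finite-variation process $V$ with $V_0=x$; since a.s.\ convergence is preserved under convex combinations, $\widetilde{V}^n_T\to V_T$ a.s.

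Verification of the cone conditions: the self-financing inclusion $V_\tau-V_\sigma\in-\hat{\mathcal{K}}_{\sigma,\tau}$ is preserved by convex combinations and pointwise limits, since $-\hat{\mathcal{K}}_{\sigma,\tau}$ is closed and convex in $\mathbb{R}^{1+d}$, and $V_T+(\hat{X}^{\max,\tilde{S}}_T,\bar{\mathbf{0}})\in\mathbb{L}^0(\hat{K}_T)$ for every $\tilde{S}\in\mathcal{S}^s$ follows from a.s.\ convergence and closedness of $\hat{K}_T$. To recover condition~(3') at all stopping times for $V$, I apply the same convex weights to the maximal elements $X^{\max,\tilde{S},n}\in\mathcal{X}(\tilde{S},a)$ certifying $V^n\in\mathcal{V}_x^{\text{acpt}}$; linearity gives $\langle\widetilde{V}^n_\tau+(\widetilde{X}^{\max,\tilde{S},n}_\tau,\bar{\mathbf{0}}),Z_\tau\rangle\ge 0$ pathwise, where $\widetilde{X}^{\max,\tilde{S},n}:=\sum_k\lambda_k^n X^{\max,\tilde{S},k}\in\mathcal{X}(\tilde{S},a)$. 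A diagonal Koml\'{o}s extraction inside the convex, pointwise-closed set $\mathcal{X}(\tilde{S},a)$ produces an a.e.\ limit $X^{\ast,\tilde{S}}$; the pointwise limit yields $\langle V_\tau+(X^{\ast,\tilde{S}}_\tau,\bar{\mathbf{0}}),Z_\tau\rangle\ge 0$ on a countable dense set of times, and Lemma~2.8 of \cite{Campi06} (the local-supermartingale property of $\langle V,Z\rangle$) together with c\`{a}dl\`{a}g regularity extends this to every $[0,T]$-valued stopping time $\tau$. Finally, Lemma~\ref{ran} provides a maximal $\hat{X}^{\max,\tilde{S},\ast}\in\mathcal{X}(\tilde{S},a)$ dominating $X^{\ast,\tilde{S}}_T$ at $T$; since such a maximal element is a UI martingale under some $\mathbb{Q}^\ast\in\mathcal{M}^s(\tilde{S})$, a density/supermartingale argument as in the proof of Lemma~\ref{budlem} transfers the stopping-time inequality from $X^{\ast,\tilde{S}}$ to $\hat{X}^{\max,\tilde{S},\ast}$, verifying condition~(3') for $V$.

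The main obstacle is this last step: the maximal elements $X^{\max,\tilde{S},n}$ certifying $V^n$'s acceptability depend on $n$, so there is no common lower bound to plug into an admissible-style compactness argument, and the convex combinations $\widetilde{X}^{\max,\tilde{S},n}$ are not themselves maximal in general. The delicate technical issue is to simultaneously control the $\bar{\mathbb{P}}$-a.e.\ convergence of $\widetilde{V}^n$ and $\widetilde{X}^{\max,\tilde{S},n}$ via a diagonal extraction, and then to promote the resulting non-maximal limit $X^{\ast,\tilde{S}}$ to a genuine maximal dominator while preserving the stopping-time inequality $\langle V+(\cdot,\bar{\mathbf{0}}),Z\rangle\ge 0$.
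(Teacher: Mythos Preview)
Your compactness extraction and verification of the self-financing condition match the paper. The divergence, and the gap, is in how you verify condition~(3') for the limit $V$.

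You overlook that the \emph{fixed} family $\{\hat{X}^{\max,\tilde{S}}\}_{\tilde{S}\in\mathcal{S}^s}$ handed to you by Definition~\ref{newfffttto} can itself serve as the witness for (3') on $V$; there is no need to manipulate the $n$-dependent maximal elements $X^{\max,\tilde{S},n}$ beyond their role in making $\langle V^n,Z\rangle$ a \emph{true} (not merely local) supermartingale under suitable measures. The paper's route is: fix $\tilde{S}$, $\hat{Z}\in\mathcal{Z}^s(\tilde{S})$ and a stopping time $\tau$. For each $n$, take $\mathbb{Q}^m\in\mathcal{M}'(\tilde{S},n)$ (measures under which $X^{\max,\tilde{S},n}$ is a UI martingale) converging to $\hat{\mathbb{Q}}$ in $\mathbb{L}^1$. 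Under each $\mathbb{Q}^m$, the process $(V^{0,n}+\sum_i\tilde{S}^iV^{i,n})Z^{0,m}$ is a true supermartingale, and so is $\hat{X}^{\max,\tilde{S}}Z^{0,m}$ (nonnegative stochastic integral). Since by hypothesis $V_T^n+(\hat{X}_T^{\max,\tilde{S}},\bar{\mathbf{0}})\in\mathbb{L}^0(\hat{K}_T)$, the supermartingale inequality at $\tau$ dominates a nonnegative conditional expectation. Two applications of Fatou's lemma (first $m\to\infty$, then $n\to\infty$) give $\langle V_\tau+(\hat{X}_\tau^{\max,\tilde{S}},\bar{\mathbf{0}}),\hat{Z}_\tau\rangle\ge 0$ directly.

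Your alternative---convex-combining the $X^{\max,\tilde{S},n}$, extracting a Koml\'{o}s limit $X^{\ast,\tilde{S}}$, then promoting it to a maximal dominator---has a genuine gap: the diagonal extraction would have to be performed simultaneously for every $\tilde{S}\in\mathcal{S}^s$, an uncountable family, so no single sequence of convex weights can make all of them converge. Even for one $\tilde{S}$, your promotion step is essentially circular: domination $\hat{X}_T^{\max,\tilde{S},\ast}\ge X^{\ast,\tilde{S}}_T$ at $T$ does not yield domination at all stopping times, and recovering it requires exactly the density/supermartingale machinery that the paper applies directly with the already-given $\hat{X}^{\max,\tilde{S}}$. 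The whole detour through $X^{\ast,\tilde{S}}$ is unnecessary.
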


\begin{proof}[Proof of Lemma $\ref{fcl}$]
The conclusion holds if and only if we can show $\mathcal{A}_0$ is relatively Fatou closed since it is easy to see that $V\in\mathcal{V}_0^{\text{acpt}}$ if and only if $V+x\in\mathcal{V}_x^{\text{acpt}}$. Therefore, we will only prove that $\mathcal{A}_0$ is relatively Fatou closed.

Given $\hat{a}>0$ and for each $\tilde{S}\in\mathcal{S}^s$, choose and fix a maximal element $\hat{X}^{\max,\tilde{S}}\in\mathcal{X}(\tilde{S},\hat{a})$. Let $V^n$ be a sequence in $\mathcal{V}_0^{\text{acpt}}$ satisfying $V_T^n+(\hat{X}_T^{\max,\tilde{S}}, \bar{\mathbf{0}})\in\mathbb{L}^0(\hat{K}_T)$ for all $\tilde{S}\in\mathcal{S}^s$, so we get $V^n\in\mathcal{V}_{0,\hat{a}}^{\text{acpt}}$. Assume that $V_T^n$ converges a.s. to some $\mathcal{F}_T$ measurable random variable $X\in\mathbb{L}^0(\mathbb{R}^{1+d})$. According to Lemma $\ref{bdp}$ and Proposition $3.4$ of \cite{Campi06}, there exists a sequence of convex combinations of $V^n$ still denoted by $V^n$ which converges to some finite variation, predictable process $V$ pointwise. Hence we immediately get $V_T=X$ and $V_T+(\hat{X}_T^{\max,\tilde{S}},\bar{\mathbf{0}})\in\mathbb{L}^0(\hat{K}_T)$ for all $\tilde{S}\in\mathcal{S}^s$. It is easy to check that condition $(\ref{selfcond})$ holds and hence $V$ is a self-financing portfolio process. It is enough to prove that it is an acceptable portfolio process. For each $\tilde{S}\in\mathcal{S}^s$, we consider any $\hat{Z}\in\mathcal{Z}^s(\tilde{S})$ and the maximal element $\hat{X}^{\max,\tilde{S}}$, for any $t\in[0,T]$, we have
\begin{equation}
\begin{split}
&\langle V_t^n+(\hat{X}_t^{\max,\tilde{S}}, \bar{\mathbf{0}}), \hat{Z}_t\rangle=\Big(V_t^{0,n}+\sum_{i=1}^{d}\tilde{S}_t^iV_t^{i,n}+\hat{X}_t^{\max,\tilde{S}}\Big)\hat{Z}_t^0,\\
\text{and}\ \ \ &\langle V_t+(\hat{X}_t^{\max,\tilde{S}}, \bar{\mathbf{0}}), \hat{Z}_t\rangle=\Big(V_t^{0}+\sum_{i=1}^{d}\tilde{S}_t^iV_t^{i}+\hat{X}_t^{\max,\tilde{S}}\Big)\hat{Z}_t^0,\nonumber
\end{split}
\end{equation}
where we have $\hat{Z}_t^0=\mathbb{E}\Big[\frac{d\hat{\mathbb{Q}}}{d\mathbb{P}}\Big|\mathcal{F}_t\Big]$ and $\hat{\mathbb{Q}}\in\mathcal{M}^s(\tilde{S})$.

For the same $\tilde{S}\in\mathcal{S}^s$, we now consider the maximal element $X^{\max,\tilde{S}, n}$ in the definition of each acceptable portfolio $V^n$ as the lower bounded. Define the set
\begin{equation}
\mathcal{M}'(\tilde{S}, n)\triangleq \{\mathbb{Q}\in\mathcal{M}^s(\tilde{S}): X^{\max,\tilde{S}, n} \text{ is a UI martingale under } \mathbb{Q}\}\nonumber.
\end{equation}
It follows that there exists a sequence of $\mathbb{Q}^m\in\mathcal{M}'(\tilde{S}, n)$ converging to $\hat{\mathbb{Q}}$ as $m\rightarrow\infty$ in the norm topology of $\mathbb{L}^1$ under $\mathbb{P}$. By passing to the subsequence if necessary, we deduce that $Z_t^{0,m}$ converges to $\hat{Z}_t^{0}$ $\mathbb{P}$-a.s. for all $t\in[0,T]$ where $Z_t^{0,m}=\mathbb{E}[\frac{d\mathbb{Q}^m}{d\mathbb{P}}|\mathcal{F}_t]$. Therefore, for any $t\in[0,T]$, the following holds:
\begin{equation}
\Big(V_t^{0,n}+\sum_{i=1}^{d}\tilde{S}_t^iV_t^{i,n}+\hat{X}_t^{\max,\tilde{S}}\Big)\hat{Z}_t^0=\lim_{m\rightarrow\infty}\Big(V_t^{0,n}+\sum_{i=1}^{d}\tilde{S}_t^iV_t^{i,n}+\hat{X}_t^{\max,\tilde{S}}\Big)Z_t^{0,m}.\nonumber
\end{equation}
On the other hand, for each fixed $\mathbb{Q}\in\mathcal{M}'(\tilde{S},n)$, similar to the proof of Lemma $\ref{bdp}$, we obtain that $(V_t^{0,n}+\sum_{i=1}^{d}\tilde{S}_t^{i}V_t^{i,n})Z_t^{0,m}$ is a true supermartingale. Also, we have the stochastic integral $\hat{X}^{\max,\tilde{S}}$ is a supermartingale under $\mathbb{Q}^m$, hence $\hat{X}^{\max, \tilde{S}}Z^{0,m}$ is another supermartingale. It follows that for any $[0,T]$-valued stopping time $\tau$,
\begin{equation}
\Big(V_{\tau}^{0,n}+\sum_{i=1}^{d}\tilde{S}_{\tau}^iV_{\tau}^{i,n}+\hat{X}_{\tau}^{\max,\tilde{S}}\Big)Z_{\tau}^{0,m}\geq \mathbb{E}\Big[\Big(V_T^{0,n}+\sum_{i=1}^{d}\tilde{S}_T^iV_T^{i,n}+\hat{X}_T^{\max,\tilde{S}}\Big)Z_T^{0,m}\Big|\mathcal{F}_{\tau}\Big].\nonumber
\end{equation}
We know that $V_T^n+(\hat{X}_T^{\max,\tilde{S}},\bar{\mathbf{0}})\in\mathbb{L}^0(\hat{K}_T)$ which yields that  $(V_T^{0,n}+\sum_{i=1}^{d}\tilde{S}_T^iV_T^{i,n}+\hat{X}_T^{\max,\tilde{S}})Z_T^{0,m}\geq 0$, $\mathbb{P}$-a.s., and therefore Fatou's Lemma leads to
\begin{equation}
\Big(V_{\tau}^{0,n}+\sum_{i=1}^{d}\tilde{S}_{\tau}^iV_{\tau}^{i,n}+\hat{X}_{\tau}^{\max,\tilde{S}}\Big)\hat{Z}_{\tau}^{0}\geq \mathbb{E}\Big[\Big(V_T^{0,n}+\sum_{i=1}^{d}\tilde{S}_T^iV_T^{i,n}+\hat{X}_T^{\max,\tilde{S}}\Big)\hat{Z}_T^{0}\Big|\mathcal{F}_{\tau}\Big].\nonumber
\end{equation}
Since $V^n$ also converges to $V$ pointwise, again by $V_T^n+(\hat{X}_T^{\max,\tilde{S}},\bar{\mathbf{0}})\in\mathbb{L}^0(\hat{K}_T)$ and Fatou's Lemma, we obtain that
\begin{equation}
\Big(V_{\tau}^{0}+\sum_{i=1}^{d}\tilde{S}_{\tau}^iV_{\tau}^{i}+\hat{X}_{\tau}^{\max,\tilde{S}}\Big)\hat{Z}_{\tau}^{0}\geq \mathbb{E}\Big[\Big(V_T^{0}+\sum_{i=1}^{d}\tilde{S}_T^iV_T^{i}+\hat{X}_T^{\max,\tilde{S}}\Big)\hat{Z}_T^{0}\Big|\mathcal{F}_{\tau}\Big].\nonumber
\end{equation}
Therefore, we can see that $(V_{\tau}^{0}+\sum_{i=1}^{d}\tilde{S}_{\tau}^iV_{\tau}^{i}+\hat{X}_{\tau}^{\max,\tilde{S}})\hat{Z}_{\tau}^{0}\geq 0$, $\mathbb{P}$-a.s. which is equivalent to $\langle V_{\tau}+(\hat{X}^{\max,\tilde{S}}_{\tau},\bar{\mathbf{0}}), \hat{Z}_{\tau}\rangle\geq 0$, $\mathbb{P}$-a.s. for all $[0,T]$-valued stopping time $\tau$ and any $\hat{Z}\in\mathcal{Z}^s(\tilde{S})$. In conclusion, it is proved that the limit process $V$ is an acceptable portfolio with the constant $\hat{a}>0$ and $\hat{X}^{\max,\tilde{S}}\in\mathcal{X}(\tilde{S},\hat{a})$ which verifies the fact that $\mathcal{A}_0$ is relatively Fatou closed.
\end{proof}

\begin{lemma}
For each fixed $n\in\mathbb{N}$, let us define the set of truncated terminal liquidation values $\mathcal{A}_x^n\triangleq \{Y: Y=V_T\mathbf{1}_{\{|V_T|\leq n\}},\ V_T\in\mathcal{A}_x\}$. In addition, we consider the set $\mathcal{A}_x^{\infty}\triangleq \bigcup_{n\in\mathbb{N}}\mathcal{A}_x^n$. We have the following characterization
\begin{equation}\label{chaA}
\mathcal{A}_x^{\infty}=\Big\{Y\in\mathbb{L}_r^{\infty}(\mathbb{R}^{1+d},\mathcal{F}_T): \mathbb{E}[\langle Y, \eta\rangle]\leq \sup_{V_T\in\mathcal{A}_x^{\infty}}\mathbb{E}[\langle V_T, \eta\rangle],\ \forall \eta\in\mathbb{L}^1(\hat{K}^{\ast}_T)\Big\},
\end{equation}
where $\mathbb{L}_{r}^{\infty}(\mathbb{R}^{1+d},\mathcal{F}_T)$ is the set of all $\mathbb{R}^{1+d}$-valued and $\mathcal{F}_T$-measurable random vectors $Y\in\mathbb{L}^{\infty}$ such that there exists $a>0$ and for each $\tilde{S}$, there exists an $X^{\max,\tilde{S}}\in\mathcal{X}(\tilde{S}, a)$ with $Y+(X_T^{\max,\tilde{S}}, \bar{\mathbf{0}})\in\mathbb{L}^0(\hat{K}_T)$.
\end{lemma}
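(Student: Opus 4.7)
The inclusion ``$\subseteq$'' is essentially a bookkeeping exercise. If $Y\in\mathcal{A}_x^\infty$, then $Y=V_T\mathbf{1}_{\{|V_T|\le n\}}$ for some $V\in\mathcal{V}_x^{\text{acpt}}$ and $n\in\mathbb{N}$, so $Y$ is a bounded $\mathbb{R}^{1+d}$-valued random vector. To see that $Y\in\mathbb{L}_r^\infty$, fix any $\tilde S\in\mathcal S^s$ and let $X^{\max,\tilde S}\in\mathcal X(\tilde S,a)$ be the maximal element associated with $V$; on $\{|V_T|\le n\}$ we have $Y+(X_T^{\max,\tilde S},\bar{\mathbf 0})=V_T+(X_T^{\max,\tilde S},\bar{\mathbf 0})\in\hat K_T$, and on $\{|V_T|>n\}$ we have $Y+(X_T^{\max,\tilde S},\bar{\mathbf 0})=(X_T^{\max,\tilde S},\bar{\mathbf 0})\in\mathbb{R}_+^{1+d}\subset\hat K_T$, since $X_T^{\max,\tilde S}\ge 0$. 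The inequality is then trivial, as $Y$ itself appears in the supremum on the right-hand side.

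For the reverse inclusion ``$\supseteq$'', I plan a Hahn--Banach separation argument in the dual pair $(\mathbb{L}^\infty(\mathbb{R}^{1+d}),\mathbb{L}^1(\mathbb{R}^{1+d}))$. Suppose by contradiction that $Y$ satisfies the displayed inequality but $Y\notin\mathcal{A}_x^\infty$. First I would verify that $\mathcal{A}_x^\infty$ is convex: given $Y_i=V_T^i\mathbf{1}_{\{|V_T^i|\le n_i\}}$ for $i=1,2$, form $V=\lambda V^1+(1-\lambda)V^2\in\mathcal V_x^{\text{acpt}}$ and, after enlarging the truncation threshold to $N\ge n_1+n_2$, express $\lambda Y_1+(1-\lambda)Y_2$ as $V_T\mathbf{1}_{\{|V_T|\le N\}}$ on the overlap and absorb the boundary discrepancies by passing to a slightly modified acceptable portfolio (possible because adjoining an element of $\mathbb{R}_+^{1+d}$-valued random variables preserves acceptability). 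Next, using Lemma~\ref{fcl} together with the Krein--Smulian theorem, I would argue that $\mathcal{A}_x^\infty\cap\{|Y|\le M\}$ is weak$^*$-closed in $\mathbb{L}^\infty$ for every $M$, since $\mathcal{A}_x$ is relatively Fatou-closed and the truncation operation is compatible with almost sure convergence on bounded sets.

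Picking $M$ with $M\ge\|Y\|_\infty$ large enough that $Y\notin\mathcal{A}_x^\infty\cap\{|\cdot|\le M\}$, Hahn--Banach separation in $\sigma(\mathbb{L}^\infty,\mathbb{L}^1)$ produces some $\eta\in\mathbb{L}^1(\mathbb{R}^{1+d})$ and a real number $\gamma$ with
\begin{equation}
\mathbb{E}[\langle Y,\eta\rangle]>\gamma\ge\sup_{W\in\mathcal{A}_x^\infty\cap\{|\cdot|\le M\}}\mathbb{E}[\langle W,\eta\rangle].\nonumber
\end{equation}
The crux is to upgrade $\eta$ to a $\hat K_T^*$-valued functional. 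To this end, I would exploit the ``free disposal'' structure: for any $k\in\mathbb{L}^\infty(\hat K_T)$ and any $Y_0\in\mathcal A_x^\infty$ we can construct $Y_0-k\cdot\mathbf{1}_A\in\mathcal A_x^\infty$ (for suitable $A\in\mathcal F_T$) by subtracting $k\cdot\mathbf 1_A$ from the ambient truncation, which still corresponds to some acceptable portfolio terminal after possibly enlarging the truncation threshold. If $\mathbb P(\eta\notin\hat K_T^*)>0$, there is $k\in\mathbb{L}^0(\hat K_T)$ with $\mathbb E[\langle k,\eta\rangle]<0$; scaling $k$ would make the right-hand sup infinite, contradicting separation. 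Hence $\eta\in\mathbb{L}^1(\hat K_T^*)$, and passing to the monotone-limit supremum over all $M$ contradicts the inequality assumed of $Y$.

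The main obstacle will be the last step: showing that the separating functional $\eta$ lies in $\mathbb{L}^1(\hat K_T^*)$, because unlike the classical admissible-portfolio setup the set $\mathcal{A}_x^\infty$ is \emph{not} literally closed under adding an element of $-\mathbb{L}^\infty(\hat K_T)$; one has to thread carefully between the rigid truncation defining $\mathcal{A}_x^n$ and the flexibility offered by varying the underlying acceptable portfolio. A secondary obstacle is verifying Fatou-closedness at the level of $\mathcal{A}_x^\infty$ rather than $\mathcal{A}_x$, which requires showing that an a.s.\ limit of truncated terminal values of acceptable portfolios is again of that form, for which the common-maximal-element control furnished by Assumption~\ref{endowassum} is essential.
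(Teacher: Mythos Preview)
Your overall strategy---use Krein--Smulian to upgrade closedness-in-probability on balls to weak$^*$ closedness in $\mathbb{L}^\infty$, then separate with a functional in $\mathbb{L}^1$ and force that functional into $\mathbb{L}^1(\hat K_T^*)$ via free disposal---is exactly the route the paper takes. The paper is much terser: it notes that on any $\|\cdot\|_\infty$-ball one may take the constant $a=\kappa+1$ as a common maximal element for every $\tilde S$, so that bounded a.s.\ convergence is automatically relative Fatou convergence, then defers the weak$^*$ closedness to Proposition~5.5.1 and the bipolar characterization (including the positivity of the separating functional) to Theorem~5.5.3 of Kabanov--Safarian, without writing out the separation argument you sketch.

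There is, however, a genuine gap in your convexity step. For $Y_i=V_T^i\mathbf{1}_{\{|V_T^i|\le n_i\}}$, the convex combination $\lambda Y_1+(1-\lambda)Y_2$ agrees with $(\lambda V^1+(1-\lambda)V^2)_T$ only on $\{|V_T^1|\le n_1\}\cap\{|V_T^2|\le n_2\}$; on the complementary pieces it is $\lambda V_T^1$, or $(1-\lambda)V_T^2$, or $0$, none of which you can realize as $W_T\mathbf{1}_{\{|W_T|\le N\}}$ for $W=\lambda V^1+(1-\lambda)V^2$ simply by enlarging $N$. Your proposed fix (``absorb the boundary discrepancies by passing to a slightly modified acceptable portfolio'') does not go through, because the piece you would need to discard on $\{|V_T^1|>n_1\}$, say, is $\lambda V_T^1$, which need not lie in $\hat K_T$ and hence cannot be thrown away by a self-financing adjustment. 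The paper does not address this point either---it simply invokes the Kabanov--Safarian results---so you have correctly located a delicate step, but your proposed resolution of it is not sound as written. The same remark applies to your Fatou-closedness sketch for $\mathcal{A}_x^\infty$: an a.s.\ limit of truncations $V_T^n\mathbf{1}_{\{|V_T^n|\le m_n\}}$ lands in $\mathcal{A}_x$ by Lemma~\ref{fcl}, but you still owe an argument that the limit is itself a truncation of some acceptable terminal value.
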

\begin{proof}
For any constant $\kappa>0$ and for each $\xi\in \{\xi:\|\xi\|_{\infty}\leq \kappa\}$, we can always find the constant $a=\kappa+1$ and for each $\tilde{S}\in\mathcal{S}^s$, we can choose $X^{\max,\tilde{S}}_t\equiv a\in\mathcal{X}(\tilde{S}, a)$ for all $t\in[0,T]$ so that $\xi+(X_T^{\max,\tilde{S}},\bar{\mathbf{0}})\subseteq\mathbb{L}^0(\mathbb{R}_+^{1+d})\subseteq\mathbb{L}^0(\hat{K}_T)$. In addition, for any $V_T\in\mathcal{A}_x$ and the corresponding $X^{\max,\tilde{S}}$, we claim that $V_T\mathbf{1}_{\{|V_T|\leq n\}}+(X_T^{\max,\tilde{S}},\bar{\mathbf{0}})\in\mathbb{L}^0(\hat{K}_T)$. To see this, it is noted that
\begin{equation}\label{truncV}
\begin{split}
V_T\mathbf{1}_{\{|V_T|\leq n\}}+(X_T^{\max,\tilde{S}}, \bar{\mathbf{0}})&\geq V_T\mathbf{1}_{\{|V_T|\leq n\}}+(X_T^{\max,\tilde{S}}, \bar{\mathbf{0}})\mathbf{1}_{\{|V_T|\leq n\}}\\
&=\Big(V_T+(X_T^{\max,\tilde{S}}, \bar{\mathbf{0}})\Big)\mathbf{1}_{\{|V_T|\leq n\}}\in\mathbb{L}^0(\hat{K}_T).
\end{split}
\end{equation}
Therefore, a bounded sequence convergent a.s. is also relatively Fatou convergent. Lemma $\ref{fcl}$ states that $\mathcal{A}_x$ is relatively Fatou closed, and thus it is straightforward to derive that an intersection of $\mathcal{A}_x^{\infty}$ with ball $\{\xi:\|\xi\|_{\infty}\leq \kappa\}$ is closed in probability for every $\kappa>0$. By the classical result, see Proposition $5.5.1$ of \cite{KS09}, $\mathcal{A}_x^{\infty}$ is weak$^{\ast}$ closed (i.e., closed in $\sigma(\mathbb{L}^{\infty}, \mathbb{L}^1)$). Following the same argument of Theorem $5.5.3$ of \cite{KS09}, we obtain that $(\ref{chaA})$ holds true.
\end{proof}

\begin{proof}[Proof of Lemma $\ref{lemsp}$]
It has been proved that $\mathcal{A}_x$ is relatively Fatou closed, and we now proceed with the verification that $\mathcal{A}_x^{\infty}$ is relatively Fatou-dense in $\mathcal{A}_x$. To this end, let us consider any $V_T\in\mathcal{A}_x$ with the constant $\hat{a}>0$ and $\hat{X}^{\max,\tilde{S}}\in\mathcal{X}(\tilde{S}, \hat{a})$ for each $\tilde{S}\in\mathcal{S}^s$. We need to show the existence of a sequence $V_T^n\in\mathcal{A}_x^{\infty}$ satisfying $V_T^n+(\hat{X}_T^{\max,\tilde{S}}, \bar{\mathbf{0}})\in\mathbb{L}^0(\hat{K}_T)$ for all $\tilde{S}\in\mathcal{S}^s$ as well as $V_T^n\rightarrow V_T$ a.s.. Fix $V_T$ and similar to Theorem $4.1$ of \cite{Campi06}, we consider the sequence of $V_T^n$ defined by
\begin{equation}
V_T^n=V_T\mathbf{1}_{\{|V_T|\leq n\}}.\nonumber
\end{equation}
It is clear that $V^n_T\in\mathcal{A}_x^n\subset \mathcal{A}_x^{\infty}$. Following the same argument in $(\ref{truncV})$, it is also easy to see that $V_T^n+(\hat{X}_T^{\max,\tilde{S}}, \bar{\mathbf{0}})\in\mathbb{L}^0(\hat{K}_T)$ for all $\tilde{S}\in\mathcal{S}^s$ for each $n\in\mathbb{N}$. Moreover,
\begin{equation}
V_T-\mathbb{L}^{\infty} \subseteq \mathcal{A}_x^{\infty},\nonumber
\end{equation}
for all $V_T\in\mathcal{A}_x^{\infty}$.

As $\mathcal{A}_x^{\infty}$ is relatively Fatou dense in $\mathcal{A}_x$, we can try to characterize elements in $\mathcal{A}_x$ using $(\ref{chaA})$. For any $Y\in \mathcal{A}_x$, we can find a constant $a>0$ and for each $\tilde{S}$ there exists an $X^{\max,\tilde{S}}\in\mathcal{X}(\tilde{S}, a)$ such that $Y+(X_T^{\max,\tilde{S}}, \bar{\mathbf{0}})\in\mathbb{L}^0(\hat{K}_T)$. It follows that there exists a sequence $Y^n\in\mathcal{A}_x^{\infty}$ such that $Y^n+(X_T^{\max,\tilde{S}}, \bar{\mathbf{0}})\in\mathbb{L}^0(\hat{K}_T)$ and $Y^n$ converges to $Y$ a.s.. Therefore, Fatou's lemma leads to
\begin{equation}\label{1chA}
\mathbb{E}[\langle Y+(X_T^{\max,\tilde{S}}, \bar{\mathbf{0}}), \eta\rangle]\leq \sup_{V_T\in\mathcal{A}_x^{\infty}}\mathbb{E}[\langle V_T, \eta\rangle] +\mathbb{E}[\langle (X_T^{\max,\tilde{S}}, \bar{\mathbf{0}}), \eta\rangle] ,\ \forall \eta\in\mathbb{L}^1(\hat{K}^{\ast}_T).
\end{equation}
The second expectation on the right hand side is well defined since $X_T^{\max,\tilde{S}}\geq 0$ a.s.. On the other hand, suppose that $Y+(X_T^{\max,\tilde{S}}, \bar{\mathbf{0}})\in\mathbb{L}^0(\hat{K}_T)$, we can construct $Y^n=Y\mathbf{1}_{\{|Y|\leq n\}} \in\mathbb{L}^{\infty}$ such that $Y^n$ relatively Fatou converges to $Y$. Moreover, by $(\ref{1chA})$, we can deduce that each $Y^n$ satisfies $\mathbb{E}[\langle Y^n, \eta\rangle]\leq  \sup_{V_T\in\mathcal{A}_x^{\infty}}\mathbb{E}[\langle V_T, \eta\rangle] $ for all $\eta\in\mathbb{L}^1(\hat{K}^{\ast}_T,\mathcal{F}_T)$. So $Y^n\in\mathcal{A}_x^{\infty}\subseteq\mathcal{A}_x$. The fact that $\mathcal{A}_x$ is relatively Fatou closed yields that $Y\in\mathcal{A}_x$.

It follows from the above argument that the set $\mathcal{A}_x$ can be rewritten as
\begin{equation}\label{2chA}
\begin{split}
\mathcal{A}_x=\Big\{&Y\in\mathbb{L}^0(\mathbb{R}^{1+d},\mathcal{F}_T):\ \text{there exists $a>0$ and for each $\tilde{S}\in\mathcal{S}^s$, there exists an}\\
&X^{\max,\tilde{S}}\in\mathcal{X}(\tilde{S},a)\ \text{with $Y+(X_T^{\max,\tilde{S}}, \bar{\mathbf{0}})\in\mathbb{L}^0(\hat{K}_T)$ and}\ \mathbb{E}[\langle Y+(X_T^{\max,\tilde{S}},\bar{\mathbf{0}}), \eta\rangle]\\
&\leq\sup_{V_T\in\mathcal{A}_x^{\infty}}\mathbb{E}[\langle V_T, \eta\rangle] +\mathbb{E}[\langle (X_T^{\max,\tilde{S}}, \bar{\mathbf{0}}), \eta\rangle] ,\ \forall \eta\in\mathbb{L}^1(\hat{K}^{\ast}_T)\Big\}.
\end{split}
\end{equation}

Pick $X\notin \mathcal{A}_x$. According to $(\ref{2chA})$, for any $a>0$ and $X^{\max,\tilde{S}}\in\mathcal{X}(\tilde{S},a)$, there exists $\eta\in\mathbb{L}^1(\hat{K}^{\ast}_T)$ such that
\begin{equation}\label{conch}
\mathbb{E}[\langle X+(X_T^{\max,\tilde{S}},\bar{\mathbf{0}}), \eta\rangle]> \sup_{V_T\in\mathcal{A}_x^{\infty}}\mathbb{E}[\langle V_T, \eta\rangle] +\mathbb{E}[\langle (X_T^{\max,\tilde{S}}, \bar{\mathbf{0}}), \eta\rangle].
\end{equation}
In particular, we can consider $X_t^{\max, \tilde{S}}\equiv a$ for $t\in[0,T]$. Therefore, $(\ref{conch})$ simplifies to be
\begin{equation}\label{conch2}
\mathbb{E}[\langle X, \eta\rangle]> \sup_{V_T\in\mathcal{A}_x^{\infty}}\mathbb{E}[\langle V_T, \eta\rangle].
\end{equation}
We can thus define the process $Z_t=\mathbb{E}[\eta|\mathcal{F}_t]$, $t\in[0,t]$ and obtain $\mathbb{E}[\langle X,Z_T\rangle]>\langle x,Z_0\rangle$ by its definition. Following the same proof of Theorem $4.1$ of \cite{Campi06}, it is easy to verify that $Z_t$ is a CPS. Let $Z^s$ be a SCPS, and for $0\leq \beta<1$ sufficiently small, the process $Z_t^{\beta}=\beta Z_t^s+Z_t$ is a SCPS. We will have $\mathbb{E}[\langle X, Z_T^{\beta}\rangle]> \langle x, Z_0\rangle $. According to $(\ref{superhedging_ineq})$, it follows that $g\in\mathcal{A}_x$. Therefore there exists a $V\in\mathcal{V}_x^{\text{acpt}}$ such that $V_T-g\in\mathbb{L}^0(\hat{K}_T)$.
\end{proof}

\subsection{Proofs of Main Results in Section $\ref{section2}$}

\begin{proof}[Proof of Proposition $\ref{bcccc}$]
If $c\in\mathcal{C}_{x,q\cdot\mathcal{E}_T}$, there exists an acceptable portfolio $V\in\mathcal{H}(x,q)$ such that
\begin{equation}
\Big\langle \Big(\int_0^Tc_tdt,\bar{\mathbf{0}}\Big), Z_T\Big\rangle\leq \langle V_T+(q\cdot\mathcal{E}_T, \bar{\mathbf{0}}), Z_T\rangle,\ \ \ \forall Z\in\mathcal{Z}^s. \nonumber
\end{equation}
Also, for each $Z\in\mathcal{Z}^s$, we have $\mathbb{E}\Big[\Big\langle \Big(\int_0^Tc_tdt,\bar{\mathbf{0}}\Big), Z_T\Big\rangle\Big]=\mathbb{E}\Big[\int_0^Tc_tdtZ^0_T\Big]$. Using integration by parts and choosing the localization sequence, we deduce that
\begin{equation}
\mathbb{E}\Big[\Big\langle \Big(\int_0^Tc_tdt,\bar{\mathbf{0}}\Big), Z_T\Big\rangle\Big]=\mathbb{E}\Big[\int_0^Tc_tZ_t^0dt\Big].\nonumber
\end{equation}
Following the proof of Lemma $\ref{budlem}$, we obtain that
\begin{equation}
\mathbb{E}[\langle V_T+(q\cdot\mathcal{E}_T, \bar{\mathbf{0}}), Z_T\rangle]\leq x+\mathbb{E}[\langle (q\cdot\mathcal{E}_T, \bar{\mathbf{0}}), Z_T\rangle],\ \ \ \forall Z\in\mathcal{Z}^s,\nonumber
\end{equation}
and hence $(\ref{budccc})$ holds.

On the other hand, for the process $c\geq 0$ which satisfies $(\ref{budccc})$, let us define $g\triangleq (\int_0^Tc_tdt-q\cdot\mathcal{E}_T, \bar{\mathbf{0}})$. By Lemma $\ref{ran}$, there exists a constant $\hat{a}>0$ and for each $\tilde{S}$, there exists an $\hat{X}^{\max,\tilde{S}}\in\mathcal{X}(\tilde{S}, \hat{a})$ such that $q\cdot\mathcal{E}_T\leq \zeta\sum_{i=1}^{N}|\mathcal{E}_T^i|\leq \hat{X}_T^{\max,\tilde{S}}$. It follows that $g+(\hat{X}_T^{\max,\tilde{S}}, \bar{\mathbf{0}})\in\mathbb{L}^0(\hat{K}_T)$. Moreover, by $(\ref{budccc})$, we have
\begin{equation}
\mathbb{E}[\langle g, Z_T\rangle]\leq x=\langle (x,\bar{\mathbf{0}}), Z_0\rangle.\nonumber
\end{equation}
Lemma $\ref{lemsp}$ implies the existence of an acceptable portfolio with $V_0=(x,\bar{\mathbf{0}})$ such that $V_T-g=V_T+(-\int_0^Tc_tdt+q\cdot\mathcal{E}_T,\bar{\mathbf{0}})\in\mathbb{L}^0(\hat{K}_T)$. Therefore, the conclusion holds that $c\in\mathcal{C}_{x,q\cdot\mathcal{E}_T}$.
\end{proof}

\begin{proof}[Proof of Lemma $\ref{equ1}$]
We first show that for all $(x,\bar{\mathbf{0}},q)\in\mathcal{K}$ and $z>0$,
\begin{equation}\label{sma}
x+\mathbb{E}[q\cdot\mathcal{E}_TZ_T^0]\geq z\mathbb{E}\Big[\int_0^Te^{\int_0^t(\delta_v-\alpha_v)dv}Z_t^0dt\Big],\ \forall Z\in\mathcal{Z}^s
\end{equation}
if and only if $\mathcal{A}(x,q,z)\neq\emptyset$.

On one hand, for the fixed $(x,\bar{\mathbf{0}},q)\in\mathcal{K}$ and $z>0$ such that $\mathcal{A}(x,q,z)\neq \emptyset$, there exists $c\in\mathbb{L}_+^0$ with $c_t\geq F(c)_t$ for $t\in[0,T]$ and
\begin{equation}
\mathbb{E}\Big[\int_0^Tc_tZ_t^0dt\Big]\leq x+\mathbb{E}[q\cdot\mathcal{E}_TZ_T^0],\ \ \forall Z\in\mathcal{Z}^s.\nonumber
\end{equation}
We claim that this choice $c_t\geq \bar{c}_t$ for $t\in[0,T]$ where $\bar{c}_t\equiv F(\bar{c})_t$ is the subsistent consumption process which equals its habit formation process all the time. To see this, by the definition of $F(c)_t$ and the constraint $c_t\geq F(c)_t$, it follows that
\begin{equation}
dF(c)_t\geq (\delta_tF(c)_t-\alpha_tF(c)_t)dt,\ \ F(c)_0=z.\nonumber
\end{equation}
Also, we always have
\begin{equation}
d\bar{c}_t=(\delta_t\bar{c}_t-\alpha_t\bar{c}_t)dt,\ \ \bar{c}_0=z,\nonumber
\end{equation}
from which, it can be solved that $\bar{c}_t=ze^{\int_0^t(\delta_v-\alpha_v)dv}$ for $t\in[0,T]$.

By subtraction, it follows that
\begin{equation}
e^{\int_0^t(\delta_v-\alpha_v)dv}(F(c)_t-\bar{c}_t)\geq 0,\ \ t\in[0,T].\nonumber
\end{equation}

It is therefore a consequence that $c_t\geq \bar{c}_t=ze^{\int_0^t(\delta_v-\alpha_v)dv}$ for $t\in[0,T]$, and by Lemma $\ref{budccc}$, we arrive at $(\ref{sma})$.

Consider $(x,\bar{\mathbf{0}},q)\in\mathcal{K}$ and $z>0$ such that $(\ref{sma})$ holds. We can always construct $c_t\equiv\bar{c}_t=ze^{\int_0^t(\delta_v-\alpha_v)dv}$ such that $c_t\equiv F(c)_t$ holds. The definition of $\mathcal{A}(x,q,z)$ and Lemma $\ref{budccc}$ yield that $c\in\mathcal{A}(x,q,z)$ and hence $\mathcal{A}(x,q,z)\neq\emptyset$.

So far, it has been proved that
\begin{equation}\label{part1}
\begin{split}
 \text{int}&\Big\{(x,q,z)\in\mathbb{R}^{N+2}: (x,\bar{\mathbf{0}},q)\in\mathcal{K},\ z>0\ \text{such that}\ \mathcal{A}(x,q,z)\neq \emptyset\Big\}\\
=&\Big\{ (x,q,z)\in\mathbb{R}^{N+2}: z>0\ \text{and}\  x+\mathbb{E}[q\cdot\mathcal{E}_TZ_T^0]>z\mathbb{E}\Big[\int_0^Te^{\int_0^t(\delta_v-\alpha_v)dv}Z_t^0dt\Big],\ \forall Z\in\mathcal{Z}^s\Big\}.
\end{split}
\end{equation}

It is enough to show the equivalence $\text{cl}\mathcal{B}_1=\mathcal{B}_2=\mathcal{B}_3$ for $z=0$, where we define
\begin{equation}
\begin{split}
&\mathcal{B}_1\triangleq \Big\{(x,q)\in\mathbb{R}^{N+1}:(x, \bar{\mathbf{0}}, q)\in\mathcal{K}\Big\},\\
&\mathcal{B}_2\triangleq \Big\{(x,q)\in\mathbb{R}^{N+1}:\mathcal{H}((x,\bar{\mathbf{0}}), q)\neq \emptyset\Big\},\\
&\mathcal{B}_3\triangleq \Big\{(x,q)\in\mathbb{R}^{N+1}: x+\mathbb{E}[q\cdot\mathcal{E}_TZ_T^0]\geq 0,\ \forall Z\in\mathcal{Z}^s\Big\}.\nonumber
\end{split}
\end{equation}

For the first equality, it is trivial to see that $\mathcal{B}_2\subseteq\text{cl}\mathcal{B}_1$. It is enough to verify that $\text{cl}\mathcal{B}_1\subseteq \mathcal{B}_2$. Choose $(x,q)\in\text{cl}\mathcal{B}_1$ and let $(x^n,q^n)_{n\geq 1}$ be a sequence in $\mathcal{B}_1$ that converges to $(x,q)$. The assertion of the lemma will follow if we show that $\mathcal{H}((x,\bar{\mathbf{0}}),q)\neq \emptyset$. Fix $V^n\in\mathcal{H}((x^n, \bar{\mathbf{0}}),q^n)$, $n\geq 1$, by Lemma $\ref{budlem}$, we have
\begin{equation}
\mathbb{E}[\langle V_T^n+(q^n\cdot\mathcal{E}_T, \bar{\mathbf{0}}),  Z_T\rangle]\leq x^n+\mathbb{E}[\langle (q^n\cdot\mathcal{E}_T,\bar{\mathbf{0}}), Z_T \rangle],\ \ \forall Z\in\mathcal{Z}^s.\nonumber
\end{equation}
Since $(x^n)_{n\geq 1}$ converges to $x$ and $(q^n)_{n\geq 1}$ converges to $q$, there exist constants $M_1>0$ and $M_2>0$ such that for $n$ large enough, we have $x^n<M_1$ and $q^n<M_2$. It follows that when $n$ is large, we have $V^n\in\mathcal{V}_{M_1}^{\text{acpt}}$ and $q_n\cdot\mathcal{E}_T\leq M_2\sum_{i=1}^{N}|\mathcal{E}_T^i|$. Lemma $\ref{ran}$ asserts that there exits a constant $\hat{a}>0$ and for each $\tilde{S}$, there exists an $\hat{X}^{\max,\tilde{S}}\in\mathcal{X}(\tilde{S}, \hat{a})$ such that $V_T^n+(\hat{X}_T^{\max,\tilde{S}}, \bar{\mathbf{0}})\in\mathbb{L}^0(\hat{K}_T)$ for $n$ large enough. We can then apply Lemma $\ref{bdp}$ and Proposition $3.4$ of \cite{Campi06}, passing if necessary to convex combinations, and assume that $V^n$ converges to a finite variation, predictable process $V$ pointwise. In particular, we obtain that $V_T+(q\cdot\mathcal{E}_T, \bar{\mathbf{0}})\in\mathbb{L}^0(\hat{K}_T)$ where $q\triangleq\lim_{n\rightarrow\infty} q^n$. Since $q\cdot\mathcal{E}_T\leq \zeta \sum_{i=1}^{N}|\mathcal{E}_T^i|$ where $\zeta=\max_{1\leq i\leq N}|q^i|$, by Lemma $\ref{ran}$ again, there exists a constant $a>0$ and for each $\tilde{S}\in\mathcal{S}^s$, there exists an $X^{\max,\tilde{S}}\in\mathcal{X}(\tilde{S}, a)$ such that $V_T+(X^{\max,\tilde{S}}_T, \bar{\mathbf{0}})\in\mathbb{L}^0(\hat{K}_T)$. Moreover, Fatou's lemma gives that
\begin{equation}
\begin{split}
\mathbb{E}[\langle V_T+(q\cdot\mathcal{E}_T, \bar{\mathbf{0}}), Z_T\rangle]&\leq \lim_{n\rightarrow\infty}\mathbb{E}[\langle V_T^n+(q^n\cdot\mathcal{E}_T, \bar{\mathbf{0}}), Z_T\rangle]\\
&\leq \lim_{n\rightarrow\infty} \Big(x^n+\mathbb{E}[\langle (q^n\cdot\mathcal{E}_T, \bar{\mathbf{0}}), Z_T\rangle]\Big)=x+\mathbb{E}[\langle (q\cdot\mathcal{E}_T, \bar{\mathbf{0}}), Z_T\rangle],\nonumber
\end{split}
\end{equation}
for all $Z\in\mathcal{Z}^s$. It follows that
\begin{equation}
\mathbb{E}[\langle V_T, Z_T\rangle]\leq x=\langle (x,\bar{\mathbf{0}}), Z_0\rangle ,\ \ \forall Z\in\mathcal{Z}^s.\nonumber
\end{equation}
By Lemma $\ref{lemsp}$, there exits an acceptable portfolio $\hat{V}$ with $\hat{V}_0=(x,\bar{\mathbf{0}})$ and $\hat{V}_T-V_T\in\mathbb{L}^0(\hat{K}_T)$, hence $\hat{V}\in \mathcal{H}((x,\bar{\mathbf{0}}), q)$.

To show $\mathcal{B}_2\subseteq\mathcal{B}_3$, for the fixed $(x,q)\in\mathcal{B}_2$, there exists a $V\in \mathcal{H}((x,\bar{\mathbf{0}}), q)$ and then Lemma $\ref{budlem}$ leads to
\begin{equation}
0\leq \mathbb{E}[\langle V_T+q\cdot\mathcal{E}_T, Z_T\rangle]\leq x+\mathbb{E}[\langle (q\cdot\mathcal{E}_T,\bar{\mathbf{0}}), Z_T\rangle],\ \ \forall Z\in\mathcal{Z}^s,\nonumber
\end{equation}
which completes the proof.

If $(x,q)\in\mathcal{B}_3$, define the $\mathcal{F}_T$-random variable $g\triangleq -q\cdot\mathcal{E}_T\geq -\zeta \sum_{i=1}^{N}|\mathcal{E}_T^i|$, where $\zeta=\max_{1\leq i\leq N}|q^i|$. Under Assumption $\ref{endowassum}$, by Lemma $\ref{ran}$, there exists a constant $a>0$ and for each $\tilde{S}$, there exits an $X^{\max,\tilde{S}}$ such that $(g+X^{\max,\tilde{S}},\bar{\mathbf{0}})\in\mathbb{L}^0(\hat{K}_T)$. Moreover, we have
\begin{equation}
\mathbb{E}[gZ_T^0]=\mathbb{E}[-q\cdot\mathcal{E}_TZ_T^0]\leq x=\langle (x,\bar{\mathbf{0}}), Z_0\rangle,\ \ \forall Z\in\mathcal{Z}^s.\nonumber
\end{equation}
Lemma $\ref{lemsp}$ guarantees the existence of an acceptable portfolio $\hat{V}$ with $\hat{V}_0=(x,\bar{\mathbf{0}})$ and $\hat{V}_T-(g,\bar{\mathbf{0}})\in\mathbb{L}^0(\hat{K}_T)$. Therefore, we have $\hat{V}_T+(q\cdot\mathcal{E}_T,\bar{\mathbf{0}})\in\mathbb{L}^0(\hat{K}_T)$ and $V\in\mathcal{H}((x,\bar{\mathbf{0}}, q))$, which verifies that $\mathcal{B}_3\subseteq\mathcal{B}_2$.

\end{proof}

\begin{proof}[Proof of Lemma $\ref{equivA}$]
Set $\tilde{c}_t=c_t-F(c)_t$ and it follows that
\begin{equation}
c_t=ze^{\int_0^t(\delta_v-\alpha_v)dv}+\tilde{c}_t+\int_0^t\delta_se^{\int_s^t(\delta_v-\alpha_v)dv}\tilde{c}_sds.\nonumber
\end{equation}
Denote $w_t\triangleq e^{\int_0^t(\delta_v\alpha_v)dv}$. By Fubini-Tonelli's theorem, we can deduce that
\begin{equation}\label{sim}
\begin{split}
\mathbb{E}\Big[\int_0^Tc_tZ_t^0dt\Big]&=z\mathbb{E}\Big[\int_0^Tw_tZ_t^0dt\Big]+\mathbb{E}\Big[\int_0^T\Big(\tilde{c}_t+\int_0^t\delta_se^{\int_s^t(\delta_v-\alpha_v)dv}\tilde{c}_sds\Big)Z_t^0dt\Big]\\
&=z\mathbb{E}\Big[\int_0^Tw_tZ_t^0dt\Big]+\mathbb{E}\Big[\int_0^T\tilde{c}_tZ^0_tdt+\int_0^T\delta_s\tilde{c}_s\Big(\int_s^Te^{\int_s^t(\delta_v-\alpha_v)dv}Z_t^0dt\Big)ds\Big]\\
&=z\mathbb{E}\Big[\int_0^Tw_tZ_t^0dt\Big]+\mathbb{E}\Big[\int_0^T\tilde{c}_tZ^0_tdt+\int_0^T\delta_t\tilde{c}_t\mathbb{E}\Big[\int_t^Te^{\int_t^s(\delta_v-\alpha_v)dv}Z_s^0ds\Big|\mathcal{F}_t\Big]dt\Big]\\
&=z\mathbb{E}\Big[\int_0^Tw_tZ_t^0dt\Big]+\mathbb{E}\Big[\int_0^T\tilde{c}_t\Gamma_tdt\Big].
\end{split}
\end{equation}

Following the similar computations in $(\ref{sim})$, we also obtain that
\begin{equation}
\mathbb{E}\Big[\int_0^Tw_tZ_t^0dt\Big]=\mathbb{E}\Big[\int_0^T\tilde{w}_t\Gamma_tdt\Big].\nonumber
\end{equation}

At last, it is observed that $Z_T^0=\Gamma_T$ by the definition of $\Gamma_T$, and hence $\mathbb{E}[q\cdot\mathcal{E}_TZ_T^0]=\mathbb{E}[q\cdot\mathcal{E}_T\Gamma_T]$, which completes the proof.
\end{proof}

\begin{proof}[Proof of Lemma $\ref{equ2}$]
It is enough to show that for any $(x,q,z)\in\mathbb{R}^{N+2}$, we have $\mathcal{A}(x,q,z)\neq \emptyset$ if and only if
\begin{equation}\label{assfo}
x+\mathbb{E}[q\cdot\mathcal{E}_TZ_T^0]-z\mathbb{E}\Big[\int_0^Te^{\int_0^t(\delta_v-\alpha_v)dv}Z_t^0dt\Big]\geq 0
\end{equation}
holds for all $Z\in\mathcal{Z}^s$.

If $\mathcal{A}(x,q,z)\neq\emptyset$, by its definition, there exists a $\hat{c}\in\mathbb{L}_+^0$ such that for any $Z\in\mathcal{Z}^s$ and hence any $\Gamma\in\widetilde{\mathcal{M}}$,
\begin{equation}
\begin{split}
0\leq \mathbb{E}\Big[\int_0^T\tilde{c}_t\Gamma_tdt\Big]&\leq x-z\mathbb{E}\Big[\int_0^T\tilde{w}_t\Gamma_tdt\Big]+\mathbb{E}[q\cdot\mathcal{E}_T\Gamma_T]\\
&=x-z\mathbb{E}\Big[\int_0^Te^{\int_0^t(\delta_v-\alpha_v)dv}Z_t^0dt\Big]+\mathbb{E}[q\cdot\mathcal{E}_TZ_T^0],\nonumber
\end{split}
\end{equation}
therefore $(\ref{assfo})$ holds trivially. On the other hand, if $(\ref{assfo})$ holds, it is enough to choose $c_t\equiv 0\in\mathcal{A}(x,q,z)$, which completes the proof.
\end{proof}

\subsection{Proofs of Main Results in Section $\ref{section3}$}
The proof of the Theorem \ref{mainth} is split into several results below.

For a vector $p=(p^0,p^1,\ldots,p^N)\in\mathbb{R}^{N+1}$, we denote $\mathcal{Z}^s(p)$ the subset of $\mathcal{Z}^s$ such that for $Z\in\mathcal{Z}^s(p)$,
\begin{equation}
\mathbb{E}\Big[\int_0^Te^{\int_0^t(\delta_v-\alpha_v)dv}Z_t^0dt\Big]=p^0,\ \ \ \mathbb{E}[\mathcal{E}_T^iZ_T^0]=p^i,\ \ 1\leq i\leq N.\nonumber
\end{equation}
Define the set $\mathcal{P}$ as the intersection of $\mathcal{L}$ with the hyperplane $y\equiv 1$. Given $p\in\mathcal{P}$, we can also define the auxiliary set
\begin{equation}
\widetilde{\mathcal{M}}(p)\triangleq \Big\{\Gamma\in\mathbb{L}_+^0:\Gamma_t=Z_t^0+\delta_t\mathbb{E}\Big[\int_t^Te^{\int_t^s(\delta_v-\alpha_v)dv}Z_s^0ds\Big|\mathcal{F}_t\Big],\ \forall t\in[0,T],\ Z\in\mathcal{Z}^s(p)\Big\}.\nonumber
\end{equation}
It follows that $\mathbb{E}\Big[\int_0^T\tilde{w}_t\Gamma_tdt\Big]=p^0$ and $\mathbb{E}[\mathcal{E}_T^i\Gamma_T]=p^i$, $1\leq i\leq N$ for all $\Gamma\in\widetilde{\mathcal{M}}(p)$.

\begin{lemma}\label{lem6}
Under all assumptions of Theorem $\ref{mainth}$, the set $\widetilde{\mathcal{M}}(p)$ is not empty if and only if $p\in\mathcal{P}$. In particular,
\begin{equation}
\bigcup_{p\in\mathcal{P}}\widetilde{\mathcal{M}}(p)=\widetilde{\mathcal{M}}.\nonumber
\end{equation}
\end{lemma}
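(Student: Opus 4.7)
The plan is to exploit the bijective correspondence between $\mathcal{Z}^s$ and $\widetilde{\mathcal{M}}$ induced by the defining formula, and then to identify $\mathcal{P}$ with the set of moment vectors coming from SCPS via a bipolar-type argument. First I would verify that the assignment $Z \mapsto \Gamma$, with $\Gamma_t = Z_t^0 + \delta_t \mathbb{E}\Big[\int_t^T e^{\int_t^s(\delta_v-\alpha_v)dv}Z_s^0 ds \Big| \mathcal{F}_t\Big]$, is a well-defined surjection from $\mathcal{Z}^s$ onto $\widetilde{\mathcal{M}}$; well-definedness comes from Assumption~\ref{asssss}(ii), and surjectivity is built into the definition of $\widetilde{\mathcal{M}}$. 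Repeating the Fubini--Tonelli computation carried out in the proof of Lemma~\ref{equivA}, the moment vectors match: $\mathbb{E}[\int_0^T \tilde{w}_t \Gamma_t dt] = \mathbb{E}[\int_0^T e^{\int_0^t(\delta_v-\alpha_v)dv}Z_t^0 dt]$ and $\mathbb{E}[\mathcal{E}_T^i \Gamma_T] = \mathbb{E}[\mathcal{E}_T^i Z_T^0]$ since $\Gamma_T = Z_T^0$. Consequently $Z \in \mathcal{Z}^s(p)$ if and only if the corresponding $\Gamma$ lies in $\widetilde{\mathcal{M}}(p)$, so the first claim of the lemma reduces to showing $\mathcal{Z}^s(p) \neq \emptyset$ if and only if $p \in \mathcal{P}$.

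The ``only if'' direction is immediate from Lemma~\ref{equ2}: if $Z \in \mathcal{Z}^s(p)$, then for every $(x, q, z) \in \mathcal{L}$ the defining inequality specializes to $x + q \cdot (p^1,\dots,p^N) - z p^0 \geq 0$, i.e.\ $x \cdot 1 + (-z, q) \cdot p \geq 0$, placing $(1, p)$ in the relative-interior set defining $\mathcal{R}$ and hence $p \in \mathcal{P}$. For the ``if'' direction I would employ a bipolar argument on the convex set $\mathcal{P}_0 \triangleq \Big\{\Big(\mathbb{E}\big[\int_0^T e^{\int_0^t(\delta_v-\alpha_v)dv}Z_t^0 dt\big],\, \mathbb{E}[\mathcal{E}_T^1 Z_T^0],\dots, \mathbb{E}[\mathcal{E}_T^N Z_T^0]\Big) : Z \in \mathcal{Z}^s\Big\} \subset \mathbb{R}^{N+1}$. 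By Lemma~\ref{equ2}, $\mathcal{L}$ is precisely the interior of the polar cone of $\mathcal{P}_0$ under the pairing $(x,q,z) \cdot (p^0,\dots,p^N) = x \cdot 1 + (-z,q) \cdot p$, so $\mathcal{R}$, being the relative interior of the polar of $\mathcal{L}$, coincides with the relative interior of the bipolar. Convexity of $\mathcal{Z}^s$ yields convexity of $\mathcal{P}_0$, and the bipolar theorem identifies the $y = 1$ slice of this bipolar with the closure of $\mathcal{P}_0$; the relative-interior restriction in $\mathcal{R}$ then forces $\mathcal{P} \subseteq \mathcal{P}_0$, producing a representing $Z \in \mathcal{Z}^s(p)$ for each $p \in \mathcal{P}$.

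The identity $\bigcup_{p\in\mathcal{P}} \widetilde{\mathcal{M}}(p) = \widetilde{\mathcal{M}}$ is then immediate: any $\Gamma \in \widetilde{\mathcal{M}}$ corresponds to some $Z \in \mathcal{Z}^s$ with finite moment vector $p$ (finiteness guaranteed by Assumption~\ref{endowassum} together with~(\ref{ass3.1})), so $Z \in \mathcal{Z}^s(p)$, $\Gamma \in \widetilde{\mathcal{M}}(p)$, and $p \in \mathcal{P}$ by the first part, while the reverse inclusion is tautological. I expect the bipolar step in the ``if'' direction to be the main obstacle: it rests on the closedness of the moment image $\mathcal{P}_0$ in $\mathbb{R}^{N+1}$, which I would establish via a Koml\'os-type passage to convex combinations in $\mathcal{Z}^s$ combined with the uniform integrability supplied by Assumption~\ref{endowassum}, in the same spirit as the relatively Fatou closedness argument already developed for Lemma~\ref{fcl}.
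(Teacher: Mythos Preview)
Your ``only if'' direction contains a genuine gap. Showing that $x + (-z,q)\cdot p \geq 0$ for every $(x,q,z)\in\mathcal{L}$ only places $(1,p)$ in the closed polar set $\{(y,r):xy+(-z,q)\cdot r\geq 0\ \text{for all}\ (x,q,z)\in\mathcal{L}\}$, \emph{not} in its relative interior $\mathcal{R}$. Thus you have not established $p\in\mathcal{P}$. The paper's proof of this direction is not ``immediate'' at all: it fixes an arbitrary $(x,q,z)\in\text{cl}\,\mathcal{L}$ and produces a $\tilde{c}\in\widetilde{\mathcal{A}}(x,q,z)$ with $\bar{\mathbb{P}}[\tilde{c}>0]>0$, which yields the \emph{strict} inequality $x+(-z,q)\cdot p>0$ and hence relative-interior membership. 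The existence of such a nontrivial $\tilde{c}$ is obtained by contradiction, using part~(i) of Assumption~\ref{asssss} (non-replicability of $-z\int_0^T e^{\int_0^t(\delta_v-\alpha_v)dv}dt+q\cdot\mathcal{E}_T$): if every admissible $\tilde{c}$ were identically zero, one would find a $\hat{Z}\in\mathcal{Z}^s$ attaining the infimum of $\mathbb{E}[\langle(\Phi,\bar{\mathbf{0}}),Z_T\rangle]$, contradicting the strict price bounds that non-replicability guarantees. You never invoke this assumption, and without it the conclusion fails.

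For the ``if'' direction your bipolar strategy is essentially what the paper does (it defers to Lemma~8 of \cite{kram04}), but your stated concern about closedness of $\mathcal{P}_0$ is misplaced: since $\mathcal{P}_0$ is convex, $\text{ri}(\overline{\mathcal{P}_0})=\text{ri}(\mathcal{P}_0)\subseteq\mathcal{P}_0$, so once you identify $\mathcal{P}$ with the relative interior of the bipolar slice, the inclusion $\mathcal{P}\subseteq\mathcal{P}_0$ follows without any closedness of $\mathcal{P}_0$ and without Koml\'os-type arguments.
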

\begin{proof}
Define the set $\mathcal{P}'\triangleq \{p\in\mathbb{R}^{N+1}: \widetilde{\mathcal{M}}(p)\neq\emptyset\}$. It is sufficient to verify that $\mathcal{P}=\mathcal{P}'$.

Following the proof of Lemma $8$ of \cite{kram04}, it is easy to show the direction $\mathcal{P}\subseteq\mathcal{P}'$.

For the other direction, let $p\in\mathcal{P}'$, $(x,q,z)\in\text{cl}\mathcal{L}$ and $\Gamma\in\widetilde{\mathcal{M}}(p)$. We claim the existence of $\tilde{c}\in\widetilde{\mathcal{A}}(x,q,z)$ such that $\bar{\mathbb{P}}[\tilde{c}>0]>0$. It then follows that
\begin{equation}
0<\mathbb{E}\Big[\int_0^T\tilde{c}_t\Gamma_tdt\Big]\leq x+(-z,q)\cdot p.\nonumber
\end{equation}
Since $(x,q,z)$ is chosen arbitrarily from $\text{cl}\mathcal{L}$, we obtain that $p\in\mathcal{P}$.

We now proceed to show that the claim holds. Choose any $(x,q,z)\in\text{cl}\mathcal{L}$ and denote the random variable $\Phi\triangleq -z\int_0^Te^{\int_0^t(\delta_v-\alpha_v)dv}dt+q\cdot\mathcal{E}_T$. Lemma $\ref{equ2}$ leads to
\begin{equation}
 x+\mathbb{E}[\langle (\Phi, \bar{\mathbf{0}}),Z_T\rangle ]\geq 0,\ \ \forall Z\in\mathcal{Z}^s,\nonumber
\end{equation}
which yields that
\begin{equation}
x+\inf_{Z\in\mathcal{Z}^s}\mathbb{E}[\langle (\Phi, \bar{\mathbf{0}}),Z_T\rangle ]\geq 0.\nonumber
\end{equation}
By the definition of $\widetilde{\mathcal{A}}(x,q,z)$ and the proof of Lemma $\ref{equivA}$, if all elements $\tilde{c}\in\widetilde{\mathcal{A}}(x,q,z)$ satisfy $\tilde{c}\equiv 0$,  we can deduce the existence of one $\hat{Z}\in\mathcal{Z}^s$ such that
\begin{equation}
x+\mathbb{E}[\langle (\Phi, \bar{\mathbf{0}}), \hat{Z}_T\rangle ]=0.\nonumber
\end{equation}
It implies that $\mathbb{E}[\langle (\Phi, \bar{\mathbf{0}}), \hat{Z}_T\rangle ]=\inf_{Z\in\mathcal{Z}^s}\mathbb{E}[\langle (\Phi, \bar{\mathbf{0}}),Z_T\rangle ]$. However, following the proof of Theorem $2.11$ of \cite{MR2113724} and part $(i)$ of Assumption $\ref{asssss}$, we can derive that for any $Z\in\mathcal{Z}^s$
\begin{equation}
\inf_{Z\in\mathcal{Z}^s}\mathbb{E}[\langle (\Phi, \bar{\mathbf{0}}),Z_T\rangle ]<\mathbb{E}[\langle (\Phi, \bar{\mathbf{0}}),Z_T\rangle ]<\sup_{Z\in\mathcal{Z}^s}\mathbb{E}[\langle (\Phi, \bar{\mathbf{0}}),Z_T\rangle ],\nonumber
\end{equation}
which is a contradiction.

\end{proof}

\begin{lemma}\label{lem7}
Let $p\in\mathcal{P}$, we have $\widetilde{\mathcal{M}}(p)\subseteq \widetilde{\mathcal{Y}}(1,p)$.
\end{lemma}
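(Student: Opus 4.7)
The plan is to exploit the fact that $\widetilde{\mathcal{M}}(p)\subseteq\widetilde{\mathcal{M}}$, so that the defining inequality of $\widetilde{\mathcal{A}}(x,q,z)$ may be tested against any $\Gamma\in\widetilde{\mathcal{M}}(p)$. The proof will then reduce to identifying the two scalar quantities on the right-hand side with $-z p^0$ and $q\cdot p_{1:N}$ respectively, using the defining properties of $\mathcal{Z}^s(p)$.

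First, I would fix an arbitrary $\Gamma\in\widetilde{\mathcal{M}}(p)$, together with the associated $Z\in\mathcal{Z}^s(p)$ so that $\Gamma_t=Z_t^0+\delta_t\mathbb{E}[\int_t^T e^{\int_t^s(\delta_v-\alpha_v)dv}Z_s^0\,ds\mid\mathcal{F}_t]$. Since $\Gamma\in\widetilde{\mathcal{M}}$, for every $(x,q,z)\in\mathcal{L}$ and every $\tilde{c}\in\widetilde{\mathcal{A}}(x,q,z)$ the defining inequality of $\widetilde{\mathcal{A}}(x,q,z)$ yields
\begin{equation}
\mathbb{E}\Big[\int_0^T\tilde{c}_t\Gamma_t\,dt\Big]\le x-z\,\mathbb{E}\Big[\int_0^T\tilde{w}_t\Gamma_t\,dt\Big]+\mathbb{E}[q\cdot\mathcal{E}_T\Gamma_T].\nonumber
\end{equation}

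Second, I would identify the two expectations on the right. Evaluating the defining representation of $\Gamma$ at $t=T$ collapses the conditional expectation to zero, so $\Gamma_T=Z_T^0$, and therefore $\mathbb{E}[q\cdot\mathcal{E}_T\Gamma_T]=\sum_{i=1}^N q^i\,\mathbb{E}[\mathcal{E}_T^i Z_T^0]=\sum_{i=1}^N q^i p^i$ by the definition of $\mathcal{Z}^s(p)$. Next, using Fubini–Tonelli and the tower property exactly as in the computation inside the proof of Lemma~\ref{equivA}, one has the identity $\mathbb{E}[\int_0^T\tilde{w}_t\Gamma_t\,dt]=\mathbb{E}[\int_0^T e^{\int_0^t(\delta_v-\alpha_v)dv}Z_t^0\,dt]$, and the right-hand side equals $p^0$ by the definition of $\mathcal{Z}^s(p)$. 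Condition~(ii) of Assumption~\ref{asssss} guarantees finiteness of this quantity so that all the manipulations are legitimate.

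Substituting these two identities into the displayed inequality gives
\begin{equation}
\mathbb{E}\Big[\int_0^T\tilde{c}_t\Gamma_t\,dt\Big]\le x\cdot 1+(-z,q)\cdot p,\nonumber
\end{equation}
for every $(x,q,z)\in\mathcal{L}$ and every $\tilde{c}\in\widetilde{\mathcal{A}}(x,q,z)$, which is precisely the condition for $\Gamma\in\widetilde{\mathcal{Y}}(1,p)$. There is no genuine obstacle here: the statement is essentially a direct unfolding of the definitions, and the only point that requires a moment's care is making sure the two expectations $\mathbb{E}[\int_0^T\tilde{w}_t\Gamma_t\,dt]$ and $\mathbb{E}[q\cdot\mathcal{E}_T\Gamma_T]$ take the prescribed values $p^0$ and $q\cdot p_{1:N}$, which follows from $\Gamma_T=Z_T^0$ and the Fubini identity already established.
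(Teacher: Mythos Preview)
Your proof is correct and is essentially the same as the paper's: the paper's proof is a one-line appeal to the consumption budget constraint (Proposition~\ref{bcccc}) and the definition of $\widetilde{\mathcal{M}}(p)$, and your argument simply unfolds this, using the identities $\mathbb{E}[\int_0^T\tilde{w}_t\Gamma_t\,dt]=p^0$ and $\mathbb{E}[\mathcal{E}_T^i\Gamma_T]=p^i$ (which the paper records right after the definition of $\widetilde{\mathcal{M}}(p)$) together with the defining inequality of $\widetilde{\mathcal{A}}(x,q,z)$.
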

\begin{proof}
The result follows directly by Lemma $\ref{bcccc}$ and the definition of $\widetilde{\mathcal{M}}(p)$.
\end{proof}

\begin{lemma}\label{lem8}
Under all assumptions of Theorem $\ref{mainth}$, for any $(x,q,z)\in\mathcal{L}$, a nonnegative random variable $\tilde{c}\in\mathbb{L}_+^0(\Omega\times[0,T])$ belongs to $\widetilde{\mathcal{A}}(x,q,z)$ if and only if
\begin{equation}\label{ineqg}
\mathbb{E}\Big[\int_0^T \tilde{c}_t\Gamma_tdt\Big]  \leq x+(-z,q)\cdot p,\ \ \ \forall p\in\mathcal{P}\ \text{and}\ \Gamma\in\widetilde{\mathcal{M}}(p).
\end{equation}
\end{lemma}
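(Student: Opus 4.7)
The plan is to observe that the only gap between the stated condition and the defining inequality of $\widetilde{\mathcal{A}}(x,q,z)$ is a reparametrisation: the right-hand side $x-z\mathbb{E}[\int_0^T\tilde{w}_t\Gamma_tdt]+\mathbb{E}[q\cdot\mathcal{E}_T\Gamma_T]$, which depends on $\Gamma$, can be replaced by the constant $x+(-z,q)\cdot p$ once $\Gamma$ is restricted to $\widetilde{\mathcal{M}}(p)$, and Lemma~\ref{lem6} guarantees that as $p$ ranges over $\mathcal{P}$ these slices exhaust $\widetilde{\mathcal{M}}$.

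Concretely, the key identity is that for every $\Gamma\in\widetilde{\mathcal{M}}(p)$ coming from some $Z\in\mathcal{Z}^s(p)$ one has
\begin{equation*}
\mathbb{E}\Big[\int_0^T\tilde{w}_t\Gamma_tdt\Big]=p^0,\qquad \mathbb{E}[q\cdot\mathcal{E}_T\Gamma_T]=\sum_{i=1}^N q^i p^i.
\end{equation*}
The first equality follows from the Fubini--Tonelli computation appearing in the proof of Lemma~\ref{equivA}, which shows that $\mathbb{E}[\int_0^T\tilde{w}_t\Gamma_tdt]=\mathbb{E}[\int_0^T e^{\int_0^t(\delta_v-\alpha_v)dv}Z_t^0dt]$, and this in turn equals $p^0$ by the definition of $\mathcal{Z}^s(p)$. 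The second equality is immediate since $\Gamma_T=Z_T^0$ and $\mathbb{E}[\mathcal{E}_T^iZ_T^0]=p^i$ by definition of $\mathcal{Z}^s(p)$.

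With this in hand, the proof splits into the two obvious directions. For the forward implication, assume $\tilde{c}\in\widetilde{\mathcal{A}}(x,q,z)$, fix $p\in\mathcal{P}$ and $\Gamma\in\widetilde{\mathcal{M}}(p)$; since $\widetilde{\mathcal{M}}(p)\subseteq\widetilde{\mathcal{M}}$ the defining inequality of $\widetilde{\mathcal{A}}(x,q,z)$ applies, and substituting the two identities above converts the bound into $\mathbb{E}[\int_0^T\tilde{c}_t\Gamma_tdt]\le x+(-z,q)\cdot p$. For the converse, assume the condition $(\ref{ineqg})$ and pick an arbitrary $\Gamma\in\widetilde{\mathcal{M}}$; by Lemma~\ref{lem6} there is a (unique) $p\in\mathcal{P}$ with $\Gamma\in\widetilde{\mathcal{M}}(p)$, and then the same two identities turn the assumed bound $x+(-z,q)\cdot p$ back into $x-z\mathbb{E}[\int_0^T\tilde{w}_t\Gamma_tdt]+\mathbb{E}[q\cdot\mathcal{E}_T\Gamma_T]$, which is exactly the defining inequality of $\widetilde{\mathcal{A}}(x,q,z)$ at this $\Gamma$.

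There is no real obstacle here beyond verifying the two constancy identities and carefully invoking Lemma~\ref{lem6} so that no element of $\widetilde{\mathcal{M}}$ is overlooked when rewriting the quantifier over $\Gamma$ as a double quantifier over $p\in\mathcal{P}$ and $\Gamma\in\widetilde{\mathcal{M}}(p)$. The structural content of the lemma is precisely that the partition $\widetilde{\mathcal{M}}=\bigcup_{p\in\mathcal{P}}\widetilde{\mathcal{M}}(p)$ separates the dependence of the budget constraint on $\Gamma$ (through $\mathbb{E}[\int_0^T\tilde{c}_t\Gamma_tdt]$) from its dependence on the prices of the shadow and terminal endowments (which are now packaged into the finite-dimensional parameter $p$), a reformulation that will be convenient in the subsequent duality arguments.
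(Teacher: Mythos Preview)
Your proof is correct and follows essentially the same route as the paper: both rely on Lemma~\ref{lem6} to write $\widetilde{\mathcal{M}}=\bigcup_{p\in\mathcal{P}}\widetilde{\mathcal{M}}(p)$ and then use that on each slice $\widetilde{\mathcal{M}}(p)$ the quantities $\mathbb{E}[\int_0^T\tilde{w}_t\Gamma_t\,dt]$ and $\mathbb{E}[\mathcal{E}_T^i\Gamma_T]$ are constant (equal to $p^0$ and $p^i$), so the defining inequality of $\widetilde{\mathcal{A}}(x,q,z)$ becomes $x+(-z,q)\cdot p$. Your write-up is in fact slightly more explicit than the paper's, which phrases the converse as a double supremum but is logically identical.
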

\begin{proof}
If $\tilde{c}\in\widetilde{\mathcal{A}}(x,q,z)$, the inequality $(\ref{ineqg})$ follows directly by Lemma $\ref{bcccc}$ and Lemma $\ref{lem6}$. On the other hand, for any $\tilde{c}\in\mathbb{L}_+^0(\Omega\times[0,T])$ such that $(\ref{ineqg})$ holds, we have
\begin{equation}
\begin{split}
&\sup_{\Gamma\in\widetilde{\mathcal{M}}}\mathbb{E}\Big[\int_0^T\tilde{c}_t\Gamma_tdt+z\int_0^Te^{\int_0^t(-\alpha_v)dv}\Gamma_tdt-q\cdot\mathcal{E}_T\Gamma_T\Big]\\
=&\sup_{p\in\mathcal{P}}\sup_{\Gamma\in\widetilde{\mathcal{M}}(p)}\mathbb{E}\Big[\int_0^T\tilde{c}_t\Gamma_tdt+z\int_0^Te^{\int_0^t(-\alpha_v)dv}\Gamma_tdt-q\cdot\mathcal{E}_T\Gamma_T\Big]\\
=&\sup_{p\in\mathcal{P}}\sup_{\Gamma\in\widetilde{\mathcal{M}}(p)}\Big(\mathbb{E}\Big[\int_0^T\tilde{c}_t\Gamma_tdt\Big]+(z,-q)\cdot p\Big)\leq x .\nonumber
\end{split}
\end{equation}
It is a consequence of the definition of $\widetilde{\mathcal{A}}(x,q,z)$ that $\tilde{c}\in\widetilde{\mathcal{A}}(x,q,z)$.
\end{proof}

\begin{proposition}\label{lpro}
Under all assumptions of Theorem $\ref{mainth}$, the families $(\widetilde{\mathcal{A}}(x,q,z))_{(x,q,z)\in\mathcal{L}}$ and $(\widetilde{\mathcal{Y}}(y,r))_{(y,r)\in\mathcal{R}}$ have the following properties:
\begin{itemize}
\item[(1)] For any $(x,q,z)\in\mathcal{L}$, the set $\widetilde{\mathcal{A}}(x,q,z)$ contains a strictly positive random variable in $\mathbb{L}_+^0(\Omega\times[0,T])$. A nonnegative random variable $\tilde{c}\in\mathbb{L}_+^0(\Omega\times[0,T])$ belongs to $\widetilde{\mathcal{A}}(x,q,z)$ if and only if
\begin{equation}\label{ineqpro1}
\mathbb{E}\Big[\int_0^T\tilde{c}_t\Gamma_tdt\Big]\leq xy+(-z,q)\cdot r,\ \ \forall (y,r)\in\mathcal{R}\ \text{and}\ \Gamma\in\widetilde{\mathcal{Y}}(y,r).
\end{equation}
\item[(2)] For any $(y,r)\in\mathcal{R}$, the set $\widetilde{\mathcal{Y}}(y,r)$ contains a strictly positive random variable in $\mathbb{L}_+^0(\Omega\times[0,T])$. A nonnegative random variable $\Gamma\in\mathbb{L}_+^0(\Omega\times[0,T])$ belongs to $\widetilde{\mathcal{Y}}(y,r)$ if and only if
\begin{equation}\label{ineqpro2}
\mathbb{E}\Big[\int_0^T\tilde{c}_t\Gamma_tdt\Big]\leq xy+(-z,q)\cdot r,\ \ \forall (x,q,z)\in\mathcal{L}\ \text{and}\ \tilde{c}\in\widetilde{\mathcal{A}}(x,q,z).
\end{equation}
\end{itemize}
\end{proposition}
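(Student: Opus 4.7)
My plan is to reduce the bipolarity between $\widetilde{\mathcal{A}}$ and $\widetilde{\mathcal{Y}}$ to the parametric polarity already assembled in Lemmas \ref{lem6}--\ref{lem8} between $\widetilde{\mathcal{A}}(x,q,z)$ and the explicit families $\{\widetilde{\mathcal{M}}(p)\}_{p\in\mathcal{P}}$. The two mediating facts I will use repeatedly are that $\mathcal{P}$ coincides with the slice $\{p:(1,p)\in\mathcal{R}\}$ of the cone $\mathcal{R}$, and that $\widetilde{\mathcal{M}}(p)\subseteq\widetilde{\mathcal{Y}}(1,p)$ by Lemma \ref{lem7}.

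For the two characterizations, the forward direction of (\ref{ineqpro1}) together with both directions of (\ref{ineqpro2}) are immediate from the very definition of $\widetilde{\mathcal{Y}}(y,r)$, which is set up as precisely the collection of nonnegative optional processes satisfying (\ref{ineqpro2}). The only nontrivial direction is the reverse of (\ref{ineqpro1}): given a nonnegative $\tilde{c}$ satisfying (\ref{ineqpro1}), I specialize to $(y,r)=(1,p)$ with $p\in\mathcal{P}$ and to $\Gamma\in\widetilde{\mathcal{M}}(p)\subseteq\widetilde{\mathcal{Y}}(1,p)$, which extracts the inequality $\mathbb{E}[\int_0^T\tilde{c}_t\Gamma_t\,dt]\leq x+(-z,q)\cdot p$ for every $p\in\mathcal{P}$ and $\Gamma\in\widetilde{\mathcal{M}}(p)$; Lemma \ref{lem8} then places $\tilde{c}$ in $\widetilde{\mathcal{A}}(x,q,z)$.

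For the strict positivity in (1), openness of $\mathcal{L}$ yields some $\eta>0$ with $(x-\eta,q,z)\in\mathcal{L}$, which by Lemma \ref{equ2} and the identifications $p^0=\mathbb{E}[\int_0^T e^{\int_0^t(\delta_v-\alpha_v)dv}Z^0_t\,dt]$ and $p^i=\mathbb{E}[\mathcal{E}_T^iZ_T^0]$ translates into the uniform gap $x+(-z,q)\cdot p\geq\eta$ for every $p\in\mathcal{P}$. Assumption \ref{asssss}(ii) gives $P^{\ast}:=\sup_{p\in\mathcal{P}}p^0<\infty$, so the strictly positive process $\tilde{c}_t:=(\eta/(2P^{\ast}))\tilde{w}_t$ (positive since $\tilde{w}_t=e^{-\int_0^t\alpha_v\,dv}>0$) satisfies $\mathbb{E}[\int_0^T\tilde{c}_t\Gamma_t\,dt]=(\eta/(2P^{\ast}))p^0\leq\eta/2\leq x+(-z,q)\cdot p$ by the computation in the proof of Lemma \ref{equivA}, and therefore lies in $\widetilde{\mathcal{A}}(x,q,z)$ by Lemma \ref{lem8}. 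For the strict positivity in (2), I exploit the cone structure of $\mathcal{R}$: a generic $(y,r)\in\mathcal{R}$ has $y>0$, and writing $(y,r)=y\cdot(1,r/y)$ with $r/y\in\mathcal{P}$, Lemma \ref{lem6} supplies some $\Gamma\in\widetilde{\mathcal{M}}(r/y)$ which is strictly positive because built from the strictly positive SCPS component $Z^0$; Lemma \ref{lem7} then places $\Gamma$ in $\widetilde{\mathcal{Y}}(1,r/y)$, and the positive scaling $y\Gamma\in\widetilde{\mathcal{Y}}(y,r)$ is the desired strictly positive dual element.

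The main technical obstacle will be pinning down the cone geometry that underlies the two mediating facts, namely verifying that $(1,p)\in\mathcal{R}$ for every $p\in\mathcal{P}$ and excluding degenerate boundary points $(0,r)\in\mathcal{R}$; the latter are ruled out because $\mathcal{L}$ contains an unbounded $x$-direction, which forces $y>0$ throughout its polar cone and hence throughout its relative interior $\mathcal{R}$. Once this cone structure is in place, the remainder is a mechanical invocation of Lemmas \ref{lem6}, \ref{lem7}, and \ref{lem8}.
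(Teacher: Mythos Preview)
Your proposal is correct and follows essentially the same route as the paper: the characterizations are reduced to Lemmas \ref{lem6}--\ref{lem8} via the inclusion $\widetilde{\mathcal{M}}(p)\subseteq\widetilde{\mathcal{Y}}(1,p)$, and strict positivity in (2) is obtained by scaling a $\Gamma\in\widetilde{\mathcal{M}}(p)$ built from a strictly positive $Z^0$. The only cosmetic difference is in (1): the paper picks any $\tilde{c}\in\widetilde{\mathcal{A}}(x-\lambda,q,z)$ and adds $\rho_t=(\lambda/\beta)\tilde{w}_t$ to it, whereas you bypass that step and verify directly via Lemma \ref{lem8} that $(\eta/(2P^{\ast}))\tilde{w}_t$ already lies in $\widetilde{\mathcal{A}}(x,q,z)$; both constructions exploit the same uniform gap and the same finite bound from Assumption \ref{asssss}(ii). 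Your explicit treatment of the cone geometry (why $y>0$ on $\mathcal{R}$, why $(1,p)\in\mathcal{R}$ for $p\in\mathcal{P}$) fills in details the paper leaves implicit in its definition of $\mathcal{P}$.
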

\begin{proof}
We first show that assertion $(1)$ holds. To this end, we choose $(x,q,z)\in\mathcal{L}$. Since $\mathcal{L}$ is an open set, there is a constant $\lambda>0$ such that $(x-\lambda, q,z)\in\mathcal{L}$. Let $\tilde{c}\in\widetilde{\mathcal{A}}(x-\lambda,q,z)$, since $\tilde{w}_t=e^{-\int_0^t\alpha_vdv}>0$ for $t\in[0,T]$, for any $\Gamma\in\widetilde{\mathcal{M}}$, we obtain that
\begin{equation}
\mathbb{E}\Big[\int_0^T\tilde{c}_t\Gamma_tdt\Big]\leq x-\lambda-z\mathbb{E}\Big[\int_0^T\tilde{w}_t\Gamma_tdt\Big]+\mathbb{E}[q\cdot\mathcal{E}_T\Gamma_T].\nonumber
\end{equation}

Under Assumption $\ref{asssss}$, let the constant $\beta\triangleq \sup_{Z\in\mathcal{Z}^s}\mathbb{E}\Big[\int_{0}^{T}e^{\int_0^t(\delta_v-\alpha_v)dv}Z_t^0dt\Big]<\infty$ and define the process $\rho_t\triangleq \frac{\lambda}{\beta}\tilde{w}_t>0$ for all $t\in[0,T]$. For all $\Gamma\in\widetilde{\mathcal{M}}$, it follows that
\begin{equation}
\begin{split}
\mathbb{E}\Big[\int_0^T\rho_t\Gamma_tdt\Big]\leq \mathbb{E}\Big[\int_0^T(\tilde{c}_t+\rho_t)\Gamma_tdt\Big]&\leq x-\lambda-z\mathbb{E}\Big[\int_0^T\tilde{w}_t\Gamma_tdt\Big]+\mathbb{E}[q\cdot\mathcal{E}_T\Gamma_T]+\frac{\lambda}{\beta}\mathbb{E}\Big[\int_0^T\tilde{w}_t\Gamma_tdt\Big]\\
&\leq x-\lambda-z\mathbb{E}\Big[\int_0^T\tilde{w}_t\Gamma_tdt\Big]+\mathbb{E}[q\cdot\mathcal{E}_T\Gamma_T]+\lambda\\
&\leq x-z\mathbb{E}\Big[\int_0^T\tilde{w}_t\Gamma_tdt\Big]+\mathbb{E}[q\cdot\mathcal{E}_T\Gamma_T].\nonumber
\end{split}
\end{equation}
The existence of a strictly positive random variable $\rho\in\widetilde{\mathcal{A}}(x,q,z)$ is a consequence of the definition of $\widetilde{\mathcal{A}}(x,q,z)$.

Assume that $(\ref{ineqpro1})$ holds for some $\tilde{c}\in\mathbb{L}_+^0$. By Lemma $\ref{lem7}$, we have $\Gamma\in\widetilde{\mathcal{M}}(p)\subset\widetilde{\mathcal{Y}}(1,p)$ for all $p\in\mathcal{P}$. Therefore, $(\ref{ineqg})$ holds, and by Lemma $\ref{lem8}$, we obtain that $\tilde{c}\in\widetilde{\mathcal{A}}(x,q,z)$. Conversely, let $\tilde{c}\in\widetilde{\mathcal{A}}(x,q,z)$, the inequality $(\ref{ineqpro1})$ follows by the definition of $\widetilde{\mathcal{Y}}(y,r)$, $(y,r)\in\mathcal{R}$.

For the proof of the assertion $(ii)$, as $k\widetilde{\mathcal{Y}}(y,r)=\widetilde{\mathcal{Y}}(ky,kr)$ for all $k>0$ and $(y,r)\in\mathcal{R}$, it is sufficient to consider the case $(y,r)=(1,p)$ for some $p\in\mathcal{P}$. The existence of a strictly positive $Z\in\mathcal{Z}^s(p)$ implies the existence of a strictly positive $\Gamma\in\widetilde{\mathcal{M}}(p)$. Lemma $\ref{lem7}$ again implies that $\Gamma\in\widetilde{\mathcal{Y}}(1,p)$ for $p\in\mathcal{P}$.

The second part follows directly from the definition of $\widetilde{\mathcal{Y}}(y,r)$.
\end{proof}

\begin{proof}[Proof of Theorem $\ref{mainth}$]\ \\
Once we get the abstract bipolar results in Proposition $\ref{lpro}$, under Assumptions $\ref{assumZ}$, $\ref{endowassum}$, $\ref{notrep}$, $\ref{ASSUV}$ and $\ref{asssss}$, the proof closely follows the arguments of Theorem $4.1$ and $4.2$ of \cite{Yu1}.
\end{proof}

\begin{proof}[Proof of Corollary $\ref{cormaincor}$]
If the process
\begin{equation}
Y^{\ast}_t(y,r)=\Gamma^{\ast}_t(y,r)-\delta_t\mathbb{E}\Big[\int_t^T\Gamma^{\ast}_s(y,r)e^{\int_t^s(-\alpha_v)dv} ds\Big|\mathcal{F}_t\Big]\nonumber
\end{equation}
is a strictly positive martingale, by using Fubini Theorem and tower property, it is easy to verify that
\begin{equation}
\Gamma_t^{\ast}(y,r)=Y^{\ast}_t(y,r)+\delta_t\mathbb{E}\Big[\int_t^T e^{\int_t^s(\delta_v-\alpha_v)dv} Y_s^{\ast}(y,r)ds\Big|\mathcal{F}_t\Big],\ \ \forall t\in[0,T].\nonumber
\end{equation}
Moreover, since both $\delta$ and $\alpha$ are constants, by changing the order of the conditional expectation and the integral, we get
\begin{equation}
\Gamma_t^{\ast}(y,r)=Y_t^{\ast}(y,r)\Big(1+\delta\int_t^Te^{(\delta-\alpha)(s-t)}ds\Big)=\left\{
\begin{array}{rl}
Y_t^{\ast}(y,r)\Big(\frac{\delta}{\delta-\alpha}e^{(\delta-\alpha)(T-t)}-\frac{\alpha}{\delta-\alpha}\Big),& \delta\neq\alpha,\\
Y_t^{\ast}(y,r)\Big(1+\delta(T-t)\Big),& \delta=\alpha.
\end{array}\right.\nonumber
\end{equation}
For the Logarithmic Utility function $U(t, x)=\log x$, we have $I(t,x)=\frac{1}{x}$ and hence part $(v)$ of Theorem $\ref{mainth}$ implies that the optimal consumption strategy is given explicitly by $(\ref{exlogconsum})$ and the corresponding optimal habit formation process is given explicitly by $(\ref{exlogconhab})$. Part $(iv)$ of Theorem $\ref{mainth}$ also implies that
\begin{equation}
\mathbb{E}\Big[\int_0^T\tilde{c}_t^{\ast}(x,q,z)\Gamma_t^{\ast}(y,r)dt\Big]=\mathbb{E}\Big[\int_0^T1dt\Big]=T=xy+(-z,q)\cdot r.\nonumber
\end{equation}

In addition, the assertion $(i)$ of Corollary $\ref{cormaincor}$ is an immediate consequence of the explicit formula $(\ref{exlogconhab})$. For the assertion $(ii)$, if $\alpha=0$, we get that
\begin{equation}
c_t^{\ast}(x,q,z)=ze^{\delta t}+\frac{1}{Y_t^{\ast}(y,r)e^{\delta(T-t)}}+\delta e^{\delta(t-T)} \int_0^t\frac{1}{Y_s^{\ast}(y,r)}ds.\nonumber
\end{equation}
Since $Y^{\ast}(y,r)$ is strictly positive, it is clear that if the discounting factor $\delta$ or the time horizon $T$ is sufficiently large, the second term and third term on the right hand side will be sufficiently small and $ze^{\delta t}$ will be the leading term. It thereby follows that $c_t^{\ast}(x,q,z)\approx ze^{\delta t}$ and it is an increasing process in terms of time $t$. Equivalently, we can conclude that the optimal consumption process satisfies the ratcheting constraint.

For assertion $(iii)$, if $\delta-\alpha\geq 0$, it is clear that
\begin{equation}
c_t^{\ast}(x,q,z)Y_t^{\ast}(y,r)=\left\{
\begin{array}{rl}
ze^{(\delta-\alpha)t}Y_t^{\ast}(y,r)+\frac{\delta-\alpha}{(\delta e^{(\delta-\alpha)(T-t)}-\alpha)}+\int_0^t\frac{(\delta-\alpha)Y_t^{\ast}(y,r)}{Y_s^{\ast}(y,r)}\frac{\delta e^{(\delta-\alpha)(t-s)}}{(\delta e^{(\delta-\alpha)(T-s)}-\alpha)}ds,& \delta>\alpha,\\
zY_t^{\ast}(y,r)+\frac{1}{1+\delta(T-t)}+\frac{1}{Y_t^{\ast}(y,r)}\int_0^t\frac{1}{Y_s^{\ast}(y,r)}\frac{\delta}{1+\delta(T-s)}ds,& \delta=\alpha.\nonumber
\end{array}\right.
\end{equation}
As $Y_t^{\ast}(y,r)$ is a martingale, it is easy to conclude that the product $c_t^{\ast}(x,q,z)Y_t^{y,r}$ is a submartingale. If we have $Y_0^{\ast}(y,r)=1$, by defining the equivalent probability measure $\frac{d\mathbb{Q}^{\ast}}{d\mathbb{P}}=Y_T^{\ast}=\Gamma_T^{\ast}(y,r)$, the optimal consumption process $c^{\ast}(x,q,z)$ is a submartingale under the measure $\mathbb{Q}^{\ast}$. In the special case that $\delta=\alpha=0$, i.e., the habit formation process keeps the constant initial habit $z$, the process $c^{\ast}(x,q,z)Y^{\ast}(y,r)$ is a martingale. Again, if $Y_0^{\ast}(y,r)=1$, the optimal consumption process is a martingale under the probability measure $\mathbb{Q}^{\ast}$. Moreover, by the explicit formula $(\ref{exlogconsum})$, if we have $Y_0^{\ast}(y,r)=1$, it is clear that the initial consumption amount is given by $(\ref{initialconm})$. Therefore, assertion $(iv)$ holds.
\end{proof}

\subsection{Proofs of Main Results in Section $\ref{section4}$}
\begin{proof}[Proof of Proposition $\ref{isotransc}$]
If the process $\delta_t-\alpha_t$ is a deterministic function of time $t$, the definition of the auxiliary dual set $\widetilde{\mathcal{M}}$ given in $(\ref{initialdualM})$ can be significantly simplified as
\begin{equation}
\widetilde{\mathcal{M}}=\Big\{\Gamma\in\mathbb{L}_+^0:\Gamma_t=Z_t^0G_t,\ \forall t\in[0,T],\ \ Z\in\mathcal{Z}^s\Big\},\nonumber
\end{equation}
where the process $(G_t)_{t\in[0,T]}$ is defined in $(\ref{ohG})$.

Equivalently, Lemma $\ref{equivA}$ can be rewritten in terms of SCPS $Z\in\mathcal{Z}^s$ by
\begin{equation}
\bar{\mathcal{A}}(x,q,z)=\Big\{\tilde{c}\in\mathbb{L}_+^0:\mathbb{E}\Big[\int_0^T\tilde{c}_tG_tZ_t^0dt\Big]\leq x-z\mathbb{E}\Big[\int_0^T\tilde{w}_tG_tZ_t^0dt\Big]+\mathbb{E}[q\cdot\mathcal{E}_TG_TZ_T^0],\ \forall Z\in\mathcal{Z}^s\Big\}.\nonumber
\end{equation}
Recall that $\mathcal{Z}^s$ is the set of all SCPS for the underlying asset $(S_t)_{t\in[0,T]}$ with transaction costs $\Lambda$ and $\bar{\mathcal{A}}(x,q,z)$ is the set of all auxiliary processes $(\tilde{c}_t)_{t\in[0,T]}$ that we define the auxiliary utility maximization problem without habit formation. Proposition $\ref{bcccc}$, i.e., the consumption budget constraint characterization, implies that the process $(\tilde{c}_tG_t)_{t\in[0,T]}$ is the financeable consumption process in the isomorphic market with the same underlying asset $(S_t)_{t\in[0,T]}$ and transaction costs $\Lambda$ and with the initial wealth $x$ and random endowments $N_T=-z\int_0^T\tilde{w}_tG_tdt+q\cdot\mathcal{E}_TG_T$. By the definition of $(G_t)_{t\in[0,T]}$, we notice that $G_T=1$ and therefore, $N_T$ can be simplified as $N_T=-z\int_0^T\tilde{w}_tG_tdt+q\cdot\mathcal{E}_T$. Denote $\hat{\mathcal{A}}(x,q,z)$ the set of all isomorphic consumption processes $\hat{c}_t=\tilde{c}_tG_t$, $t\in[0,T]$, where $\tilde{c}\in\bar{\mathcal{A}}(x,q,z)$, the auxiliary time-separable utility maximization problem becomes equivalently to
\begin{equation}
\hat{u}(x,q,z)=\sup_{\hat{c}\in\hat{\mathcal{A}}(x,q,z)}\mathbb{E}\Big[\int_0^TU(t,\frac{\hat{c}_t}{G_t})dt\Big],\nonumber
\end{equation}
where the external process $(G_t)_{t\in[0,T]}$ can be regarded as a discounting process or a num\'{e}raire process and the proof is complete.
\end{proof}

\begin{proof}[Proof of Corollary $\ref{exampleisomo1}$]
Let us consider the Logarithmic Utility function $U(t,x)=\log x$. Clearly, the utility maximization problem $(\ref{numeraireG})$ on $\hat{c}\in\hat{\mathcal{A}}(x,q,z)$ is equivalent to
\begin{equation}
\hat{u}(x,q,z)=\sup_{\hat{c}\in\hat{\mathcal{A}}(x,q,z)}\mathbb{E}\Big[\int_0^T\log(\hat{c}_t)dt\Big]-\mathbb{E}\Big[\int_0^T\log(G_t)dt\Big].\nonumber
\end{equation}
It is further equivalent to the standard utility maximization problem on the consumption in the isomorphic market model with random endowments $N_T$
\begin{equation}
\pi(x,q,z)=\sup_{\hat{c}\in\hat{\mathcal{A}}(x,q,z)}\mathbb{E}\Big[\int_0^T\log(\hat{c}_t)dt\Big].\nonumber
\end{equation}

\end{proof}

\begin{proof}[Proof of Corollary $\ref{exampleisomo2}$]
Suppose the external process $(G_t)_{t\in[0,T]}$ is a martingale with $G_0=1$,  Lemma $\ref{equivA}$ leads to
\begin{equation}
\begin{split}
\bar{\mathcal{A}}(x,q,z)=&\Big\{\tilde{c}\in\mathbb{L}_+^0:\mathbb{E}\Big[\int_0^T\tilde{c}_tG_tZ_t^0dt\Big]\leq x-z\mathbb{E}\Big[\int_0^T\tilde{w}_tG_tZ_t^0dt\Big]+\mathbb{E}[q\cdot\mathcal{E}_TG_TZ_T^0],\ \forall Z\in\mathcal{Z}^s\Big\}\\
=&\Big\{\tilde{c}\in\mathbb{L}_+^0:\mathbb{E}\Big[\int_0^T\tilde{c}_tZ_t^0dt G_T\Big]\leq x-z\mathbb{E}\Big[\int_0^T\tilde{w}_tZ_t^0dtG_T\Big]+\mathbb{E}[q\cdot\mathcal{E}_TZ_T^0 G_T],\ \forall Z\in\mathcal{Z}^s\Big\}\\
=&\Big\{\tilde{c}\in\mathbb{L}_+^0:\mathbb{E}^{\hat{\mathbb{P}}}\Big[\int_0^T\tilde{c}_tZ_t^0dt\Big]\leq x-z\mathbb{E}^{\hat{\mathbb{P}}}\Big[\int_0^T\tilde{w}_tZ_t^0dt\Big]+\mathbb{E}^{\hat{\mathbb{P}}}[q\cdot\mathcal{E}_TZ_T^0],\ \forall Z\in\mathcal{Z}^s\Big\},\nonumber
\nonumber
\end{split}
\end{equation}
where $\frac{d\hat{\mathbb{P}}}{\mathbb{P}}=G_T$. However, we know from the definition that $G_T=1$ and therefore, $\hat{\mathbb{P}}=\mathbb{P}$ which gives the simplified characterization of all auxiliary processes
\begin{equation}\label{ineqrealconsum}
\bar{\mathcal{A}}(x,q,z)=\Big\{\tilde{c}\in\mathbb{L}_+^0:\mathbb{E}\Big[\int_0^T\tilde{c}_tZ_t^0dt\Big]\leq x-z\mathbb{E}\Big[\int_0^T\tilde{w}_tZ_t^0dt\Big]+\mathbb{E}[q\cdot\mathcal{E}_TZ_T^0],\ \forall Z\in\mathcal{Z}^s\Big\}.
\end{equation}

It is worth noting that the num\'{e}raire process $(G_t)_{t\in[0,T]}$ disappears in the definition of the auxiliary set $\bar{\mathcal{A}}(x,q,z)$. Let us consider the isomorphic market model with stock price $(S_t)_{t\in[0,T]}$ and transaction costs $\Lambda$ and random endowments $R_T=-z\int_0^T\tilde{w}_tdt+q\cdot\mathcal{E}_T$. Proposition $\ref{bcccc}$ implies that each $(\tilde{c}_t)_{t\in[0,T]}$ is the $(x,R_T)$-financeable consumption process. Therefore, the auxiliary problem becomes a standard utility maximization on consumption without habit formation and we have the equivalence
\begin{equation}
u(x,q,z)=\sup_{\tilde{c}\in\bar{\mathcal{A}}(x,q,z)}\mathbb{E}\Big[\int_0^TU(t,\tilde{c}_t)dt\Big],\nonumber
\end{equation}
where $\bar{\mathcal{A}}(x,q,z)$ is the set of all $(x,R_T)$-financeable consumption processes in the market with the same asset process $(S_t)_{t\in[0,T]}$ and transaction costs $\Lambda$, which completes the proof.

\end{proof}
\ \\
\ \\
\textbf{Acknowledgement}:  The author thanks the associate editor and anonymous referees for their careful reviews and helpful suggestions concerning the presentation of this paper.
\ \\
\ \\
\bibliographystyle{plain}
\bibliography{ReferenceEndowment}

\end{document}